\newcommand{\classP}{\mathrm{P}}
\newcommand{\NP}{\mathrm{NP}}
\newcommand{\EXP}{\mathrm{EXP}}
\newcommand{\NEXP}{\mathrm{NEXP}}
\newcommand{\MIPstar}{\mathrm{MIP}^*}
\newcommand{\xorMIPstar}{{\oplus}\mathrm{MIP}^*}
\newcommand{\naPCP}{\mathrm{naPCP}}
\newcommand{\poly}{\mathrm{poly}}
\newcommand{\xor}{\oplus}
\newcommand{\vct}{\bm}
\DeclarePairedDelimiter{\abs}{\lvert}{\rvert}
\DeclarePairedDelimiter{\norm}{\lVert}{\rVert}
\DeclarePairedDelimiter{\bra}{\langle}{\rvert}
\DeclarePairedDelimiter{\ket}{\lvert}{\rangle}
\DeclarePairedDelimiter{\avg}{\langle}{\rangle}
\DeclareMathOperator{\tr}{tr}
\newcommand{\wcons}{w_{\mathrm{cons}}}
\newcommand{\wproof}{w_{\mathrm{sim}}}
\newcommand{\calH}{{\mathcal{H}}}
\newcommand{\calK}{{\mathcal{K}}}
\newcommand{\calM}{{\mathcal{M}}}
\newcommand{\calN}{{\mathcal{N}}}
\newcommand{\calS}{{\mathcal{S}}}
\newcommand{\RR}{{\mathbb{R}}}
\newcommand{\CC}{{\mathbb{C}}}
\newcommand{\NN}{\mathbb{N}}
\newcommand{\ZZ}{\mathbb{Z}}
\newcommand{\MS}{{\mathrm{MS}}}
\newcommand{\wc}{w_{\mathrm{c}}}
\newcommand{\wq}{w_{\mathrm{q}}}
\newcommand{\wQ}{w_{\mathrm{com}}}
\newcommand{\wns}{w_{\mathrm{ns}}}
\newcommand{\Pc}{P_{\mathrm{c}}}
\newcommand{\PQ}{P_{\mathrm{com}}}
\newcommand*{\function}[3]{{#1 \colon #2 \rightarrow #3}}
\theoremstyle{plain}
\newtheorem{theorem}{Theorem}
\newtheorem{lemma}[theorem]{Lemma}
\newtheorem{corollary}[theorem]{Corollary}
\newtheorem{claim}{Claim}
\theoremstyle{remark}
\newtheorem{remark}{Remark}
\renewcommand\@maketitle{%
  \newpage
  \null
  \vskip 2em%
  \begin{center}%
  \let \footnote \thanks
    {\Large\bfseries \@title \par}%
    \vskip 1.5em%
    {\large
      \lineskip .5em%
      \begin{tabular}[t]{c}%
        \@author
      \end{tabular}\par}%
  \end{center}%
  \par
  \vskip 1em}
\title{
  Generalized Tsirelson Inequalities,
  Commuting-Operator Provers, \\
  and Multi-Prover Interactive Proof Systems
}
\emailTI\url{tsuyoshi@cs.mcgill.ca}
\emailHK\url{hirotada@nii.ac.jp}
\emailDP\url{dpreda@eecs.berkeley.edu}
\emailXS\url{xiaomings@tsinghua.edu.cn}
\emailAY\url{andrewcyao@tsinghua.edu.cn}
\def\and{%
  \end{tabular}%
  \hskip 0em plus .17fil%
  \begin{tabular}[t]{c}}%
\author{%
  Tsuyoshi Ito\footnotemark[1] \\
  School of Computer Science \\
  McGill University \\
  Montreal, Canada \\
  \emailTI
  \and
  Hirotada Kobayashi\footnotemark[2] \\
  Principles of Informatics Research Division \\
  National Institute of Informatics \\
  Tokyo, Japan \\
  \emailHK
  \and
  Daniel Preda \\
  Computer Science Division \\
  University of California, Berkeley \\
  USA \\
  \emailDP
  \and
  Xiaoming Sun\footnotemark[3] \\
  Center for Advanced Study \\
  Tsinghua University \\
  Beijing, China \\
  \emailXS
  \and
  Andrew C.-C. Yao\footnotemark[3] \\
  Center for Advanced Study \\
  Tsinghua University \\
  Beijing, China \\
  \emailAY}
\date{April~7, 2008}
\begin{document}
\sloppy

\makeatletter
\begingroup
  \renewcommand\thefootnote{\fnsymbol{footnote}}%
  \def\@makefnmark{\rlap{\@textsuperscript{\normalfont\@thefnmark}}}%
  \long\def\@makefntext#1{\parindent 1em\noindent
          \hb@xt@1.8em{%
              \hss\@textsuperscript{\normalfont\@thefnmark}}#1}%
  \if@twocolumn
    \ifnum \col@number=\@ne
      \@maketitle
    \else
      \twocolumn[\@maketitle]%
    \fi
  \else
    \newpage
    \global\@topnum\z@   % Prevents figures from going at top of page.
    \@maketitle
  \fi
  \thispagestyle{plain}\@thanks
  \begin{center}
    \large
    \@date\\
    [8mm]
  \end{center}
  \footnotetext[1]{%
    Research carried out while at the National
    Institute of Informatics.
    Supported by the Grant-in-Aid
    for Scientific Research on Priority Areas
    No.~18079014 of the Ministry of Education, Culture, Sports,
    Science and Technology of Japan.
  }
  \footnotetext[2]{%
    Supported by
    the Grant-in-Aid for Scientific Research (B)
    No.~18300002 of the Ministry of Education, Culture, Sports,
    Science and Technology of Japan.
  }
  \footnotetext[3]{%
    Supported by the National
    Natural Science Foundation of China Grant 60553001, 60603005, and
    the National Basic Research Program of China Grant
    2007CB807900, 2007CB807901.
  }
\endgroup

\setcounter{footnote}{0}%
\global\let\thanks\relax
\global\let\maketitle\relax
\global\let\@maketitle\relax
\global\let\@thanks\@empty
\global\let\@author\@empty
\global\let\@date\@empty
\global\let\@title\@empty
\global\let\title\relax
\global\let\author\relax
\global\let\date\relax
\global\let\and\relax
\makeatother

\begin{abstract}
A central question in quantum information theory and computational
complexity is how powerful nonlocal strategies are in cooperative
games with imperfect information, such as multi-prover interactive
proof systems. This paper develops a new method for proving limits
of nonlocal strategies that make use of prior entanglement among
players (or, provers, in the terminology of multi-prover interactive
proofs). Instead of proving the limits for usual isolated provers
who initially share entanglement, this paper proves the limits for
``commuting-operator provers'', who share private space, but can
apply only such operators that are commutative with any operator
applied by other provers. Obviously, these commuting-operator
provers are at least as powerful as usual isolated but
prior-entangled provers, and thus, limits in the model with
commuting-operator provers immediately give limits in the usual
model with prior-entangled provers. Using this method, we obtain an
$n$-party generalization of the Tsirelson bound for the
Clauser--Horne--Shimony--Holt inequality, for every $n$. Our bounds are
tight in the sense that, in every $n$-party case, the equality is
achievable by a usual nonlocal strategy with prior entanglement. We
also apply our method to a three-prover one-round binary interactive proof
system for $\NEXP$. Combined with the technique developed by
Kempe, Kobayashi, Matsumoto, Toner and Vidick to analyze the
soundness of the proof system, it is proved
to be $\NP$-hard to distinguish whether the entangled value of
a three-prover one-round binary-answer game is equal to one or
at most $1-1/p(n)$ for some polynomial $p$,
where $n$ is the number of questions.
This is in contrast to the two-prover one-round binary-answer case,
where the corresponding problem is efficiently decidable.
Alternatively, $\NEXP$ has a three-prover one-round binary interactive proof
system with perfect completeness and soundness $1-2^{-\poly}$.
\end{abstract}

\section{Introduction}

Nonlocality of multi-party systems
is one of the central issues in quantum information theory.
This can be naturally expressed within the framework of
\emph{nonlocal games}~\cite{CHTW04},
which are cooperative games with imperfect information.
Because of this, the nonlocality also has a strong connection
with computational complexity theory,
in particular with \emph{multi-prover interactive proof systems}~\cite{BGKW88}.
In nonlocal games,
the main interests are whether or not
the value of a game changes when parties use
nonlocal strategies that make use of prior entanglement,
and if it changes, how powerful such nonlocal strategies can be.
In multi-prover interactive proof systems,
these correspond to the questions
if dishonest but prior-entangled provers
can break the original soundness condition of the system
that is assured for any dishonest classical provers,
and if so, how much amount they can deviate from the original soundness condition.

\subsection{Our contribution}

The main contribution of this paper is
to develop a new method for proving limits
of nonlocal strategies that make use of prior entanglement among players
(or, provers, in the terminology of multi-prover interactive proofs
--- this paper uses ``player'' and ``prover'' interchangeably).
Specifically, we consider \emph{commuting-operator provers},
the notion of which was already introduced in the seminal paper by
Tsirelson~\cite{Tsirelson80}.
In contrast to usual provers for multi-prover interactive proofs,
commuting-operator provers are no longer isolated,
and share a private space corresponding to a Hilbert space $\calH$.
Initially, they have some state $\ket{\varphi} \in \calH$,
and when the $k$th prover $P_k$ receives a question $i$,
he applies some predetermined operation $A_i^{(k)}$ acting over $\calH$.
The only constraint for the provers is
that operators $A_i^{(k)}$ and $A_j^{(l)}$
of different provers $P_k$ and $P_l$
always commute for any questions $i$ and $j$.
It is obvious from this definition
that these commuting-operator provers are at least as powerful as
usual isolated but prior-entangled provers,
and thus, limits in the model with commuting-operator provers
immediately give limits in the usual model with prior-entangled provers.
Using these commuting-operator provers,
or more precisely,
making intensive use of the commutative properties of operators,
we obtain a number of intriguing results
on the limits of nonlocal strategies.

We first show a tight bound of the strategies of commuting-operator players
for the generalized ${n \times n}$ Magic Square game played by $n$ players.
This bound is naturally interpreted as
an $n$-party generalization of the Tsirelson bound
for the Clauser-Horne-Shimony-Holt (CHSH) inequality,
and thus, we essentially obtain a family of generalized Tsirelson-type inequalities,
as stated in the following theorem.

\begin{theorem}  \label{theorem:generalized-tsirelson}
  Let $X^{(i)}_j$ be $\pm1$-valued observables on $\calH$ for $0\le i\le n-1$ and $1\le j\le n$
  where $X^{(i)}_j$ and $X^{(i')}_{j'}$ commute if $i\ne i'$ ($\forall 1\leq j,j'\leq n$).
  Let $M_j=\prod_{i=0}^{n-1} X^{(i)}_j$ and $N_k=\prod_{i=0}^{n-1} X^{(i)}_{k-i}$ be observables
  for $1\le j,k\le n$,
  where the subscript $k-i$ is interpreted under modulo $n$.
  Then,
  \begin{equation}
    \sum_{j=1}^n \avg{M_j}
    +
    \sum_{k=1}^{n-1} \avg{N_k}
    -
    \avg{N_n}
    \leq
    2n \cos \frac{\pi}{2n},
    \label{eq:inequality}
  \end{equation}
  where $\avg{{\cdot}}$ denotes expected value.
\end{theorem}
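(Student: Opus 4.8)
The plan is to pass from \eqref{eq:inequality} to an operator inequality, expand a square, and isolate the one place where the nontrivial constant $2n\cos\frac{\pi}{2n}$ (as opposed to the trivial value $2n$) has to be squeezed out of the structure.

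\emph{Reduction to an operator inequality.} Every $\avg{\cdot}$ is $\bra{\varphi}\cdot\ket{\varphi}$ for a fixed unit vector $\ket{\varphi}\in\calH$, so \eqref{eq:inequality} is equivalent to $S\preceq 2n\cos\frac{\pi}{2n}\,I$ for the Hermitian operator $S=\sum_{j=1}^{n}M_j+\sum_{k=1}^{n-1}N_k-N_n$, hence (squaring) to $S^{2}\preceq\bigl(2n\cos\frac{\pi}{2n}\bigr)^{2}I=2n^{2}\bigl(1+\cos\frac{\pi}{n}\bigr)I$. The hypothesis is used only through two remarks. First, each $M_j$ and each $N_k$ is a Hermitian involution, being a product of the $n$ observables $X^{(0)}_{\bullet},\dots,X^{(n-1)}_{\bullet}$, one from each row, which pairwise commute and square to $I$; so $M_j^{2}=N_k^{2}=I$. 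Second, any product $M_jM_{j'}$, $N_kN_{k'}$, or $M_jN_k$ regroups row by row into a product of commuting unitaries, one factor $X^{(i)}_{a_i}X^{(i)}_{b_i}$ per row; in particular $M_jN_k$ keeps only $n-1$ of these factors, the row-$(k-j)$ one cancelling.

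\emph{Isolating the core estimate.} Set $\varepsilon_k=1$ for $k<n$, $\varepsilon_n=-1$, $P=\sum_jM_j$, $Q=\sum_k\varepsilon_kN_k$, so $S=P+Q$. Using the involution property,
\[
  S^{2}=2n\,I+\sum_{j<j'}\{M_j,M_{j'}\}+\sum_{k<k'}\varepsilon_k\varepsilon_{k'}\{N_k,N_{k'}\}+\{P,Q\}.
\]
Each $\{M_j,M_{j'}\}$ is twice the Hermitian part of a unitary, hence $\preceq 2I$, and $\varepsilon_k\varepsilon_{k'}\{N_k,N_{k'}\}\preceq 2I$ regardless of the sign; summing over the $\binom n2$ pairs in each family bounds the two middle sums by $n(n-1)I$ apiece, and since $2n+2n(n-1)=2n^{2}$ the whole theorem reduces to
\[
  \{P,Q\}\;\preceq\;2n^{2}\cos\frac{\pi}{n}\,I .
\]
(For $n=2$ one checks that $\{P,Q\}=0$ identically, which together with $S^{2}=4I+[X^{(0)}_1,X^{(0)}_2][X^{(1)}_1,X^{(1)}_2]$ recovers Tsirelson's $2\sqrt2$.) The displayed estimate says that $P$ and $Q$ cannot be simultaneously near their maximal norm $n$ and near-aligned, and the lone minus sign on $N_n$ is exactly what keeps it from being an equality at the trivial $2n^{2}$; it is also the one place where the tightness of the bound is genuinely at stake.

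\emph{Proving the core estimate, and the main obstacle.} Here $\langle\{P,Q\}\rangle=2\,\mathrm{Re}\bra{P\varphi}Q\varphi\rangle$, and I would bound it by peeling rows. Writing $B_j=X^{(n-1)}_j$ and extracting the row-$(n-1)$ factor from every $M_j$ and $N_k$ yields $P=\sum_jC_jB_j$, $Q=\sum_jD_jB_j$ with $C_j,D_j$ supported on rows $0,\dots,n-2$ and commuting with all $B_j$; $\{P,Q\}$ then decomposes into anticommutators of the $C$'s against the $D$'s, multiplied by $\{B_j,B_{j'}\}$ and $[B_j,B_{j'}]$ (the Hermitian and anti-Hermitian parts of the single-row unitaries $X^{(n-1)}_jX^{(n-1)}_{j'}$), and one recurses on the number of rows. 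The bookkeeping reveals that the columns get paired along a cycle through $\ZZ_n$ with exactly one edge reversed; on any two-dimensional invariant subspace surviving the recursion that reversal forces the relevant ``angles'' to sum to an odd multiple of $\pi$, and the extremal problem $\max\sum_{i=1}^{2n}\cos\psi_i$ subject to $\sum_i\psi_i\equiv\pi\pmod{2\pi}$ — $\cos$ being concave on $[-\tfrac\pi2,\tfrac\pi2]$, with optimum at every $\psi_i=\tfrac{\pi}{2n}$ — delivers precisely $2n\cos\frac{\pi}{2n}$. The part I expect to be genuinely delicate is making this recursion collapse to the clean angle-optimization rather than to the slacker bound $2n$ that a single peeling step gives on its own; once that is done, the tightness asserted in the theorem follows by exhibiting the GHZ strategy in which each $X^{(i)}_j$ is an equatorial qubit observable with angles attaining that optimum.
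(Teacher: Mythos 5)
Your reduction relocates the difficulty rather than removing it, and the relocated statement is never proved. After discarding structure by bounding every within-family anticommutator by the trivial $2I$, everything rests on the ``core estimate'' $\{P,Q\}\le 2n^{2}\cos\frac{\pi}{n}\,I$. That is itself a Tsirelson-type operator inequality of exactly the same depth as the theorem; indeed it has no slack to exploit, since it is saturated by the optimal strategy (for $n=3$, a direct computation with the paper's optimal state $\ket{\varphi_3}$ gives $(P+Q)\ket{\varphi_3}=3\sqrt3\,\ket{\varphi_3}$ and $(P-Q)\ket{\varphi_3}=3\ket{\varphi'_3}$, hence $\langle\{P,Q\}\rangle=\frac12(27-9)=9=2n^{2}\cos\frac{\pi}{n}$). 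So any proof of the core estimate must carry the full weight of the argument, and your sketch does not: in the commuting-operator setting there is no tensor-product structure, so the ``two-dimensional invariant subspaces surviving the recursion'' are not available; the row-peeling bookkeeping that is supposed to pair the columns along a cycle with one reversed edge is never carried out; and you yourself flag the collapse to the angle optimization as unresolved. Moreover, the extremal problem you end with has optimum $2n\cos\frac{\pi}{2n}$, which is the bound for $\langle S\rangle$ itself, not the quantity $2n^{2}\cos\frac{\pi}{n}$ that your reduction requires, and no conversion between the two is supplied. (A minor point: $S\le cI$ does not imply $S^{2}\le c^{2}I$; you only need the converse direction, so this is harmless, but ``equivalent'' is inaccurate.)

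For comparison, the paper makes the constraint exact and algebraic rather than spectral. Using the commutation between rows and the cyclic structure of the $N_k$, one shows the interleaved product identity $M_1N_nM_2N_{n-1}\dotsm M_nN_1=I$, hence $M_1(-N_n)M_2N_{n-1}\dotsm M_nN_1=-I$. A Cauchy--Schwarz lemma (Lemma~\ref{lemma}), iterated as in Corollaries~\ref{corollary:1} and~\ref{corollary:2}, shows that for $\pm1$-valued observables with $\langle M_j\rangle=\cos\theta_j$, $\langle N_k\rangle=\cos\theta'_k$ for $k<n$ and $-\langle N_n\rangle=\cos\theta'_n$, the real part of the expectation of that product is at least $\cos\bigl(\sum_j\theta_j+\sum_k\theta'_k\bigr)>-1$ whenever the angle sum is below $\pi$; since the product equals $-I$ on the shared state, the $2n$ angles must sum to at least $\pi$. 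The concavity optimization you describe at the end is then exactly Lemma~\ref{lemma:max} applied to these $2n$ angles, and it bounds $\langle S\rangle$ directly --- no squaring and no anticommutators. Your intuition that ``one reversed edge forces the angles to sum to an odd multiple of $\pi$'' is the right one, but to become a proof it must be implemented as a statement about the expectation of the single product operator on the shared state, via the identity above and the Cauchy--Schwarz lemma, not via a recursion whose key step is left open. As it stands, the central step of your argument is a gap.
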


\noindent
In particular, for $n=2$, our inequality is identical to the Tsirelson bound for the CHSH inequality.
For $n=3$, we have the following corollary,
which was originally proved with a different proof
in a preliminary work by a subset of the authors
(Sun, Yao and Preda~\cite{SYP07}).

\begin{corollary} \label{corollary:generalized-tsirelson-3}
  Let $X^{(i)}_j$ be $\pm1$-valued observables on $\calH$ for $1\le i,j\le 3$
  where $X^{(i)}_j$ and $X^{(i')}_{j'}$ commute if $i\ne i'$ ($\forall 1\leq j,j'\leq 3$).
  Then,
  \[
     \avg{X^{(1)}_1X^{(2)}_1X^{(3)}_1}
    +\avg{X^{(1)}_2X^{(2)}_2X^{(3)}_2}
    +\avg{X^{(1)}_3X^{(2)}_3X^{(3)}_3}
    +\avg{X^{(1)}_1X^{(2)}_3X^{(3)}_2}
    +\avg{X^{(1)}_2X^{(2)}_1X^{(3)}_3}
    -\avg{X^{(1)}_3X^{(2)}_2X^{(3)}_1}
    \le 3\sqrt3.
  \]
\end{corollary}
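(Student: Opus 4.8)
\textbf{Route via Theorem~\ref{theorem:generalized-tsirelson}.}
The quickest proof is to specialize Theorem~\ref{theorem:generalized-tsirelson} to $n=3$. Relabelling the group index $i\in\{0,1,2\}$ of the theorem by $i\mapsto i+1\in\{1,2,3\}$, one checks that $M_1,M_2,M_3$ become $X^{(1)}_1X^{(2)}_1X^{(3)}_1$, $X^{(1)}_2X^{(2)}_2X^{(3)}_2$, $X^{(1)}_3X^{(2)}_3X^{(3)}_3$, and that, reading off $k-i\bmod 3$, the observables $N_1,N_2,N_3$ become the skew products $X^{(1)}_1X^{(2)}_3X^{(3)}_2$, $X^{(1)}_2X^{(2)}_1X^{(3)}_3$, $X^{(1)}_3X^{(2)}_2X^{(3)}_1$; the commutation hypothesis is word-for-word the same. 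The right-hand side of~\eqref{eq:inequality} at $n=3$ is $2\cdot3\cos(\pi/6)=3\sqrt3$, which is the claimed bound. Granting the theorem, this substitution \emph{is} the proof, and there is no obstacle.

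\textbf{A self-contained proof of the $n=3$ case.}
It is nonetheless worth sketching a direct argument --- essentially the route of the preliminary work~\cite{SYP07}, and a warm-up for the general theorem. Write $A_j=X^{(1)}_j$, $B_j=X^{(2)}_j$, $C_j=X^{(3)}_j$ and set
\[
  S=A_1B_1C_1+A_2B_2C_2+A_3B_3C_3+A_1B_3C_2+A_2B_1C_3-A_3B_2C_1 .
\]
Since different letters commute while within a letter the three observables need not, each summand is a product of three pairwise commuting Hermitian involutions, hence Hermitian, so $S=S^\dagger$ and $\avg S\le\norm S$ for every state; thus it suffices to prove $\norm S\le3\sqrt3$ (the substitution $A_j\mapsto-A_j$ flips $S\mapsto-S$, so this operator-norm bound is in fact tight). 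Factoring out the first prover,
\[
  S=A_1(B_1C_1+B_3C_2)+A_2(B_2C_2+B_1C_3)+A_3(B_3C_3-B_2C_1),
\]
where each parenthesized operator commutes with every $A_j$. I would bound $\norm S$ by an operator Cauchy--Schwarz step (using $\norm{A_j}=1$), reducing it to the norm of a quadratic form in the three parenthesized operators; factoring $B$ out of those in turn leaves single-$C$ expressions, controlled by $\norm{C_j\pm C_{j'}}\le2$ and $\norm{[C_j,C_{j'}]}\le2$. Keeping the freedom in how each Cauchy--Schwarz step is split and optimizing at the end should close the chain at exactly $3\sqrt3$; equivalently, one hunts for a sum-of-squares certificate $3\sqrt3\,I-S=\sum_a c_aL_a^\dagger L_a$ with $c_a\ge0$ and $L_a$ polynomial in the $X^{(i)}_j$. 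The shape of the certificate is dictated by the tight strategy, in which each prover's observables are $\pm1$ rotations at angles that are multiples of $\pi/(2\cdot3)=\pi/6$; that strategy also shows $3\sqrt3$ is not improvable.

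\textbf{Where the difficulty lies.}
For the direct argument the only hard part is the \emph{tight} constant. The triangle inequality alone gives just $\norm S\le6$, and the slick CHSH identity $B^2=4I-[A_1,A_2][B_1,B_2]$ (which yields $\norm B\le2\sqrt2$) does not carry over verbatim, because squaring $S$ produces cross-terms among the non-commuting within-letter observables that do not cancel. Getting from $6$ down to $3\sqrt3$ forces one to use the commutation relations to cancel precisely the right cross-terms and then to solve a small trigonometric/spectral optimization --- the $\cos(\pi/2n)$ in the general bound strongly hints that this is a tridiagonal, Chebyshev-type eigenvalue problem --- and it is that optimization, not the surrounding operator algebra, that is the real work.
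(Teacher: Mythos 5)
Your first route --- specializing Theorem~\ref{theorem:generalized-tsirelson} to $n=3$ with the index shift $i\mapsto i+1$ --- is exactly how the paper obtains the corollary, and your identification of $M_1,M_2,M_3$, the skew products $N_1,N_2,N_3$, and the constant $2\cdot3\cos(\pi/6)=3\sqrt3$ is correct. The additional ``self-contained'' argument is only a sketch (the Cauchy--Schwarz/SOS chain is not actually carried out), but it is not needed: the corollary stands on the specialization alone.
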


Theorem~\ref{theorem:generalized-tsirelson}
includes the inequalities proved by Wehner~\cite{Wehner06} as special cases
--- our proof is completely different from hers.
It is stressed that the inequalities in Theorem~\ref{theorem:generalized-tsirelson}
and Corollary~\ref{corollary:generalized-tsirelson-3}
are tight even in the usual nonlocal model with prior entanglement,
a simple proof of which is also given in this paper.

In terms of Magic Square games,
Theorem~\ref{theorem:generalized-tsirelson} implies the following.

\begin{corollary}  \label{corollary:magic-square}
For every $n\ge2$,
the maximum winning probability in the $n$-player Magic Square game
both for commuting-operator players and for usual prior-entangled players
is equal to ${(1 + \cos \frac{\pi}{2n})/2}$.
\end{corollary}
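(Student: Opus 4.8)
The plan is to relate the winning probability in the $n$-player Magic Square game directly to the quantity appearing on the left-hand side of~\eqref{eq:inequality}, and then invoke Theorem~\ref{theorem:generalized-tsirelson} for the upper bound together with an explicit strategy for the matching lower bound. First I would recall the rules of the $n$-player generalization: the verifier picks uniformly at random one of the $2n$ ``lines'' of the $n\times n$ grid (the $n$ ``rows'' indexed by $j$, contributing the observables $M_j$, and the $n$ ``anti-diagonals'' indexed by $k$, contributing the $N_k$), sends to each of the $n$ players one cell of that line, each player answers a bit, and the players win iff the bits are consistent with a fixed $\pm1$ parity constraint on that line --- parity $+1$ on all lines except one distinguished line (say $N_n$), which carries parity $-1$. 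Writing a player's $\pm1$-valued response observable on cell $(i,j)$ as $X^{(i)}_j$, the probability of satisfying the constraint on a line whose product observable is $L$ with target parity $\varepsilon\in\{\pm1\}$ is exactly $(1+\varepsilon\avg{L})/2$. Hence the overall winning probability is
\begin{equation}
  w \;=\; \frac{1}{2n}\left( \sum_{j=1}^n \frac{1+\avg{M_j}}{2} + \sum_{k=1}^{n-1}\frac{1+\avg{N_k}}{2} + \frac{1-\avg{N_n}}{2} \right)
  \;=\; \frac12 + \frac{1}{4n}\left( \sum_{j=1}^n \avg{M_j} + \sum_{k=1}^{n-1}\avg{N_k} - \avg{N_n} \right).
  \label{eq:winprob-identity}
\end{equation}

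With~\eqref{eq:winprob-identity} in hand the upper bound is immediate: for commuting-operator players, the hypotheses of Theorem~\ref{theorem:generalized-tsirelson} are met (responses of different players commute, since they act on disjoint tensor factors in the prior-entangled case, or commute by definition in the commuting-operator case, and squaring a $\pm1$ observable gives the identity so the decomposition $M_j=\prod_i X^{(i)}_j$, $N_k=\prod_i X^{(i)}_{k-i}$ is legitimate), so the parenthesized quantity is at most $2n\cos\frac{\pi}{2n}$, giving $w \le \tfrac12 + \tfrac{1}{4n}\cdot 2n\cos\frac{\pi}{2n} = (1+\cos\frac{\pi}{2n})/2$. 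Since every prior-entangled strategy is in particular a commuting-operator strategy, the same bound holds there. For the matching lower bound I would exhibit an explicit $n$-player prior-entangled strategy achieving equality in~\eqref{eq:inequality}; the excerpt already asserts such a strategy exists (``our bounds are tight''), so I would simply present it --- the natural candidate uses single-qubit $\pm1$ observables of the form $\cos\theta\,\sigma_z + \sin\theta\,\sigma_x$ with phases $\theta$ chosen as equally spaced multiples of $\pi/(2n)$ on a shared GHZ-type state, mirroring the optimal CHSH strategy in the $n=2$ case, and one checks $\avg{M_j}=\avg{N_k}=\cos\frac{\pi}{2n}$ for the ``$+$'' lines and $\avg{N_n}=-\cos\frac{\pi}{2n}$ for the distinguished line.

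I expect the only genuine work to be the explicit optimal strategy and the verification of the phase bookkeeping that makes every line's expectation equal $\pm\cos\frac{\pi}{2n}$; the identity~\eqref{eq:winprob-identity} and the application of Theorem~\ref{theorem:generalized-tsirelson} are routine. One point to handle with a little care is confirming that the $n$-player Magic Square game is set up precisely so that its constraint structure produces the sign pattern ``$+$ on $2n-1$ lines, $-$ on one line'' matching the left-hand side of~\eqref{eq:inequality}; this is where the modular index $k-i$ and the choice of which line is distinguished must be tracked, but it is a combinatorial check rather than an analytic obstacle.
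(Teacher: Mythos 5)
There is a genuine gap: you have analyzed the wrong game for the upper bound. In the paper's $n$-player Magic Square game $\MS_n$, the referee picks a row or a column (not an anti-diagonal), and then assigns the $n$ cells of the chosen line to the $n$ players \emph{uniformly at random}; there is no fixed correspondence between cells and players. Your winning-probability identity presupposes that each cell is always answered by one fixed player, so that the pass probability of each line is $(1\pm\avg{L})/2$ for a single product observable $L$ matching the $M_j$, $N_k$ of Theorem~\ref{theorem:generalized-tsirelson}; the shifted index $k-i$ in $N_k$ encodes exactly the cyclic Latin-square assignment $L_{jk}\equiv k-j \bmod n$ of columns to players, not any diagonal geometry of the grid. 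What your argument bounds is therefore $\wQ(\MS_n(L))$ for the cyclic assignment, not $\wQ(\MS_n)$. In $\MS_n$ the winning probability is an average over all $n!$ assignments of each line's cells to players, and Theorem~\ref{theorem:generalized-tsirelson} does not control those other products: the paper's remark about the non-cyclic order-$4$ Latin square $L'$ shows that the key step (the signed product of the line observables being $-I$) genuinely fails for other assignments, so you cannot just apply the theorem assignment-by-assignment. The missing ingredient is the reduction $\wq(\MS_n)\le\wq(\MS_n(L))$ and $\wQ(\MS_n)\le\wQ(\MS_n(L))$ (Theorem~\ref{theorem:ms-assignemt} in the paper), proved by letting the players in $\MS_n(L)$ share a uniformly random permutation (a register in uniform superposition, adjoined to the shared state in the commuting-operator case) and simulate the $\MS_n$ strategy of the permuted player; without this step your upper bound does not apply to the game in the statement.

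The lower-bound half of your plan is in the right spirit and matches the paper's Theorem~\ref{theorem:strategy}: the shared state $\ket{\varphi_n}$ with single-qubit observables $Z_{\theta_{jk}}$, though the angles are not equally spaced multiples of $\pi/(2n)$ --- they are $0$ on the interior cells, $\pm\pi/(2n)$ on the last column and last row, and $\pi/2$ in the corner, so that each row's angles sum to $\pi/(2n)$ and column $n$'s sum to $\pi-\pi/(2n)$, giving expectation $\pm\cos\frac{\pi}{2n}$ on every line via Lemma~\ref{lemma:state}. One point you should make explicit: the strategy must be exhibited for $\MS_n$ itself, with its random cell-to-player assignment; this works because the state is permutation-symmetric and the measured observable depends only on the cell, not on which player receives it.
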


Next we prove the limits of the strategies of commuting-operator
provers for three-prover one-round interactive proof systems for
$\NP$ and $\NEXP$. The proof system makes use of
three-query non-adaptive probabilistically checkable proof (PCP)
systems with perfect completeness due to H\aa stad~\cite{Hastad01}.
Because of the commutative properties of operators each prover
applies, it is quite easy to apply the technique developed by
Kempe,~Kobayashi,~Matsumoto,~Toner,~and~Vidick~\cite{KKMTV-0704.2903v2} when
analyzing the soundness accepting probability of our system.
With this analysis, we show that it is $\NP$-hard to compute the value of
a three-player one-round \emph{binary-answer} game with entangled players,
which improves the original result in Ref.~\cite{KKMTV-0704.2903v2}
where a ternary answer from each prover was needed for the $\NP$-hardness.
In fact, we show that it is $\NP$-hard even to decide if the value of
a three-player one-round binary-answer game is one or not.
In sharp contrast to this,
the result by Cleve, H\o yer, Toner and Watrous~\cite{CHTW04}
implies that the corresponding decision problem is in $\classP$
in the case with a \emph{two-player} one-round binary-answer game.
Alternatively, we show that
any language in $\NEXP$ has a three-prover one-round \emph{binary} interactive proof system
of perfect completeness with soundness $1 - 2^{\poly}$,
whereas only languages in $\EXP$ have such proof systems in the two-prover one-round binary case.

More precisely, let $\naPCP_{c(n),s(n)}(r(n),q(n))$ be the class of languages
recognized by a probabilistically checkable proof system
with completeness and soundness acceptance probabilities $c(n)$ and $s(n)$
such that the verifier uses $r(n)$ random bits
and makes $q(n)$ non-adaptive queries,
and let $\MIPstar_{c(n),s(n)}(m,1)$ be the class of languages
recognized by a classical $m$-prover one-round interactive proof system
with entangled provers
with completeness and soundness acceptance probabilities $c(n)$ and $s(n)$.
Our main technical theorem is stated as follows.

\begin{theorem}  \label{theorem:pcp-3provers}
  $\naPCP_{1,s(n)}(r(n),3)\subseteq\MIPstar_{1,1-\varepsilon(n)}(3,1)$,
  where $\varepsilon(n)=(1/384)(1-s(n))^2\cdot2^{-2r(n)}$.
  In this interactive proof system,
  the verifier uses $r(n)+O(1)$ random bits,
  each prover answers one bit,
  and honest provers do not need to share prior entanglement.
  Moreover, the soundness of the interactive proof system holds
  also against commuting-operator provers.
\end{theorem}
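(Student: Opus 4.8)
The plan is to turn the given three-query non-adaptive PCP into a three-prover one-round game in which each prover is handed a single proof position and returns the bit of the proof there, and then to bound the commuting-operator value by the standard ``decode a classical proof from the strategy'' paradigm; the commuting-operator hypothesis is exactly what makes the decoding step essentially free, so that the soundness analysis of Kempe et al.\ can be plugged in with almost no change.

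Concretely, let $V$ be the PCP verifier for $L\in\naPCP_{1,s(n)}(r(n),3)$: on random string $R\in\{0,1\}^{r(n)}$ it computes three proof positions $q_1(R),q_2(R),q_3(R)$ and a predicate $\phi_R$ on three bits, and accepts iff $\phi_R$ holds on the three queried bits. The three-prover verifier tosses a fair coin. On one outcome, the \emph{PCP test}, it picks $R$ and a uniformly random permutation $\sigma$ of $\{1,2,3\}$, sends $q_{\sigma(k)}(R)$ to prover $P_k$, receives one bit $a_k$ from each, and accepts iff $\phi_R$ holds on $\bigl(a_{\sigma^{-1}(1)},a_{\sigma^{-1}(2)},a_{\sigma^{-1}(3)}\bigr)$. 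On the other outcome, the \emph{consistency test}, it picks $R$ and $j\in\{1,2,3\}$ uniformly, sends $q_j(R)$ to all three provers, and accepts iff a uniformly random pair of the three answers coincides. The verifier uses $r(n)+O(1)$ random bits and each prover answers one bit. For completeness, if $x\in L$ the honest provers agree beforehand on a proof $\pi$ accepted by $V$ and each answers $\pi$ at the position it receives; perfect completeness of $V$ makes the PCP test always accept, and the consistency test accepts trivially, so the completeness is $1$ and no prior entanglement is used.

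For soundness, fix $x\notin L$ and a commuting-operator strategy: a state $\ket{\psi}\in\calH$ together with $\pm1$-observables $A^{(k)}_i$ on $\calH$, one for each prover $k$ and position $i$, with $A^{(k)}_i$ and $A^{(k')}_{i'}$ commuting whenever $k\ne k'$. If the value is $1-\delta$, each sub-test accepts with probability at least $1-2\delta$. Let $\mu$ be the distribution on positions induced by the consistency test; every position that is ever queried carries $\mu$-mass $\Omega(2^{-r(n)})$, and passing the consistency test gives $\sum_i\mu(i)\,\mathbb{E}_{k\ne k'}\bigl\lVert(A^{(k)}_i-A^{(k')}_i)\ket{\psi}\bigr\rVert^2=O(\delta)$, hence $\bigl\lVert(A^{(k)}_i-A^{(k')}_i)\ket{\psi}\bigr\rVert=O\bigl(2^{r(n)/2}\,\delta^{1/2}\bigr)$ for every queried position $i$ and all $k,k'$. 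For the PCP test, observe that for fixed $R$ and $\sigma$ the three observables that get measured belong to three distinct provers, hence commute and admit a joint distribution, so the test's success probability equals $\mathbb{E}_{R,\sigma}\bra{\psi}\Pi_{R,\sigma}\ket{\psi}$ for the projector $\Pi_{R,\sigma}$ onto accepting joint outcomes, and we expand $\Pi_{R,\sigma}$ in the Fourier basis of $\phi_R$.

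Now decode. In each Fourier monomial we replace every factor $A^{(k)}_i$ by the observable $A^{(1)}_i$ of a fixed reference prover $P_1$; this is where commutativity does the work, since in the monomial at hand every factor not yet replaced belongs to a prover other than $P_1$, hence commutes with both the old and the new factor and can be shifted to the other side of the inner product, so one replacement costs only $\bigl\lVert(A^{(k)}_i-A^{(1)}_i)\ket{\psi}\bigr\rVert$. Summing the $O(1)$ replacements over the $O(1)$ Fourier terms (whose coefficients have bounded total mass), the PCP-test success is within $O\bigl(2^{r(n)}\,\delta^{1/2}\bigr)$ of a quantity built solely from $P_1$'s observables; a second application of the same near-agreement --- using that each of $P_1$'s observables commutes exactly with all of $P_2$'s, and so nearly commutes with the others on $\ket{\psi}$ --- rewrites that quantity, up to a further $O\bigl(2^{r(n)}\,\delta^{1/2}\bigr)$, as the acceptance probability of $V$ on the (possibly randomized) proof obtained by measuring $P_1$'s observables, which is at most $s(n)$ because $x\notin L$. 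Putting the pieces together, $1-2\delta\le s(n)+O\bigl(2^{r(n)}\,\delta^{1/2}\bigr)$; for small $\delta$ this rearranges to $\delta=\Omega\bigl((1-s(n))^2\,2^{-2r(n)}\bigr)$, and bounding the Cauchy--Schwarz losses and the Fourier weights with the constants from the Kempe et al.\ analysis sharpens this to $\delta\ge\varepsilon(n)=(1/384)(1-s(n))^2\cdot2^{-2r(n)}$. Since only the commutation of operators of different provers was used, the bound holds against commuting-operator provers, and a fortiori against isolated entangled provers. The delicate point, and the main obstacle, is precisely this decoding step: the reference prover's observables at different positions need not commute, so after the first round of substitutions the remaining expression is not literally the acceptance probability of any proof, and it takes the consistency test together with cross-prover commutativity to certify that it is nevertheless within the stated error of such a probability --- which is exactly the step that is technically hard in the purely-entangled setting and that our commuting-operator formulation trivializes.
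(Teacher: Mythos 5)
Your protocol and overall plan (a consistency test plus a PCP-simulation test, then decoding a classical proof string from the strategy and charging the loss to the consistency test) are the same as the paper's, and the completeness, randomness, and answer-length claims are fine. But the soundness argument has a genuine gap exactly at the point you yourself flag as ``the delicate point'': you never actually carry out the decoding. After your first round of substitutions you are left with quantities such as $\Re\bra\psi A^{(1)}_{q_1}A^{(1)}_{q_2}A^{(1)}_{q_3}\ket\psi$ built from one prover's observables at different positions, which do not commute, so these numbers are not the marginals of any single distribution over proof strings. Saying that ``a second application of the same near-agreement'' turns them into ``the acceptance probability of $V$ on the proof obtained by measuring $P_1$'s observables'' begs the question: measuring $P_1$'s observables requires choosing an order, the resulting distribution depends on that order, and the per-position agreement bounds you derived hold only on the original state $\ket\psi$, not on the disturbed (sub-normalized) post-measurement states that arise once earlier positions have been measured. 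Controlling the accumulation of error over all (up to $3\cdot 2^{r}$) positions is where essentially the whole $2^{-2r}$ factor in $\varepsilon(n)$ comes from, and it is not a routine repetition of your first step.

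The paper supplies exactly the missing machinery. It defines the cheating proof by measuring one prover's PVMs sequentially in a fixed order over all queried positions, and proves by a hybrid argument (its Claim~1) that the marginal of this distribution on each queried triple is close to the commuting-operator behavior. Two ingredients are essential and absent from your sketch: (i) a trace-distance form of the closeness statement (the paper's Lemma~\ref{lemma:filter}, bounding $\sum_a\norm{\sqrt{M_a}\rho\sqrt{M_a}-\sqrt{N_a}\rho\sqrt{N_a}}_{\tr}$ by $2\sqrt{2\lambda}$), which is what allows a substitution adjacent to $\ket\Psi$ to survive all subsequently applied measurements without the error growing; and (ii) the use of the third prover: in the hybrid at position $q$ one replaces a third-prover operator by an operator of prover $1$ or $2$ chosen so that it commutes with every remaining third-prover factor, which is precisely why three provers rather than two suffice for the argument (the paper notes this in a footnote). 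Your closing remark that the commuting-operator formulation ``trivializes'' this step is also off the mark: it actually removes the partial-trace/non-disturbance argument of Kempe et al.\ (partial trace is meaningless here), which is why the paper substitutes the trace-distance lemma; cross-prover commutativity alone does not certify that a single prover's mutually non-commuting observables define a consistent global proof. Finally, the constant $1/384$ is only asserted in your write-up; in the paper it emerges from $\abs{Q}\le 3\cdot 2^{r}$, $\pi_q\ge\tfrac{1}{3}2^{-r}$, and a Cauchy--Schwarz over positions, none of which appear in your account.
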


By applying Theorem~\ref{theorem:pcp-3provers}
to well-known inclusions
$\NP\subseteq\bigcup_{c>0}\naPCP_{1,1-1/n^c}(c\log n,3)$
and
$\NEXP\subseteq\bigcup_{c>0}\naPCP_{1,1-2^{-cn}}(n^c,3)$,
which come
from the $\NP$-completeness of the 3SAT problem
and the $\NEXP$-completeness of the succinct version of 3SAT
(see e.g.\ Ref.~\cite{DK00}),
we obtain the following corollaries.

\begin{corollary}  \label{corollary:game-inapproximable}
  There exists a polynomially bounded function $\function{p}{\ZZ_{\ge0}}{\NN}$
  such that, given a classical three-player one-round binary-answer game
  with $n$ questions
  with entangled (or commuting-operator) players,
  it is $\NP$-hard to decide whether the value of the game is one
  or at most $1-p(n)$.
  Here a game is given as a description
  of a probability distribution over three-tuples of questions
  and a table showing whether the answers are accepted or not
  for each tuple of questions and each tuple of answers.
\end{corollary}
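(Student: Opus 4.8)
The plan is to combine Theorem~\ref{theorem:pcp-3provers} with the PCP characterization of $\NP$ quoted just above. Fix an arbitrary language $L\in\NP$; by that characterization there is a constant $c>0$ so that $L$ has a non-adaptive PCP verifier which, on an input $x$ of length $m$, uses $r(m)=c\log m$ random bits, makes $3$ queries, accepts correct proofs of true statements with probability $1$, and accepts every proof of a false statement with probability at most $s(m)=1-1/m^{c}$. Applying Theorem~\ref{theorem:pcp-3provers} to this verifier yields a three-prover one-round binary-answer game $G_x$ whose verifier uses $r(m)+O(1)=O(\log m)$ random bits; consequently $G_x$ has only $2^{r(m)+O(1)}=\poly(m)$ distinct question triples, each with $2^{3}=8$ answer triples, so the distribution over questions together with the accept/reject table can be written down from $x$ in time $\poly(m)$.

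Correctness is immediate from the two clauses of Theorem~\ref{theorem:pcp-3provers}. If $x\in L$, the honest (unentangled) provers make the verifier accept with certainty, so the entangled value --- and a fortiori the commuting-operator value --- of $G_x$ equals $1$. If $x\notin L$, the soundness clause, which is asserted against entangled and even commuting-operator provers, bounds the value of $G_x$ by $1-\varepsilon(m)$ with
\[
  \varepsilon(m)=\tfrac{1}{384}\,(1-s(m))^{2}\,2^{-2r(m)}
             =\tfrac{1}{384}\,m^{-2c}\cdot m^{-2c}
             =\tfrac{1}{384}\,m^{-4c},
\]
an inverse polynomial in $m$.

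Finally we re-express this gap in terms of the number of questions $n$ of $G_x$. Since $n=2^{r(m)+O(1)}$ is $\poly(m)$ and, conversely, $m$ is $\poly(n)$, the quantity $\varepsilon(m)$ is bounded below by $1/p(n)$ for a suitable polynomially bounded $\function{p}{\ZZ_{\ge0}}{\NN}$ (for instance one may take $p(n)$ to be a fixed power of $n$ depending on $c$). Hence $x\mapsto G_x$ is a polynomial-time many-one reduction from $L$ to the promise problem of separating games of value $1$ from games of value at most $1-1/p(n)$, and since $L\in\NP$ was arbitrary the latter problem is $\NP$-hard.

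The argument poses no genuine obstacle, because Theorem~\ref{theorem:pcp-3provers} already carries both the perfect completeness and the entangled/commuting-operator soundness bound. The only delicate point is a bookkeeping one: one must use a PCP with logarithmic randomness \emph{and} inverse-polynomial soundness error, so that, after the loss of a factor $(1-s)^{2}2^{-2r}$ in Theorem~\ref{theorem:pcp-3provers}, the game (i) still has polynomially many questions --- making the reduction and the game description of polynomial size --- and (ii) still exhibits a gap that is inverse-polynomial in $n$. The inclusion $\NP\subseteq\bigcup_{c>0}\naPCP_{1,1-1/n^{c}}(c\log n,3)$ is exactly what guarantees that such a PCP exists.
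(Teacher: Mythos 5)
Your proposal is correct and follows exactly the paper's route: the paper obtains this corollary by applying Theorem~\ref{theorem:pcp-3provers} to the inclusion $\NP\subseteq\bigcup_{c>0}\naPCP_{1,1-1/n^c}(c\log n,3)$, noting that logarithmic randomness keeps the game polynomial-size and that the resulting soundness gap $\varepsilon=(1/384)(1-s)^2 2^{-2r}$ remains inverse-polynomial. Your additional bookkeeping (relating the number of questions $n$ to the input length and reading the gap as $1-1/p(n)$, as in the abstract) is exactly the intended reading, so there is nothing to add.
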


\begin{corollary}  \label{corollary:nexp-3provers}
  $\NP\subseteq\MIPstar_{1,1-1/\poly}(3,1)$ and $\NEXP\subseteq\MIPstar_{1,1-2^{-\poly}}(3,1)$,
  where the verifier uses $O(\log n)$ (resp.\ $\poly(n)$) random bits,
  each prover answers one bit,
  and honest provers do not need to share prior entanglement.
\end{corollary}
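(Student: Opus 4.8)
The plan is to obtain Corollary~\ref{corollary:nexp-3provers} by composing Theorem~\ref{theorem:pcp-3provers} with the two standard probabilistically checkable proof characterizations quoted immediately above,
\[
  \NP\subseteq\bigcup_{c>0}\naPCP_{1,1-1/n^c}(c\log n,3),
  \qquad
  \NEXP\subseteq\bigcup_{c>0}\naPCP_{1,1-2^{-cn}}(n^c,3),
\]
and then doing the bookkeeping of the resulting soundness gap and resource bounds. Since Theorem~\ref{theorem:pcp-3provers} is already in hand, this amounts to a direct substitution into $\varepsilon(n)=(1/384)(1-s(n))^2\cdot2^{-2r(n)}$; I do not expect any genuine obstacle here, only the routine arithmetic of verifying that the gap stays $1/\poly$ in the first case and $2^{-\poly}$ in the second, and that the claimed bounds on randomness and answer length transfer.

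For the first inclusion, I would fix $L\in\NP$ and choose $c>0$ with $L\in\naPCP_{1,1-1/n^c}(c\log n,3)$, so that $s(n)=1-1/n^c$ and $r(n)=c\log n$. Applying Theorem~\ref{theorem:pcp-3provers} puts $L$ in $\MIPstar_{1,1-\varepsilon(n)}(3,1)$ with
\[
  \varepsilon(n)=\frac{1}{384}\cdot n^{-2c}\cdot2^{-2c\log n}=\frac{1}{384\,n^{4c}},
\]
which is $1/\poly(n)$; moreover the verifier uses $r(n)+O(1)=O(\log n)$ random bits, each prover answers a single bit, and honest provers need no prior entanglement. This gives $\NP\subseteq\MIPstar_{1,1-1/\poly}(3,1)$ with exactly the claimed parameters.

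For the second inclusion, I would fix $L\in\NEXP$ and choose $c>0$ with $L\in\naPCP_{1,1-2^{-cn}}(n^c,3)$, so that $s(n)=1-2^{-cn}$ and $r(n)=n^c$. Theorem~\ref{theorem:pcp-3provers} then places $L$ in $\MIPstar_{1,1-\varepsilon(n)}(3,1)$ with
\[
  \varepsilon(n)=\frac{1}{384}\cdot2^{-2cn}\cdot2^{-2n^c}=\frac{1}{384}\,2^{-2cn-2n^c},
\]
which is $2^{-\poly(n)}$ since the exponent $2cn+2n^c$ is polynomially bounded; here the verifier uses $r(n)+O(1)=\poly(n)$ random bits, and again each prover answers one bit with no honest-prover entanglement. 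Hence $\NEXP\subseteq\MIPstar_{1,1-2^{-\poly}}(3,1)$ with the stated parameters. In both cases the soundness holds even against commuting-operator provers, being inherited verbatim from the final clause of Theorem~\ref{theorem:pcp-3provers}, which completes the argument.
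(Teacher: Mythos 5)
Your proposal is correct and follows exactly the paper's own route: the paper obtains this corollary by applying Theorem~\ref{theorem:pcp-3provers} to the same two PCP inclusions for $\NP$ and $\NEXP$, and your substitution into $\varepsilon(n)=(1/384)(1-s(n))^2\cdot2^{-2r(n)}$ together with the transfer of the randomness, answer-length, and commuting-operator soundness clauses is precisely the intended (and fully sufficient) bookkeeping.
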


In contrast to Corollaries~\ref{corollary:game-inapproximable}
and \ref{corollary:nexp-3provers},
the following result in the two-prover case
is immediate from the result by Cleve, H\o yer, Toner and Watrous~%
\cite[Theorem~5.12]{CHTW04}.

\begin{theorem}  \label{theorem:2provers}
  \begin{enumerate}[(i)]
  \item
    Given a classical two-player one-round binary-answer game
    with entangled players,
    the problem of deciding whether the value of the game is
    equal to one or not is in $\classP$.
  \item
    Only languages in $\EXP$
    have two-prover one-round binary interactive proof systems
    with entangled provers of perfect completeness
    and soundness acceptance probability $1-2^{-\poly}$.
  \end{enumerate}
\end{theorem}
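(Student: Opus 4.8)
The theorem is asserted to be immediate from \cite[Theorem~5.12]{CHTW04}; the plan is to make that reduction explicit and, for part~(ii), to apply it at exponential scale. For part~(i), I would simply observe that the decision problem stated there --- given the explicit description of a two-player one-round binary-answer game with entangled players (a distribution over question pairs together with the accept/reject table), is its entangled value equal to one? --- is verbatim the problem that \cite[Theorem~5.12]{CHTW04} solves in polynomial time. That result characterizes, by a test on the predicate table computable in time polynomial in the length of the description, exactly when such a game admits an entangled strategy of value one (equivalently, of supremum value one). Since the decision ``value $=1$'' versus ``value $\neq 1$'' is precisely this test, part~(i) follows, the only point to verify being that the complexity measure used by that algorithm is the bit-length of the explicit game description, as in our statement.

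For part~(ii), I would run the same algorithm on the exponentially large game naturally associated with a proof system. Suppose $L$ has a two-prover one-round binary interactive proof system with entangled provers, perfect completeness, and soundness $1-2^{-q(n)}$ for a polynomial $q$, whose verifier uses $r(n)=\poly(n)$ random bits. For an input $x$ with $\abs{x}=n$, the verifier's behaviour on $x$ defines a two-player one-round binary-answer game $G_x$: the questions are the (polynomially long) strings the verifier may send, the distribution over question pairs is the one induced by the $2^{r(n)}$ equiprobable random strings, and the accept/reject table is given by the verifier's decision predicate. Perfect completeness gives entangled value $\omega^*(G_x)=1$ when $x\in L$ (the honest provers already witness value one), while soundness gives $\omega^*(G_x)\le 1-2^{-q(n)}<1$ when $x\notin L$; hence $\omega^*(G_x)=1$ if and only if $x\in L$. (Any gap strictly below one would do here; the bound $1-2^{-\poly}$ is stated only to parallel Corollary~\ref{corollary:nexp-3provers}.) An exponential-time machine now, on input $x$, first writes down the explicit description of $G_x$ --- iterating over the $2^{r(n)}$ random strings and, for each, computing in $\poly(n)$ time the two questions and the four decision bits --- producing a description of size $2^{\poly(n)}$ in time $2^{\poly(n)}$; it then runs the polynomial-time algorithm of part~(i) on this description, which costs $\poly\bigl(2^{\poly(n)}\bigr)=2^{\poly(n)}$, and accepts iff the reported value is one. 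This decides $L$ in $\EXP$.

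The mathematical content lies entirely in \cite[Theorem~5.12]{CHTW04}, so I expect no genuine obstacle; what needs (routine) care is the bookkeeping around the two model mismatches between a proof system and a single game. First, one must check that the verifier's question-selection and decision procedures run in time polynomial in $n$, so that $G_x$ is constructible in exponential time and has an exponential-size table. Second, one must be sure that the running time of the part~(i) algorithm is polynomial in the bit-length of the explicit table it receives, so that it remains $2^{\poly(n)}$ on input $G_x$. Finally, the argument is insensitive to whether the provers are modelled as merely sharing prior entanglement or as commuting-operator provers, since \cite[Theorem~5.12]{CHTW04} already pins down the value-one question exactly; as stated, however, Theorem~\ref{theorem:2provers} concerns entangled provers, for which that result applies directly.
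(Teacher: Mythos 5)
Your proposal takes essentially the same route as the paper: part~(i) rests on \cite[Theorem~5.12]{CHTW04}, and part~(ii) applies the part-(i) algorithm to the exponential-size game $G_x$ induced by the verifier on input $x$, which is exactly the paper's one-line argument (your explicit $2^{\poly(n)}$ construction of the game table is the intended bookkeeping). The one place you lean too hard on the citation is in part~(i): \cite[Theorem~5.12]{CHTW04} is not itself a polynomial-time test on the game description --- it is the equivalence $\wq(G)=1$ if and only if $\wc(G)=1$ for two-player one-round binary-answer games. You must still supply the (easy) algorithmic step for the classical value: a deterministic classical strategy assigns one bit to each pair of a player and a question, each question pair with positive probability imposes a constraint on two such bits, and hence deciding whether $\wc(G)=1$ is a 2SAT instance of size polynomial in the game description, as the paper observes. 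Since part~(i) is precisely a membership-in-$\classP$ claim, this step should be stated rather than folded into the citation; with it added, your argument (including the remark that any gap below one suffices in part~(ii)) matches the paper's proof.
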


An important consequence of Tsirelson's theorem~\cite{Tsirelson80}
is that, using semidefinite programming,
it is easy to compute the maximum winning probability
of a so-called two-player one-round XOR game with entangled players,
which is a two-player one-round binary-answer game with entangled players
in which the result of the game only depends on the XOR of the answers from the players.
Corollary~\ref{corollary:game-inapproximable} shows
that this is not the case if we consider three players
and we drop the XOR condition of the game unless $\classP=\NP$.

\subsection{Background}

Multi-prover interactive proof systems (MIPs) were proposed by Ben-Or,
Goldwasser, Kilian and Wigderson~\cite{BGKW88}. It was proved
by Babai, Fortnow and Lund~\cite{BFL91}
that the power of MIPs is exactly equal to $\NEXP$.
Subsequently, it was shown that they still
achieve $\NEXP$ even in the most restrictive setting of two-prover
one-round interactive proof systems~\cite{FL92}. One of the main
tools when proving these claims is
the \emph{oracularization}~\cite{BGKW88,FRS94}, which forces provers to act just
like fixed proof strings.

Cleve, H\o yer, Toner and Watrous~\cite{CHTW04}
proved many examples of two-player games
where the existence of entanglement increases winning probabilities,
including the Magic Square game,
which is an example of breakage of the oracularization paradigm
under the existence of entanglement.
They also proved that
two-prover one-round XOR proof systems,
or the proof systems where each prover's answer is one bit long
and the verifier depends only on the XOR of the answers,
recognize $\NEXP$ without prior entanglement
but at most $\EXP$ with prior entanglement.

Kobayashi and Matsumoto~\cite{KM03} showed that multi-prover
interactive proof systems with provers sharing at most polynomially
many prior-entangled qubits can recognize languages only in $\NEXP$
(even if we allow quantum messages between the verifier and each
prover). On the other hand, if provers are allowed to share
arbitrary many prior-entangled qubits, very little were known about
the power of multi-prover interactive proof systems except for the
case of XOR proof systems. Very recently,
Kempe,~Kobayashi,~Matsumoto,~Toner and Vidick~\cite{KKMTV-0704.2903v2} showed
that $\NP\subseteq\MIPstar_{1,1-1/\poly}(3,1)$
and $\NEXP\subseteq\MIPstar_{1,1-2^{-\poly}}(3,1)$.
Cleve, Gavinsky and Jain~\cite{CGJ07} proved that
$\NP\subseteq\xorMIPstar_{1-\varepsilon,1/2+\varepsilon}(2,1)$,
where $\xorMIPstar_{c(n),s(n)}(2,1)$ is the class of languages
recognized by a two-prover one-round XOR interactive proof
system with entangled provers.

The only known relation between the model with commuting-operator provers
and the one with usual isolated entangled provers
is that they are equivalent in the two-prover one-round setting
that involves only finite-dimensional Hilbert spaces~\cite{Tsirelson80,Tsirelson06}.

\subsection{Organization of the paper}

Section~\ref{section:preliminaries} gives definitions
on MIP systems used in later sections.
Section~\ref{section:commuting-operator} introduces
the commuting-operator-provers model which we will use later
and states some basic facts on it.
Section~\ref{section:generalized-tsirelson} discusses the $n$-player generalization
of Tsirelson's bound based on the $n\times n$ Magic Square game.
Section~\ref{section:3sat} treats the three-prover one-round binary interactive proof system
for $\NEXP$
and compares it with the two-prover case.

\section{Preliminaries} \label{section:preliminaries}

We assume basic knowledge about quantum computation, interactive proofs and probabilistically checkable proofs.
Readers are referred to textbooks on quantum computation (e.g.\ Nielsen and Chuang~\cite{NC01})
and on computational complexity (e.g.\ Du and Ko~\cite{DK00}).
Here we review basic notions of multi-prover interactive proof systems
that are necessary to define commuting-operator model in Section~\ref{section:commuting-operator}.

A multi-prover interactive proof system can be best viewed
as a sequence of cooperative games indexed by input string.

An \emph{$m$-player cooperative one-round game} (simply an \emph{$m$-player game} in this paper)
is a pair $G=(\pi,V)$ of a probability distribution $\pi$ over $Q^m$
and a predicate $V\colon Q^m\times A^m\to\{0,1\}$, where $Q$ and $A$ are finite sets.
As a convention, we denote $V(q_1,\dots,q_m,a_1,\dots,a_m)$ by $V(a_1,\dots,a_m\mid q_1,\dots,q_m)$.
In this game, a referee decides whether the players win or lose according to a predetermined rule as follows.
The referee chooses questions $q_1,\dots,q_m$ according to the distribution $\pi$
and sends the question $q_i$ to the $i$th player.
The $i$th player sends back an answer $a_i\in A$, and the referee collects the answers $a_1,\dots,a_m$.
The players win if $V(a_1,\dots,a_m\mid q_1,\dots,q_m)=1$ and lose otherwise.
In this paper, we often refer to players as ``provers'' for better correspondence to multi-prover interactive proof systems.

A \emph{behavior} or a \emph{no-signaling strategy} for $G$
is a function $S\colon Q^m\times A^m\to[0,1]$
with normalization and no-signaling conditions.
Like $V$, we denote $S(q_1,\dots,q_m,a_1,\dots,a_m)$ by $S(a_1,\dots,a_m\mid q_1,\dots,q_m)$,
and it corresponds to the probability with which the $m$ players answer $a_1,\dots,a_m$
under the condition that the questions sent to the players are $q_1,\dots,q_m$.
The \emph{normalization condition} requires
that for all $q_1,\dots,q_m\in Q$, $\sum_{a_1,\dots,a_m\in A}S(a_1,\dots,a_m\mid q_1,\dots,q_m)=1$.
The \emph{no-signaling condition} requires
that for any $1\le i\le m$, any $q_1,\dots,q_{i-1},q_{i+1},\dots,q_m\in Q$
and any $a_1,\dots,a_{i-1},a_{i+1},\dots,a_m\in A$,
the sum $\sum_{a_i\in A}S(a_1,\dots,a_m\mid q_1,\dots,q_m)$ does not depend on the choice of $q_i\in Q$.
The \emph{winning probability} $w(S)$ of the strategy $S$ is given by
\[
  w(S)=\sum_{q_1,\dots,q_m\in Q} \pi(q_1,\dots,q_m)
  \sum_{a_1,\dots,a_m\in A} S(a_1,\dots,a_m\mid q_1,\dots,q_m) V(a_1,\dots,a_m\mid q_1,\dots,q_m).
\]

A behavior is said to be \emph{classical} (resp.\ \emph{entangled})
if it is realized by a classical (resp.\ entangled) strategy.
In a \emph{classical (resp.\ entangled) strategy},
$m$ computationally unlimited players share a random source (resp.\ a quantum state),
and each of them decides his/her answer according to his/her question and the shared random source (resp.\ state).
It is well-known that for any classical strategy,
there exists an equivalent classical strategy without shared random source.
Also for any entangled strategy,
there exists an equivalent entangled strategy
where the players share a pure state and their measurements are projective.

The \emph{classical (resp.\ entangled, no-signaling) value} of $G$, denoted by $\wc(G)$ (resp.\ $\wq(G)$, $\wns(G)$),
is the supremum of the winning probabilities over all classical (resp.\ entangled, no-signaling) behaviors for $G$.
Clearly we have $0\le\wc(G)\le\wq(G)\le\wns(G)\le1$.
The classical and no-signaling values of $G$ can be attained for all games $G$,
but it is not known whether the entangled value of $G$ can be attained for all games $G$.

An $m$-prover one-round interactive proof system
is a pair $(M_\pi,M_V)$ of two Turing machines.
A probabilistic Turing machine $M_\pi$ is given an input string $x$
and outputs $m$ questions $q_1,\dots,q_m$.
A deterministic Turing machine $M_V$ is given an input $x$
and $2m$ strings $q_1,\dots,q_m,a_1,\dots,a_m$,
and outputs $0$ or $1$.
Both $M_\pi$ and $M_V$ must run in time polynomial in $\abs{x}$.
This system naturally defines an $m$-player game $G_x$ for each input string $x$.

Let $c,s\colon\ZZ_{\ge0}\to[0,1]$.
An $m$-prover one-round interactive proof system
is said to have \emph{completeness acceptance probability $c(n)$
for a language $L$ for classical (resp.\ entangled) provers}
when $\wc(G_x)\ge c(\abs{x})$ (resp.\ $\wq(G_x)\ge c(\abs{x})$) for all $x\in L$.
Similarly, it is said to have \emph{soundness acceptance probability $s(n)$
for a language $L$ for classical (resp.\ entangled) provers}
when $\wc(G_x)\le s(\abs{x})$ (resp.\ $\wq(G_x)\le s(\abs{x})$) for all $x\notin L$.

Let $\MIPstar_{c(n),s(n)}(m,1)$ denote the class of languages
having $m$-prover one-round interactive proof systems
with completeness and soundness acceptance probabilities $c(n)$ and $s(n)$
for entangled provers.

Let $\naPCP_{c(n),s(n)}(r(n),q(n))$ denote
the class of languages having PCP systems
with completeness and soundness acceptance probabilities $c(n)$ and $s(n)$
where the verifier reads $q(n)$ bits in a proof non-adaptively
using $r(n)$ random bits.

H\aa stad~\cite{Hastad01} gave the following characterizations of $\NP$ and $\NEXP$.

\begin{theorem}[H\aa stad~\cite{Hastad01}]  \label{theorem:hastad}
  For any constant $3/4<s<1$, $\NP=\bigcup_{c>0}\naPCP_{1,s}(c\log n,3)$ and $\NEXP=\bigcup_{p\in\poly}\naPCP_{1,s}(p,3)$.
\end{theorem}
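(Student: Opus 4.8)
The plan is to prove the theorem in two halves. The ``easy'' containments are $\bigcup_{c>0}\naPCP_{1,s}(c\log n,3)\subseteq\NP$ and $\bigcup_{p\in\poly}\naPCP_{1,s}(p,3)\subseteq\NEXP$; the ``hard'' ones are $\NP\subseteq\bigcup_{c>0}\naPCP_{1,3/4+\epsilon}(c\log n,3)$ and $\NEXP\subseteq\bigcup_{p\in\poly}\naPCP_{1,3/4+\epsilon}(p,3)$ for every $\epsilon>0$. Together, and using that $\naPCP_{1,s}(\cdot,3)$ is nondecreasing in $s$, these give the stated equalities for every $3/4<s<1$ (take $\epsilon=s-3/4$).

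For the easy direction, note that a verifier using $r(n)$ random bits and reading $3$ bits per run touches at most $3\cdot2^{r(n)}$ distinct proof locations over all random strings. A nondeterministic machine guesses the proof restricted to those locations, then deterministically enumerates all $2^{r(n)}$ random strings and counts the fraction of accepting runs. By perfect completeness and soundness $s<1$, the input lies in $L$ iff some guess makes the verifier accept with probability exceeding $s$. This runs in time $2^{O(r(n))}\cdot\poly(n)$, which is polynomial for $r(n)=O(\log n)$ and singly exponential for $r(n)=\poly(n)$, giving $\NP$ and $\NEXP$ respectively.

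For the hard direction I would follow H\aa stad's long-code method. From the PCP theorem (for $\NP$) and its polynomial-randomness analogue for $\NEXP$, together with the parallel-repetition theorem, one obtains for every constant $\delta>0$ a two-prover one-round \emph{projection} game --- equivalently a Label Cover instance --- with perfect completeness and soundness at most $\delta$, using $O(\log n)$ random bits (for $\NP$) or $\poly(n)$ random bits (for $\NEXP$), over answer alphabets of size $O_\delta(1)$ or at most $2^{\poly(n)}$. The PCP is taken to be, for each question of each prover, the (folded) Long Code of that prover's intended answer. The verifier picks a random edge with its projection constraint $\pi$, samples inputs to the two long-code tables from a carefully chosen correlated distribution, reads one bit from one table and two from the other, and applies a fixed $3$-bit predicate designed so that honest folded long codes always accept; folding guarantees perfect completeness. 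Expanding the acceptance probability in the Fourier basis of the two tables, the completeness term is exactly $1$, and one shows that if the test accepts with probability above $3/4+\epsilon$ then the heavy Fourier coefficients, pushed through $\pi$, yield a randomized labeling passing an $\Omega_\epsilon(1)$ fraction of edges; choosing $\delta$ below that threshold contradicts soundness of the Label Cover instance. Hence the $3$-query system has soundness at most $3/4+\epsilon$ with perfect completeness.

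The main obstacle is the Fourier-analytic soundness analysis of the $3$-bit test while \emph{keeping completeness exactly $1$}: the parity test that most easily drives soundness down to $1/2+\epsilon$ has completeness only $1-\epsilon$, so perfect completeness forces a predicate (H\aa stad's) whose soundness bottoms out at $3/4+\epsilon$, and the analysis must carefully bound the cross-character terms between the two long-code tables and extract coordinates of large influence to decode a Label Cover labeling. A secondary but necessary point is uniformity and scaling --- checking that the long-code length, the verifier's randomness, and the decoding all survive blowing the alphabet up to $2^{\poly(n)}$ and the randomness up to $\poly(n)$, so that the \emph{same} construction proves both characterizations. Combining the two directions completes the proof.
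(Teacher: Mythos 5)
This theorem is not proved in the paper at all: it is imported verbatim from H{\aa}stad~\cite{Hastad01}, so there is no internal proof to compare against. Your outline is essentially a reconstruction of the cited proof: the ``easy'' containments by enumerating the $2^{r(n)}$ random strings over a guessed proof are fine, and the ``hard'' direction (Label Cover from the PCP theorem plus parallel repetition, folded long codes, a fixed 3-bit predicate giving perfect completeness, Fourier-analytic decoding showing soundness $3/4+\epsilon$) is exactly H{\aa}stad's route; of course the entire technical weight sits in the Fourier analysis you only describe, but the approach is the right one and you correctly identify why perfect completeness forces the $3/4$ threshold rather than $1/2$.

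The one concrete flaw is in how you scale to $\NEXP$. You allow the Label Cover answer alphabet to grow to size $2^{\poly(n)}$ and propose to check that the long code ``survives'' this; it does not. The long code of an answer from an alphabet of size $2^{\poly(n)}$ has length $2^{2^{\poly(n)}}$, so even addressing a single position of such a table needs $2^{\poly(n)}$ bits, far beyond the $\poly(n)$ randomness budget of $\naPCP_{1,s}(\poly,3)$, and the verifier could not even write down its queries in polynomial time. The correct scaling keeps the alphabet \emph{constant} (depending only on the target soundness $\delta$ of the outer game): start from the scaled PCP theorem $\NEXP=\mathrm{PCP}[\poly,O(1)]$ (e.g.\ via succinct 3SAT/BFL with query reduction), oracularize to a two-prover projection game with constant answer length and $\poly(n)$ randomness, and apply a \emph{constant} number of parallel repetitions; then only the number of questions grows to $2^{\poly(n)}$, each long-code table has constant size, and the verifier spends $\poly(n)$ random bits choosing an edge plus $O(1)$ bits sampling the test, which is what the theorem requires. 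One must also verify that question sampling and constraint evaluation remain polynomial-time given the succinct description, but with that correction the same construction does yield both the $\NP$ and the $\NEXP$ characterizations.
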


It is noted that
applying Theorem~\ref{theorem:pcp-3provers} to the PCP systems
in Theorem~\ref{theorem:hastad} gives a slightly better soundness in
Corollaries~\ref{corollary:game-inapproximable} and
\ref{corollary:nexp-3provers}
(but polynomials remain polynomials).
This is not a significant improvement by itself,
but if the soundness bound
in Theorem~\ref{theorem:pcp-3provers} can be improved,
then applying it to Theorem~\ref{theorem:hastad} will become necessary
to obtain a better result on entangled provers.

\section{Commuting-operator provers}  \label{section:commuting-operator}

\subsection{Definition and basic properties}

Here we define a class of strategies called commuting-operator strategies,
which are a generalization of entangled strategies.
All the upper bounds of the entangled values of games proved in this paper
are actually valid even for this class.
A \emph{commuting-operator strategy} is a tuple $(\calH,\rho,\calM^{(i)}_q)$
of a Hilbert space $\calH$, a quantum state $\rho$ in $\calH$,
and a family of POVMs $\calM^{(i)}_q=(M^{(i)}_{q,a})_{a\in A}$ on the whole space $\calH$
for $1\le i\le m$, $q\in Q$
such that $M^{(i)}_{q,a}$ and $M^{(i')}_{q',a'}$ commute whenever $i\ne i'$:
$[M^{(i)}_{q,a},M^{(i')}_{q',a'}]=M^{(i)}_{q,a}M^{(i')}_{q',a'}-M^{(i')}_{q',a'}M^{(i)}_{q,a}=0$.
In this strategy,
$m$ players share a quantum state $\rho$,
and player~$i$ measures the state $\rho$ with $\calM^{(i)}_{q_i}$
depending on the query $q_i$ sent to him/her.
Then the joint probability of the answers $a_1,\dots,a_m$
under the condition that the questions are $q_1,\dots,q_m$ is given by
$S(a_1,\dots,a_m\mid q_1,\dots,q_m)=\tr\rho M^{(1)}_{q_1,a_1}\dotsm M^{(m)}_{q_m,a_m}$.
Such a behavior $S$ induced by a commuting-operator strategy
is called a \emph{commuting-operator behavior},
and the \emph{commuting-operator value} $\wQ(G)$ of a game $G$
is the supremum of the winning probabilities over all commuting-operator behaviors for $G$.

An entangled strategy in the usual sense with Hilbert spaces $\calH_1,\dots,\calH_m$
is a special case of commuting-operator strategies with Hilbert spaces $\calH=\calH_1\otimes\dots\otimes\calH_m$,
since for $i\ne i'$, POVMs on $\calH_i$ and POVMs on $\calH_{i'}$ commute element-wise
when they are viewed as POVMs on $\calH$.
This implies that $0\le\wc(G)\le\wq(G)\le\wQ(G)\le\wns(G)\le1$.

For the special cases of two-player binary-answer games where the
referee decides the result of the game depending only on the queries
and the XOR of the answers from the two players, the optimal
strategy for entangled players and the maximum acceptance
probability is given by optimizing certain inner products among
vectors~\cite{Tsirelson80}, and the entangled value of the game can
be computed efficiently by using semidefinite programming.
Tsirelson~\cite{Tsirelson80} also proved that this value does not
change if we replace the entangled players by commuting-operator
players. Tsirelson~\cite{Tsirelson06} generalized the equivalence of
the two models to the case of two players where the dimension of the
quantum state shared by the players is finite. However, it is not
known whether this equivalence holds for general two-player
binary-answer games.

If the outcomes of measurements are real numbers,
then the expected values of the product of the outcomes of $\calM^{(i)}_{q_i}$ for $i\in P\subseteq\{1,\dots,m\}$
is $\tr\rho\prod_{i\in P}X^{(i)}_{q_i}$ with observables $X^{(i)}=\sum_{a\in A}aM^{(i)}_{q,a}$.

The following simple observation relates the commutativity of observables
and unentangled players.

\begin{lemma} \label{lemma:commute-classical}
  If there is a commuting-operator strategy in a game $G$
  with acceptance probability $w$
  where all POVM operators $M^{(i)}_{q,a}$ commute,
  then $\wc(G)\ge w$.
\end{lemma}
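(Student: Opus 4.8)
The plan is to reduce the commuting-operator strategy to an ordinary classical (shared-randomness) strategy with the same winning probability; Lemma~\ref{lemma:commute-classical} then follows since $\wc(G)$ is by definition a supremum over all classical behaviors. The key observation is that if \emph{all} POVM operators $M^{(i)}_{q,a}$ commute with one another — not just across distinct provers, but within each prover as well — then the entire family $\{M^{(i)}_{q,a}\}$ is a commuting family of positive operators, hence can be simultaneously "diagonalized" in the sense that there is a common refinement of their spectral decompositions.

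First I would invoke the simultaneous spectral theorem for a commuting family of Hermitian operators: there is a projection-valued measure $E$ on some measurable space $\Omega$ such that every $M^{(i)}_{q,a} = \int_\Omega f^{(i)}_{q,a}(\omega)\, dE(\omega)$ for a measurable function $f^{(i)}_{q,a}\colon\Omega\to[0,1]$. (In the finite-dimensional case this is just the statement that commuting Hermitian matrices are simultaneously diagonalizable; $\Omega$ is then the common eigenbasis index set. One should be slightly careful in infinite dimensions, but the construction of the joint spectral measure for a commuting family of bounded self-adjoint operators is standard and can be cited; alternatively one works inside the abelian von Neumann algebra generated by the $M^{(i)}_{q,a}$ and uses Gelfand duality.) Because each $\calM^{(i)}_q = (M^{(i)}_{q,a})_{a\in A}$ is a POVM, we have $\sum_{a\in A} M^{(i)}_{q,a} = I$, so for ($E$-almost) every $\omega$ the vector $(f^{(i)}_{q,a}(\omega))_{a\in A}$ is a probability distribution over $A$.

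Next I would define the classical strategy: the shared randomness is $\omega\in\Omega$, drawn according to the probability measure $\mu(\cdot) = \tr\!\big(\rho\, E(\cdot)\big)$ (this is a genuine probability measure since $\rho$ is a state and $E$ a PVM with $E(\Omega)=I$). Upon receiving question $q$, prover~$i$ answers $a$ with probability $f^{(i)}_{q,a}(\omega)$, using independent private randomness. Then the probability of the answer tuple $(a_1,\dots,a_m)$ given questions $(q_1,\dots,q_m)$ is $\int_\Omega \prod_{i=1}^m f^{(i)}_{q_i,a_i}(\omega)\, d\mu(\omega) = \tr\!\big(\rho \prod_{i=1}^m M^{(i)}_{q_i,a_i}\big)$, where the last equality uses that the $M^{(i)}_{q_i,a_i}$ all live in the commutative algebra and so their product integrates against $E$ as the pointwise product of the $f$'s. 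This is exactly the commuting-operator behavior $S$, so the new classical strategy has winning probability $w(S) = w$, giving $\wc(G)\ge w$.

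The main obstacle is the measure-theoretic bookkeeping in the infinite-dimensional case — constructing the joint spectral measure $E$ for the (possibly infinite, since $\calH$ need not be separable or finite-dimensional) commuting family, and checking that $\prod_i M^{(i)}_{q_i,a_i} = \int_\Omega \prod_i f^{(i)}_{q_i,a_i}\, dE$ and that $\mu$ is well-defined. In fact everything is cleanest if one simply passes to the abelian von Neumann algebra $\calA$ generated by $\{M^{(i)}_{q,a}\}$ together with the identity: by the spectral theorem for abelian von Neumann algebras, $\calA \cong L^\infty(\Omega,\nu)$ for some localizable measure space, the $M^{(i)}_{q,a}$ become functions $f^{(i)}_{q,a}\in L^\infty(\Omega,\nu)$ taking values in $[0,1]$ with $\sum_a f^{(i)}_{q,a} = 1$ a.e., and $\omega\mapsto \tr(\rho\, E(d\omega))$ is a probability measure on $\Omega$; the rest is then the elementary computation above. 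If one is content with the finite-dimensional statement (which suffices for all applications in this paper), the proof collapses to: simultaneously diagonalize, read off the conditional distributions from the diagonal entries, and use the diagonal entries of $\rho$ in that common eigenbasis as the shared-randomness distribution.
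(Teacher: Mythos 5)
Your proof is correct, but it takes a genuinely different route from the paper's. The paper avoids spectral theory altogether: using commutativity it forms, for each complete table of answers $\vct{a}=(a^{(i)}_q)_{i,q}$, the operator $M(\vct{a})=\prod_{i=1}^m\prod_{q\in Q}M^{(i)}_{q,a^{(i)}_q}$, observes that these are positive Hermitian operators summing to $I$ (a single joint POVM with a \emph{finite} outcome set), and lets the players share an outcome $\vct{a}$ drawn with probability $\bra\psi M(\vct{a})\ket\psi$; player $i$ then deterministically answers $a^{(i)}_q$ on question $q$, and summing out the unused entries reproduces the commuting-operator behavior exactly. Your argument instead passes to the abelian von Neumann algebra generated by the $M^{(i)}_{q,a}$ (simultaneous diagonalization in finite dimensions), takes $\mu(\cdot)=\tr(\rho E(\cdot))$ as shared randomness, and has each prover answer $a$ with probability $f^{(i)}_{q,a}(\omega)$; the product-to-product property of the functional calculus then gives the same behavior. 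Both are valid; what the paper's construction buys is that it is completely elementary and dimension-free (no joint spectral measure, no separability or localizability caveats, since $m$, $Q$, $A$ are finite), whereas your route makes the conceptual point more transparent --- a commuting family is literally a family of random variables on a common sample space --- at the cost of heavier machinery. Two small remarks: your provers use private randomness on top of the shared $\omega$, which should be folded into the shared random source to match the paper's definition of a classical strategy (harmless, but worth saying), and in the finite-dimensional shortcut you should note, as you implicitly do, that the off-diagonal parts of $\rho$ in the common eigenbasis are irrelevant because all operators being evaluated are diagonal in that basis.
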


\begin{proof}
  Intuitively, the lemma holds since one can measure
  all the POVMs $\calM^{(i)}_q$ simultaneously because of commutativity.
  Details follow.

  Let $a^{(i)}_q\in A$ for $1\le i\le m$ and $q\in Q$,
  and let
  $\vct{a}=(a^{(1)}_1,\dots,a^{(1)}_{\abs{Q}},a^{(2)}_1,\dots,a^{(2)}_{\abs{Q}},\dots,
            a^{(m)}_1,\dots,a^{(m)}_{\abs{Q}})$.
  We define a linear operator
  \[
    M(\vct{a})=\prod_{i=1}^m\prod_{q\in Q}M^{(i)}_{q,a^{(i)}_q}.
  \]
  By commutativity of the observables,
  $M(\vct{a})$ is Hermitian and nonnegative definite for any $\vct{a}$,
  and $\sum_{\vct{a}}M(\vct{a})=I$.

  We construct a classical strategy
  with acceptance probability $w$.
  The players share
  $a^{(1)}_1,\dots,a^{(1)}_{\abs{Q}},\allowbreak\dots,\allowbreak
   a^{(m)}_1,\dots,a^{(m)}_{\abs{Q}}\in A$
  with probability
  $\bra\psi M(\vct{a})\ket\psi$.
  The $i$th player answers $a^{(i)}_q$ when asked query $q$.
  By simple calculation, the probability distribution of the answers
  conditioned on arbitrary set of $m$ queries in the classical strategy
  is exactly equal to that in the original commuting-operator strategy.
\end{proof}

Like entangled strategies, for any commuting-operator strategy,
there exists an equivalent commuting-operator strategy
with a pure shared quantum state and projection-valued measures (PVMs).

\subsection{Symmetrization}

Here we prove that we can assume the players' optimal strategy is symmetric
under any permutations of the players.
A precise definition of the symmetry of a commuting-operator strategy follows.

Let $G=(\pi,V)$ be an $m$-player game.
$G$ is said to be \emph{symmetric}
if the following conditions are satisfied.
\begin{enumerate}[(i)]
\item
  $\pi$ is symmetric:
  $\pi(q_{\sigma(1)},\dots,q_{\sigma(m)})=\pi(q_1,\dots,q_m)$
  for any permutation $\sigma\in\calS_m$.
\item
  $V$ is symmetric under permutations of players:
  $V(a_{\sigma(1)},\dots,a_{\sigma(m)}\mid q_{\sigma(1)},\dots,q_{\sigma(m)})=V(a_1,\dots,a_m\mid q_1,\dots,q_m)$
  for any permutation $\sigma\in\calS_m$.
\end{enumerate}

Now we define the symmetry of a commuting-operator strategy. Let $\calH$
be the Hilbert space shared by the players, let $\ket\Psi\in\calH$ be
the state shared by the players, and let
$\calM^{(i)}_q=(M^{(i)}_{q,a})_{a\in A}$ be the $A$-valued PVM
measured by the player~$i$ when asked the question~$q$. The
strategy is \emph{symmetric} if there exists a unitary representation $\Phi$
of the symmetric group $\calS_m$ in $\calH$
such that $\Phi(\sigma)\ket\Psi=\ket\Psi$ and
$\Phi(\sigma^{-1})M^{(\sigma(i))}_{q,a}\Phi(\sigma)\ket\varphi=M^{(i)}_{q,a}\ket\varphi$
for any permutation $\sigma\in\calS_m$ and any state
$\ket\varphi\in\calH$.

This definition is a natural extension of the usual definition of symmetric entangled strategy
in the following sense:
consider an entangled strategy on a Hilbert space $\calH=\calK^{\otimes m}$, that is,
$\ket\Psi\in\calK^{\otimes n}$ is a state shared by the players and
$M^{(i)}_{q,a}=I\otimes\dots\otimes I\otimes M^{\prime(i)}_{q,a}\otimes I\otimes\dots\otimes I$
only acts on the $i$th tensor factor of $\calH$.
This strategy is symmetric as a commuting-operator strategy
with respect to the representation $\Phi$ of $\calS_m$ in $\calH$ defined by
$\Phi(\sigma)(\ket{\varphi_1}\otimes\dots\otimes\ket{\varphi_m})=\ket{\varphi_{\sigma^{-1}(1)}}\otimes\dots\otimes\ket{\varphi_{\sigma^{-1}(m)}}$
if and only if $M^{\prime(1)}_q=\dots=M^{\prime(m)}_q$ for all $q\in Q$.

\begin{lemma}  \label{lemma:symmetrize}
  In an $m$-player one-round symmetric game,
  if there exists a commuting-operator strategy achieving winning probability $p$,
  then there also exists a symmetric commuting-operator strategy achieving the same winning probability $p$.
\end{lemma}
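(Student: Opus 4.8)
The plan is to average the given commuting-operator strategy over the symmetric group $\calS_m$ to produce a symmetric one with the same winning probability. Concretely, start from a commuting-operator strategy $(\calH,\ket\Psi,\calM^{(i)}_q)$ achieving winning probability $p$; by the remark preceding the lemma we may assume $\ket\Psi$ is pure and the measurements are PVMs. First I would build the new Hilbert space $\calH'=\calH^{\otimes m}$ indexed by the ``slots'' of the permutation, together with the permutation action $\Phi(\sigma)$ that permutes the tensor factors, i.e.\ $\Phi(\sigma)(\ket{\varphi_1}\otimes\dots\otimes\ket{\varphi_m})=\ket{\varphi_{\sigma^{-1}(1)}}\otimes\dots\otimes\ket{\varphi_{\sigma^{-1}(m)}}$; this is a unitary representation of $\calS_m$. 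Define the new shared state $\ket{\Psi'}=\frac{1}{\sqrt{m!}}\sum_{\sigma\in\calS_m}\Phi(\sigma)(\ket\Psi^{\otimes m})$ (up to normalization), which by construction satisfies $\Phi(\tau)\ket{\Psi'}=\ket{\Psi'}$; a cleaner alternative, which I would probably actually use, is to take $\calH'=\bigoplus_{\sigma\in\calS_m}\calH$ indexed by permutations, with $\ket{\Psi'}=\frac{1}{\sqrt{m!}}\sum_\sigma \ket{\sigma}\otimes\ket\Psi$ and $\Phi(\tau)\ket{\sigma}=\ket{\tau\sigma}$ acting by relabeling summands (and doing nothing on the $\calH$ factor), which makes the invariance immediate.

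Next I would define the symmetrized measurement operators. On the summand labeled $\sigma$, player~$i$ when asked $q$ should measure using $M^{(\sigma^{-1}(i))}_{q,a}$ — that is, $M'^{(i)}_{q,a}=\bigoplus_{\sigma}M^{(\sigma^{-1}(i))}_{q,a}$ acting blockwise on $\calH'=\bigoplus_\sigma\calH$. The key points to check are: (1) these are still valid PVMs (clear, since each block is); (2) the commutation condition holds — for $i\ne i'$ the operators $M'^{(i)}_{q,a}$ and $M'^{(i')}_{q',a'}$ commute because on each block labeled $\sigma$ we have $\sigma^{-1}(i)\ne\sigma^{-1}(i')$, so $M^{(\sigma^{-1}(i))}_{q,a}$ and $M^{(\sigma^{-1}(i'))}_{q',a'}$ commute by the hypothesis on the original strategy; (3) the equivariance relation $\Phi(\tau^{-1})M'^{(\tau(i))}_{q,a}\Phi(\tau)=M'^{(i)}_{q,a}$ holds — conjugating the block decomposition by the relabeling $\Phi(\tau)$ sends the block at $\sigma$ to the block at $\tau\sigma$ and replaces the index $(\tau\sigma)^{-1}(\tau(i))=\sigma^{-1}(i)$, giving back exactly $M'^{(i)}_{q,a}$; I would verify this identity carefully since the index bookkeeping is where a sign or inverse error would creep in.

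Finally I would compute the winning probability of the new strategy and show it equals $p$. The probability that the players answer $(a_1,\dots,a_m)$ on questions $(q_1,\dots,q_m)$ in the new strategy is $\bra{\Psi'}M'^{(1)}_{q_1,a_1}\dotsm M'^{(m)}_{q_m,a_m}\ket{\Psi'}$, which splits as $\frac{1}{m!}\sum_\sigma\bra\Psi M^{(\sigma^{-1}(1))}_{q_1,a_1}\dotsm M^{(\sigma^{-1}(m))}_{q_m,a_m}\ket\Psi$ because the cross terms between different blocks vanish (the $M'$ are block-diagonal in $\sigma$). Relabeling $i\mapsto\sigma(i)$ inside each summand, this is $\frac{1}{m!}\sum_\sigma S(a_{\sigma(1)},\dots,a_{\sigma(m)}\mid q_{\sigma(1)},\dots,q_{\sigma(m)})$, the uniform average of the original behavior over permuted question/answer tuples. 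Plugging into the winning-probability formula and using that $\pi$ and $V$ are symmetric (so the game is invariant under the simultaneous relabeling), each term of the average contributes exactly $w(S)=p$, hence the new strategy also has winning probability $p$. The main obstacle I anticipate is purely the notational care in step~2 — getting the equivariance identity with the right placement of $\sigma$ versus $\sigma^{-1}$ and $\tau$ versus $\tau^{-1}$ — rather than any conceptual difficulty; everything else is a direct averaging argument.
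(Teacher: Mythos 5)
Your proposal is correct and is essentially the paper's own proof: the paper also averages over $\calS_m$ by adjoining a register $\CC^{m!}$ spanned by $\{\ket\tau\}$, sharing the uniform superposition, and letting player $i$ act as player $\tau(i)$ on the block labeled $\tau$, which is your direct-sum construction up to relabeling each basis vector $\sigma$ by its inverse (left versus right multiplication in the regular representation). The only cosmetic gap is that in the final probability computation the relabeling $i\mapsto\sigma(i)$ implicitly reorders the product of measurement operators, which is justified because they carry distinct player indices and hence commute.
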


\begin{proof}
  The lemma can be proved
  by constructing a symmetric strategy
  by averaging over all the permutations on provers.
  Detail follow.

  Let $(\calH,\ket\Psi,\calM^{(i)}_q)$ be a (not necessarily symmetric) commuting-operator strategy
  achieving acceptance probability $p$.
  Note that for any permutation $\tau\in\calS_m$,
  the strategy $(\calH,\ket\Psi,\calM^{(\tau(i))}_q)$ also achieves the same probability $p$
  because of the symmetry of the game.

  We construct a symmetric strategy $(\calK,\ket{\Psi'},\calN^{(i)}_q)$
  from the strategy $(\calH,\ket\Psi,\calM^{(i)}_q)$.
  Let $\calK=\calH\otimes\CC^{m!}$.
  We regard $\{\ket\tau\mid\tau\in\calS_m\}$ as an orthonormal basis of $\CC^{m!}$.
  We define a unitary representation $\Phi$
  of the symmetric group $\calS_m$ in $\calK$ as
  $\Phi(\sigma)(\ket\varphi\otimes\ket\tau)=\ket\varphi\otimes\ket{\tau\sigma^{-1}}$.
  Now we define $\ket{\Psi'}\in\calK$ by
  \[
    \ket{\Psi'}=\ket\Psi\otimes\frac{1}{\sqrt{m!}}\sum_{\tau\in\calS_m}\ket\tau.
  \]
  The player~$i$ in the constructed symmetric strategy measures the $\CC^{m!}$-part of the state,
  and acts just like the player~$\tau(i)$ in the original strategy:
  \[
    N^{(i)}_{q,a}=\sum_{\tau\in\calS_m} M^{(\tau(i))}_{q,a} \otimes \ket\tau\bra\tau.
  \]
  This strategy is a commuting-operator strategy since, for $i\ne i'$,
  \[
    \left[N^{(i)}_{q,a},N^{(i')}_{q',a'}\right]
    =\sum_{\tau\in\calS_m} \left[M^{(\tau(i))}_{q,a},M^{(\tau(i'))}_{q',a'}\right] \otimes \ket\tau\bra\tau=0.
  \]
  The symmetry of the strategy is verified as follows:
  \[
    \Phi(\sigma)\ket{\Psi'}=\ket\Psi\otimes\frac{1}{\sqrt{m!}}\sum_{\tau\in\calS_m}\ket{\tau\sigma^{-1}}=\ket{\Psi'}
  \]
  and
  \begin{align*}
    \Phi(\sigma^{-1})N^{(\sigma(i))}_{q,a}\Phi(\sigma)(\ket\varphi\otimes\ket\tau)
    &=\Phi(\sigma^{-1})N^{(\sigma(i))}_{q,a}\left(\ket\varphi\otimes\ket{\tau\sigma^{-1}}\right) \\
    &=\Phi(\sigma^{-1})\left(M^{(\tau(i))}_{q,a}\ket\varphi\otimes\ket{\tau\sigma^{-1}}\right) \\
    &=M^{(\tau(i))}_{q,a}\ket\varphi\otimes\ket\tau \\
    &=N^{(i)}_{q,a}(\ket\varphi\otimes\ket\tau).
  \end{align*}
  In the constructed strategy,
  if measurement of the $\CC^{m!}$-part of the shared state results in $\tau\in\calS_m$,
  the players just follow the strategy $(\calH,\ket\Psi,\calM^{(\tau(i))}_q)$,
  and therefore the strategy achieves winning probability $p$.
\end{proof}

\section{\boldmath $n$-party generalization of Tsirelson's bound based on $n\times n$ Magic Square}
  \label{section:generalized-tsirelson}

\subsection{Definitions and basic facts}

We define an $n$-player game for the $n\times n$ Magic Square as follows.
Consider an $n\times n$ matrix with $\{0,1\}$-entries not known to the referee.
The referee chooses one row or column randomly and uniformly.
Then he assigns the $n$ cells on the chosen row or column to the $n$ players
one-to-one randomly and uniformly,
and queries the content of each cell to the corresponding player.
Every player answers either $0$ or $1$.
The players win if and only if the sum of the $n$ answers is even,
except that, when the referee chose the column $n$,
the players win if and only if the sum of the $n$ answers is odd.
We call this game the \emph{$n$-player Magic Square game} and denote $\MS_n$.

We consider a variant of this game.
Let $L=(L_{jk})$ be a Latin square of order $n$.
That is, $L_{jk}\in\{1,\dots,n\}$
and every row or column contains $1,\dots,n$ exactly once.
We define the $n$-player Magic Square game with assignment $L$, denoted $\MS_n(L)$, as follows.
The referee chooses one row or column randomly and uniformly.
Then he queries the contents of the $n$ cells on the chosen row or column
to the $n$ players, but this time he assigns the cells to the players according to $L$:
the referee asks the $L_{jk}$-th player the content of the cell at row~$j$, column~$k$.
The rest is the same.

It is easy to verify that $\wc(\MS_n)=\wc(\MS_n(L))=1-1/(2n)$ for any Latin squares $L$,
and this classical bound corresponds to a sequence of Bell inequalities.
The Bell inequality corresponding to the two-player Magic Square game with an assignment
is known as the Clauser--Horne--Shimony--Holt (CHSH) inequality~\cite{CHSH69},
and the maximum winning probability $\wq(\MS_2(L))=\wQ(\MS_2(L))=(2+\sqrt2)/4\approx0.85$
for entangled players and even commuting-operator players
follows from the quantum version of the CHSH inequality called Tsirelson's bound~\cite{Tsirelson80}.

The following theorem states that an upper bound for the value of the game $\MS_n(L)$
is also valid for $\MS_n$.

\begin{theorem}  \label{theorem:ms-assignemt}
  For any Latin square $L$ of order $n$,
  $\wq(\MS_n)\le\wq(\MS_n(L))$ and $\wQ(\MS_n)\le\wQ(\MS_n(L))$.
\end{theorem}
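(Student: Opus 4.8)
The plan is to turn any entangled (resp.\ commuting-operator) strategy for $\MS_n$ into one of the same kind for $\MS_n(L)$ with the same winning probability, by letting the players use shared randomness to permute which of the $n$ ``roles'' of the $\MS_n$-strategy each of them carries out. Fix a strategy $S$ for $\MS_n$ with operators $M^{(i)}_{q,a}$ and shared state $\rho$, of value $p$. For $\sigma\in\calS_n$ let $S_\sigma$ be the strategy in which player $i$ behaves exactly like player $\sigma(i)$ did in $S$, i.e.\ applies $M^{(\sigma(i))}_{q,a}$: in the entangled case player $i$ simply takes over the Hilbert space that player $\sigma(i)$ used, so $S_\sigma$ is again a genuine entangled strategy; in the commuting-operator case $M^{(\sigma(i))}$ and $M^{(\sigma(i'))}$ still commute for $i\ne i'$, since $\sigma(i)\ne\sigma(i')$. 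The strategy I propose for $\MS_n(L)$ is: draw $\sigma\in\calS_n$ uniformly as shared randomness and run $S_\sigma$. Since entangled (resp.\ commuting-operator) behaviors are closed under convex combinations, this is once more an entangled (resp.\ commuting-operator) strategy.

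Next I would compute the value of this strategy on $\MS_n(L)$. Condition on the referee picking a fixed line $\ell$ (a row or a column). In $\MS_n(L)$ the cell $c$ on $\ell$ is sent to player $\lambda_\ell(c)$, where $\lambda_\ell$ reads off the corresponding entry of $L$; because $L$ is a Latin square, $\lambda_\ell$ is a \emph{bijection} from the $n$ cells of $\ell$ onto $\{1,\dots,n\}$. Hence in a run of $S_\sigma$ on $\MS_n(L)$ conditioned on $\ell$, cell $c$ is measured with $M^{(\sigma(\lambda_\ell(c)))}_{c,\cdot}$, so this run has exactly the same joint distribution of (cell, answer) pairs as a run of $S$ on $\MS_n$ conditioned on line $\ell$ and on the cell-to-player assignment being $\sigma\circ\lambda_\ell$. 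Averaging over the uniform $\sigma$ and using that $\sigma\mapsto\sigma\circ\lambda_\ell$ carries the uniform distribution on $\calS_n$ to the uniform distribution over cell-to-player bijections (as $\lambda_\ell$ is a fixed bijection), the winning probability of the constructed strategy conditioned on $\ell$ equals that of $S$ on $\MS_n$ conditioned on $\ell$. The distribution over $\ell$ is identical in the two games, and the acceptance predicate depends only on $\ell$ and the parity of the answers, which is unchanged by relabelling players; summing over $\ell$ therefore gives winning probability exactly $p$ on $\MS_n(L)$. Taking the supremum over $S$ yields both $\wq(\MS_n)\le\wq(\MS_n(L))$ and $\wQ(\MS_n)\le\wQ(\MS_n(L))$.

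The routine but essential points are: (i) a convex combination of entangled (resp.\ commuting-operator) behaviors is again of that type --- for commuting operators via a block-diagonal direct sum of the ingredient strategies, and for entangled strategies by additionally noting that a classical shared coin is a degenerate entangled resource; and (ii) that $L$ being a Latin square is precisely what makes each $\lambda_\ell$ a bijection, so that $\sigma\circ\lambda_\ell$ is uniform. I expect the one place that needs a little care is (i) in the entangled case: one must check that permuting which player holds which tensor factor of the shared state, and then mixing over all such permutations, stays within the class of entangled strategies. (For the commuting-operator bound one could instead first invoke Lemma~\ref{lemma:symmetrize} to make $S$ symmetric and then run $S$ unchanged on $\MS_n(L)$, using that for a symmetric strategy $\avg{\prod_c X^{(\beta(c))}_c}$ is independent of the bijection $\beta$; but the relabelling argument above handles both models at once and does not even use the symmetry of $\MS_n$.)
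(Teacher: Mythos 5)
Your proposal is correct and is essentially the paper's own argument: both convert an $\MS_n$ strategy into an $\MS_n(L)$ strategy by drawing a uniformly random $\sigma\in\calS_n$ and letting player $i$ play the role of player $\sigma(i)$, using that each line's Latin-square assignment is a bijection so the induced role assignment is uniform. The only difference is presentational: the paper realizes the shared randomness explicitly with an auxiliary quantum register (and the operators $N^{(i)}_{jk,a}=\sum_\sigma M^{(\sigma(i))}_{jk,a}\otimes\ket\sigma\bra\sigma$), whereas you invoke convexity of the entangled and commuting-operator behavior classes.
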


\begin{proof}
  First we prove that $\wq(\MS_n)\le\wq(\MS_n(L))$.
  Consider an arbitrary entangled strategy $S$ in the game $\MS_n$.
  We construct an entangled strategy $S'$ in the game $\MS_n(L)$
  with the same winning probability as $S$.

  Let $\ket\varphi\in\calH_1\otimes\dots\otimes\calH_n$
  be the state shared by the players in $S$.
  Without loss of generality, we assume that $\calH_1=\dots=\calH_n$.
  In $S'$, the players share the state
  \[
    \ket{\varphi'}=\frac{1}{\sqrt{n!}}\sum_{\sigma\in\calS_n}
      U_\sigma\ket\varphi\otimes\ket{\sigma(1)}\otimes\dots\otimes\ket{\sigma(n)}
      \in(\calH_1\otimes\dots\otimes\calH_n)\otimes(\CC^n)^{\otimes n}
      \cong(\calH_1\otimes\CC^n)\otimes\dotsm\otimes(\calH_n\otimes\CC^n),
  \]
  where $\calS_n$ is the symmetric group on $\{1,\dots,n\}$
  and $U_\sigma$ is the unitary operator on $\calH_1\otimes\dots\otimes\calH_n$ defined by
  $U_\sigma(\ket{\varphi_1}\otimes\dots\otimes\ket{\varphi_n})
   =\ket{\varphi_{\sigma(1)}}\otimes\dots\otimes\ket{\varphi_{\sigma(n)}}$.
  Every player $i$ holds the part of $\ket{\varphi'}$
  corresponding to the space $\calH_i\otimes\CC^n$.
  When asked the content of the cell at row~$j$, column~$k$,
  the player $i=L_{jk}$ measures the $\CC^n$-part of $\ket{\varphi'}$ in the computational basis
  to obtain the value of $\sigma(i)$,
  and acts like the player $\sigma(i)$ in $S$.
  This achieves the same winning probability as $S$.

  The inequality $\wQ(\MS_n)\le\wQ(\MS_n(L))$ can be proved similarly.
  Let $S$ be a commuting-operator strategy in $\MS_n$.
  Let $\ket\varphi\in\calH$ be the state shared by the players in $S$,
  and $\calM^{(i)}_{jk}=(M^{(i)}_{jk,a})_{a\in\{0,1\}}$ be the POVM measured by player~$i$
  when he is asked the content of the cell at row~$j$, column~$k$.
  Now we consider $\CC^{n!}$ as a Hilbert space spanned by an orthonormal basis
  $\{\ket\sigma\mid\sigma\in\calS_n\}$.
  In a strategy $S'$ for $\MS_n(L)$, the commuting-operator players share the state
  \[
    \ket\varphi\otimes\frac{1}{\sqrt{n!}}\sum_{\sigma\in\calS_n}\ket\sigma
      \in\calH\otimes\CC^{n!}.
  \]
  When asked the content of the cell at row~$j$, column~$k$,
  the player $i=L_{jk}$ measures $\ket{\varphi'}$ according to the POVM
  \[
    N^{(i)}_{jk,a}=\sum_{\sigma\in\calS_n}M_{jk,a}^{(\sigma(i))}\otimes\ket\sigma\bra\sigma.
  \]
  Note that if $L_{jk}=i\ne i'=L_{j'k'}$, then $N^{(i)}_{jk,a}$ and $N^{(i')}_{j'k',a'}$ commute
  as required in the commuting-operator model since
  \[
    \left[N^{(i)}_{jk,a},N^{(i')}_{j'k',a'}\right]
    =\sum_{\sigma\in\calS_n}\left[M_{jk,a}^{(\sigma(i))},M_{j'k',a'}^{(\sigma(i'))}\right]\otimes\ket\sigma\bra\sigma=0.
    \qedhere
  \]
\end{proof}

\subsection{A strategy for entangled players}

\begin{theorem}  \label{theorem:strategy}
  There exists an entangled strategy
  in the $n$-player Magic Square game
  with winning probability $(1+\cos(\pi/(2n)))/2$.
  That is, $\wq(\MS_n)\ge(1+\cos(\pi/(2n)))/2$.
\end{theorem}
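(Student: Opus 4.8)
The plan is to exhibit an explicit entangled strategy built from the $n$-qubit GHZ state and single-qubit measurements in the $XY$-plane of the Bloch sphere. First I would have the $n$ players share $\ket{\mathrm{GHZ}_n}=\tfrac{1}{\sqrt2}\bigl(\ket{0}^{\otimes n}+\ket{1}^{\otimes n}\bigr)$, one qubit per player, and fix a real angle $\theta_{j,k}$ for each cell $(j,k)$ (to be chosen below). When a player is queried ``cell $(j,k)$'', she measures her qubit with the $\pm1$-valued observable $X_{j,k}=\cos\theta_{j,k}\,\sigma_X+\sin\theta_{j,k}\,\sigma_Y$ and answers $0$ or $1$ according to whether the outcome is $+1$ or $-1$. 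Since $\ket{\mathrm{GHZ}_n}$ is invariant under permutations of its qubits and every player uses the same cell-indexed observables, the joint distribution of outcomes in a given context depends only on the set of cells of that context, not on the random matching of cells to players; hence this is a well-defined strategy for $\MS_n$, and it remains only to compute its value and optimize the angles.

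The computational core is the identity
\[
  \bra{\mathrm{GHZ}_n}\,\bigotimes_{m=1}^{n}\bigl(\cos\phi_m\,\sigma_X+\sin\phi_m\,\sigma_Y\bigr)\,\ket{\mathrm{GHZ}_n}
  =\cos\Bigl(\sum_{m=1}^{n}\phi_m\Bigr),
\]
which I would prove by writing $\cos\phi\,\sigma_X+\sin\phi\,\sigma_Y=e^{-i\phi\sigma_Z/2}\,\sigma_X\,e^{i\phi\sigma_Z/2}$, so that the tensor product equals $V^{\dagger}\sigma_X^{\otimes n}V$ with $V=\bigotimes_m e^{i\phi_m\sigma_Z/2}$; then $V\ket{\mathrm{GHZ}_n}=\tfrac{1}{\sqrt2}\bigl(e^{i\Phi/2}\ket{0}^{\otimes n}+e^{-i\Phi/2}\ket{1}^{\otimes n}\bigr)$ with $\Phi=\sum_m\phi_m$, and a one-line expansion of $\bra{V\mathrm{GHZ}_n}\sigma_X^{\otimes n}\ket{V\mathrm{GHZ}_n}$ yields $\tfrac12(e^{i\Phi}+e^{-i\Phi})=\cos\Phi$. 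Granting this, the expected product of the answer-signs in the context of row $j$ equals $\cos\bigl(\sum_k\theta_{j,k}\bigr)$, and in the context of column $k$ it equals $\cos\bigl(\sum_j\theta_{j,k}\bigr)$; since the product of signs is $+1$ exactly when the sum of the answers is even, the winning probability in that context is $\tfrac12\bigl(1+\cos(\cdot)\bigr)$ for a row or one of columns $1,\dots,n-1$, and $\tfrac12\bigl(1-\cos(\cdot)\bigr)$ for column $n$.

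Finally I would choose the angles so that every row sums to $\pi/(2n)$, every column $k\le n-1$ sums to $-\pi/(2n)$, and column $n$ sums to $\pi-\pi/(2n)$; for example $\theta_{j,k}=-\pi/(2n^{2})$ for $k\le n-1$ and $\theta_{j,n}=\pi/n-\pi/(2n^{2})$ is readily checked to be such a solution. Using $\cos(\pi-\pi/(2n))=-\cos(\pi/(2n))$, each of the $2n$ contexts is then won with probability exactly $\tfrac12\bigl(1+\cos(\pi/(2n))\bigr)$, so the strategy has value $\tfrac12\bigl(1+\cos(\pi/(2n))\bigr)$, giving $\wq(\MS_n)\ge(1+\cos(\pi/(2n)))/2$. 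I do not anticipate a real obstacle: the two points needing care are (i) the realization that one must measure in the $XY$-plane, since real observables or $\sigma_Z$-type measurements make the relevant expectation degenerate for odd $n$, and (ii) the verification that the random cell-to-player assignment is irrelevant, which is precisely the permutation-invariance of $\ket{\mathrm{GHZ}_n}$ combined with the symmetry of $\cos(\sum_m\phi_m)$ in its arguments.
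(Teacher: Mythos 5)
Your proposal is correct and is essentially the paper's own construction: up to applying the same fixed local unitary to every qubit, your GHZ state with $XY$-plane observables $\cos\theta\,\sigma_X+\sin\theta\,\sigma_Y$ is the paper's state $\ket{\varphi_n}$ with its observables $Z_\theta$, your identity $\bra{\mathrm{GHZ}_n}\bigotimes_m(\cos\phi_m\sigma_X+\sin\phi_m\sigma_Y)\ket{\mathrm{GHZ}_n}=\cos\bigl(\sum_m\phi_m\bigr)$ plays the role of Lemma~\ref{lemma:state}, and your angles satisfy the same row/column-sum constraints ($\pi/(2n)$ per row, $-\pi/(2n)$ per column except $\pi-\pi/(2n)$ for the last). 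The only cosmetic differences are that you prove the correlation identity by conjugation with $Z$-rotations instead of induction, pick a different (equally valid) solution of the angle constraints, and spell out the permutation-invariance argument for the random cell-to-player assignment that the paper leaves implicit.
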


We define an $n$-qubit pure state
$\ket{\varphi_n}\in(\CC^2)^{\otimes n}$ as
\[
  \vert\varphi_n\rangle = \frac{1}{2^{(n-1)/2}}
  \Bigl( \sum_{\substack{x\in\{0,1\}^n \\ W(x)\equiv0 \bmod 4}}\ket{x}
        -\sum_{\substack{x\in\{0,1\}^n \\ W(x)\equiv2 \bmod 4}}\ket{x}\Bigr),
\]
where $W(x)$ is the number of $1$'s in $x\in\{0,1\}^n$.

We denote by $Z_\theta$ the $\pm1$-valued observable
represented by the $2\times2$ Hermitian matrix
\begin{align*}
  Z_\theta &= \begin{pmatrix}
    \cos\theta &  \sin\theta \\
    \sin\theta & -\cos\theta
  \end{pmatrix} \\
  &= \begin{pmatrix}
    \cos(\theta/2) & -\sin(\theta/2) \\
    \sin(\theta/2) &  \cos(\theta/2)
  \end{pmatrix}\begin{pmatrix}
    1 &  0 \\
    0 & -1
  \end{pmatrix}\begin{pmatrix}
     \cos(\theta/2) & \sin(\theta/2) \\
    -\sin(\theta/2) & \cos(\theta/2)
  \end{pmatrix}.
\end{align*}

The $n$ players share the $n$-qubit state $\ket{\varphi_n}$,
one qubit for each player.
When asked the content of the cell at row~$j$, column~$k$,
the player measures the observable $Z_{\theta_{jk}}$, where
\[
  \theta_{jk} = \begin{cases}
    0             & \text{if $1\le j,k\le n-1$,} \\
    \pi/(2n)      & \text{if $1\le j\le n-1$, $k=n$,} \\
    -\pi/(2n)     & \text{if $j=n$, $1\le k\le n-1$,} \\
    \pi/2         & \text{if $j=k=n$,}
  \end{cases}
\]
and answers $0$ (resp.\ $1$) if the measured value is $+1$ (resp.\ $-1$).

To prove the players win with probability $(1+\cos(\pi/(2n)))/2$,
we prepare the following lemma.

\begin{lemma}  \label{lemma:state}
  Let $n\ge1$ and $\theta_1,\dots,\theta_n\in\RR$,
  and let $\ket{\varphi_n}$ and $Z_\theta$ as defined above.
  Let $M=Z_{\theta_1}\otimes\dots\otimes Z_{\theta_n}$.
  Then,
  \[
    \langle\varphi_n\rvert M \ket{\varphi_n} = \cos(\theta_1+\dots+\theta_n).
  \]
\end{lemma}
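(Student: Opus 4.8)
The plan is to recognize $\ket{\varphi_n}$ as an equal‑weight superposition of two product states, each a tensor power of an eigenvector of the Pauli matrix $\sigma_y$, and then to exploit the fact that $Z_\theta$ anticommutes with $\sigma_y$.

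\emph{Step 1 (product decomposition of $\ket{\varphi_n}$).} Let $\ket{u_{\pm}}=\tfrac{1}{\sqrt2}(\ket0\pm i\ket1)$ be the two orthonormal eigenvectors of $\sigma_y$. Expanding the tensor power, $\ket{u_{\pm}}^{\otimes n}=2^{-n/2}\sum_{x\in\{0,1\}^n}(\pm i)^{W(x)}\ket x$. Since $i^{w}+(-i)^{w}=2\cos(w\pi/2)$ equals $2,0,-2,0$ for $w$ congruent to $0,1,2,3$ modulo $4$ respectively, adding the two expansions and matching the normalization constant $2^{-(n-1)/2}$ gives
\[
  \ket{\varphi_n}=\frac{1}{\sqrt2}\bigl(\ket{u_{+}}^{\otimes n}+\ket{u_{-}}^{\otimes n}\bigr).
\]
(The two product states are orthogonal, since $\ket{u_+}$ and $\ket{u_-}$ are, which is consistent with $\ket{\varphi_n}$ being a unit vector.)

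\emph{Step 2 (factorization).} Because $M=Z_{\theta_1}\otimes\dots\otimes Z_{\theta_n}$ is a product operator,
\[
  \bra{\varphi_n}M\ket{\varphi_n}=\frac12\sum_{s,t\in\{+,-\}}\ \prod_{j=1}^{n}\bra{u_s}Z_{\theta_j}\ket{u_t},
\]
so it suffices to evaluate the four single‑qubit amplitudes $\bra{u_s}Z_\theta\ket{u_t}$.

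\emph{Step 3 (single‑qubit amplitudes).} Writing $Z_\theta=\cos\theta\,\sigma_z+\sin\theta\,\sigma_x$, one sees that $Z_\theta$ anticommutes with $\sigma_y$, hence maps each $\sigma_y$‑eigenspace into the other; therefore $\bra{u_+}Z_\theta\ket{u_+}=\bra{u_-}Z_\theta\ket{u_-}=0$. A one‑line $2\times2$ computation gives $\bra{u_+}\sigma_z\ket{u_-}=1$ and $\bra{u_+}\sigma_x\ket{u_-}=-i$, so $\bra{u_+}Z_\theta\ket{u_-}=\cos\theta-i\sin\theta=e^{-i\theta}$, and $\bra{u_-}Z_\theta\ket{u_+}=e^{i\theta}$ by taking the complex conjugate (equivalently, by Hermiticity of $Z_\theta$).

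\emph{Step 4 (conclusion).} Only the two cross terms ($s\ne t$) survive in Step 2, so
\[
  \bra{\varphi_n}M\ket{\varphi_n}=\frac12\Bigl(\prod_{j=1}^{n}e^{-i\theta_j}+\prod_{j=1}^{n}e^{i\theta_j}\Bigr)=\frac12\bigl(e^{-i(\theta_1+\dots+\theta_n)}+e^{i(\theta_1+\dots+\theta_n)}\bigr)=\cos(\theta_1+\dots+\theta_n).
\]

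The only step with genuine content is Step~1 --- identifying the right representation of $\ket{\varphi_n}$; once it is in hand, everything afterward is a fixed computation independent of $n$. I expect the fiddliest bookkeeping to be matching the weight‑modulo‑$4$ sign pattern against the normalization $2^{-(n-1)/2}$, but this is routine. (An induction on $n$, with base case $\ket{\varphi_1}=\ket0$ and a split on the first qubit, also works, but it forces one to carry along the companion state supported on Hamming weights $\equiv1,3$ modulo $4$, which is messier than the eigenvector route.)
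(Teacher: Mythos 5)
Your proof is correct --- I checked each step: the identity $\ket{\varphi_n}=\frac{1}{\sqrt2}(\ket{u_+}^{\otimes n}+\ket{u_-}^{\otimes n})$ does follow from $i^w+(-i)^w\in\{2,0,-2,0\}$ with the stated normalization, the diagonal terms vanish by the anticommutation of $Z_\theta=\cos\theta\,\sigma_z+\sin\theta\,\sigma_x$ with $\sigma_y$, and the cross terms give $\frac12(e^{-i\sum_j\theta_j}+e^{i\sum_j\theta_j})$. However, it takes a genuinely different route from the paper. The paper proves the lemma by induction on $n$, exactly along the lines you mention in your closing parenthesis: it introduces the companion state $\ket{\varphi'_n}$ supported on Hamming weights $\equiv1,3\pmod4$, strengthens the statement to four identities ($\bra{\varphi_n}M\ket{\varphi_n}=-\bra{\varphi'_n}M\ket{\varphi'_n}=\cos(\theta_1+\dots+\theta_n)$ and $\bra{\varphi_n}M\ket{\varphi'_n}=\bra{\varphi'_n}M\ket{\varphi_n}=\sin(\theta_1+\dots+\theta_n)$), and uses the recursion $\ket{\varphi_n}=\frac{1}{\sqrt2}(\ket{\varphi_{n-1}}\otimes\ket0-\ket{\varphi'_{n-1}}\otimes\ket1)$ together with the angle-addition formulas. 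What your approach buys is a one-shot, non-inductive argument: once the product decomposition into $\sigma_y$-eigenvector tensor powers is recognized, the $n$-dependence disappears and everything reduces to four $2\times2$ matrix elements; it also makes transparent \emph{why} only the total angle $\theta_1+\dots+\theta_n$ matters (each factor contributes a phase $e^{\mp i\theta_j}$). What the paper's induction buys is that it never requires guessing the closed-form decomposition and stays within real arithmetic, at the cost of carrying the auxiliary state and a fourfold inductive hypothesis. Either proof is complete and suffices for the application.
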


\begin{proof}
  Let
  \[
    \vert\varphi'_n\rangle = \frac{1}{2^{(n-1)/2}}
    \Bigl( \sum_{\substack{x\in\{0,1\}^n \\ W(x)\equiv1\bmod4}}\ket{x}
          -\sum_{\substack{x\in\{0,1\}^n \\ W(x)\equiv3\bmod4}}\ket{x}\Bigr).
  \]
  We actually prove the following stronger statement:
  \begin{align*}
    \langle\varphi_n\rvert M \ket{\varphi_n}
     = -\langle\varphi'_n\rvert M \ket{\varphi'_n}
    &= \cos(\theta_1+\dots+\theta_n), \\
    \langle\varphi_n\rvert M \ket{\varphi'_n}
     = \langle\varphi'_n\rvert M \ket{\varphi_n}
    &= \sin(\theta_1+\dots+\theta_n).
  \end{align*}

  The proof is by induction on $n$.
  The case $n=1$ holds by the definition of $Z_{\theta_1}$.
  If $n>1$, note that
  \begin{align*}
    \ket{\varphi_n} &= \frac{1}{\sqrt2}(\ket{\varphi_{n-1}}\otimes\ket0-\ket{\varphi'_{n-1}}\otimes\ket1), \\
    \ket{\varphi'_n}&= \frac{1}{\sqrt2}(\ket{\varphi'_{n-1}}\otimes\ket0+\ket{\varphi_{n-1}}\otimes\ket1).
  \end{align*}
  Let $N=Z_{\theta_1}\otimes\dots\otimes Z_{\theta_{n-1}}$.
  Then,
  \begin{align*}
    \langle\varphi_n\rvert M \ket{\varphi_n}
    &= \frac12 \bigl(\bra{\varphi_{n-1}} N \ket{\varphi_{n-1}}\bra0 Z_{\theta_n}\ket0
       + \bra{\varphi'_{n-1}} N \ket{\varphi'_{n-1}}\bra1 Z_{\theta_n}\ket1 \\
    & \hphantom{=\frac12\bigl(}
       - \bra{\varphi_{n-1}} N \ket{\varphi'_{n-1}}\bra0 Z_{\theta_n}\ket1
       - \bra{\varphi'_{n-1}} N \ket{\varphi_{n-1}}\bra1 Z_{\theta_n}\ket0\bigr) \\
    &= \cos(\theta_1+\dots+\theta_{n-1})\cos\theta_n-\sin(\theta_1+\dots+\theta_{n-1})\sin\theta_n \\
    &= \cos(\theta_1+\dots+\theta_{n-1}+\theta_n).
  \end{align*}
  The other three equalities are proved similarly.
\end{proof}

It is easy to verify that $\sum_k\theta_{jk}=\pi/(2n)$ for every row~$j$.
Similarly, $\sum_j\theta_{jk}=-\pi/(2n)$ for every $k\ne n$,
and $\sum_j\theta_{jn}=\pi-\pi/(2n)$.
By Lemma~\ref{lemma:state}, the expected
value of the product of the $n$ measurement results is $\cos(\pi/(2n))$, except
that, when the referee chose the column~$n$, the expected
value of the product is $\cos(\pi-\pi/(2n))=-\cos(\pi/(2n))$. This
means that the players win with probability
$(1+\cos\frac{\pi}{2n})/2$ for every query.

\subsection{Optimality of the strategy}

We prove Theorem~\ref{theorem:generalized-tsirelson} and
Corollary~\ref{corollary:magic-square} in this section. We use the following
lemma to prove Theorem~\ref{theorem:generalized-tsirelson}.

\begin{lemma}  \label{lemma}
  Let $\calH$ be a Hilbert space, $\ket\varphi\in\calH$ be a unit vector,
  and $A,B$ be unitary operators on $\calH$.
  (We do not assume that $A$ and $B$ commute.)
  Let $\alpha=\bra\varphi A \ket\varphi$
  and $\beta=\bra\varphi B \ket\varphi$.
  Then
  $\abs[\big]{{\bra\varphi AB \ket\varphi-\alpha\beta}}
   \le \sqrt{1-\abs{\alpha}^2}\sqrt{1-\abs{\beta}^2}$.
\end{lemma}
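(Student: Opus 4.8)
The plan is to reduce the statement to a two-dimensional fact about a single unit vector by projecting onto the plane spanned by $\ket\varphi$ and $A\ket\varphi$, and likewise handling $B$. Write $A\ket\varphi = \alpha\ket\varphi + \sqrt{1-|\alpha|^2}\,\ket{\varphi^\perp}$ for some unit vector $\ket{\varphi^\perp}\perp\ket\varphi$; this is possible precisely because $A$ is unitary, so $\|A\ket\varphi\|=1$ and hence the component orthogonal to $\ket\varphi$ has squared norm $1-|\alpha|^2$. Then
\[
  \bra\varphi AB\ket\varphi - \alpha\beta
  = \bra\varphi A (B\ket\varphi) - \alpha\,\bra\varphi B\ket\varphi
  = \bra{\psi} B\ket\varphi,
\]
where $\ket\psi := A^\dagger\ket\varphi - \bar\alpha\ket\varphi$ is the component of $A^\dagger\ket\varphi$ orthogonal to $\ket\varphi$. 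Here I use that $\bra\varphi A = (A^\dagger\ket\varphi)^\dagger$ and that $\bra\varphi A^\dagger\ket\varphi = \overline{\bra\varphi A\ket\varphi} = \bar\alpha$, together with unitarity of $A$ to get $\|A^\dagger\ket\varphi\|=1$, so $\|\ket\psi\| = \sqrt{1-|\alpha|^2}$.

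Next I would bound $|\bra\psi B\ket\varphi|$. Since $\ket\psi\perp\ket\varphi$, only the component of $B\ket\varphi$ orthogonal to $\ket\varphi$ contributes: writing $B\ket\varphi = \beta\ket\varphi + \ket{\chi}$ with $\ket\chi\perp\ket\varphi$ and $\|\ket\chi\| = \sqrt{1-|\beta|^2}$ (again by unitarity of $B$), we get $\bra\psi B\ket\varphi = \bra\psi\chi\rangle$, and Cauchy--Schwarz gives
\[
  \abs[\big]{\bra\varphi AB\ket\varphi - \alpha\beta}
  = \abs{\langle\psi\mid\chi\rangle}
  \le \|\ket\psi\|\,\|\ket\chi\|
  = \sqrt{1-|\alpha|^2}\,\sqrt{1-|\beta|^2},
\]
which is exactly the claimed inequality.

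There is no real obstacle here — the lemma is essentially a packaging of Cauchy--Schwarz once one observes that $AB$ appearing inside $\bra\varphi\cdot\ket\varphi$ lets one move $A$ to the left as $A^\dagger\ket\varphi$ and that unitarity pins down the norms of $A\ket\varphi$ and $B\ket\varphi$. The only point requiring a little care is the bookkeeping with complex conjugates (the appearance of $\bar\alpha$ versus $\alpha$), and the degenerate cases where $|\alpha|=1$ or $|\beta|=1$, in which one of $\ket\psi,\ket\chi$ is zero and the bound holds trivially with both sides zero. I expect the whole argument to be three or four lines once set up this way; the subsequent use in proving Theorem~\ref{theorem:generalized-tsirelson} — presumably applying it with $A,B$ built from the commuting families $X^{(i)}_j$ so that many of the cross terms telescope — is where the real work lies, but that is outside the scope of this lemma.
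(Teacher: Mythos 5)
Your proof is correct and takes essentially the same route as the paper's: both arguments isolate the components of $A^\dagger\ket\varphi$ and $B\ket\varphi$ orthogonal to $\ket\varphi$ and bound their inner product by Cauchy--Schwarz (the paper phrases the bound on the $A$ side as a Bessel-type inequality for $\ket\xi=A^*\ket\varphi$). The only difference is cosmetic: by keeping the orthogonal component of $B\ket\varphi$ unnormalized you avoid the paper's separate treatment of the case $\abs{\beta}=1$.
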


\begin{proof}
  If $\abs\beta=1$, then $B\ket\varphi=\beta\ket\varphi$
  and the statement is trivial.
  In the rest of the proof, we assume that $\abs\beta<1$.

  Let
  \[
    \ket\psi=\frac{B\ket\varphi-\beta\ket\varphi}{\sqrt{1-\abs{\beta}^2}}.
  \]
  Then $\langle\varphi|\psi\rangle=0$ and $\langle\psi|\psi\rangle=1$.
  It follows that
  $\bra\varphi AB \ket\varphi
   =\bra\varphi A (\beta\ket\varphi+\sqrt{1-\abs{\beta}^2}\,\ket\psi)
   =\alpha\beta+\bra\varphi A \ket\psi \sqrt{1-\abs{\beta}^2}$.
  Let $\ket\xi=A^* \ket\varphi$.
  Since $\langle\varphi|\psi\rangle=0$, we have
  $\abs[\big]{{\langle\xi|\varphi\rangle}}^2+\abs[\big]{{\langle\xi|\psi\rangle}}^2\le1$.
  Note that
  $\langle\xi|\varphi\rangle=\bra\varphi A \ket\varphi=\alpha$.
  It follows that
  $\abs[\big]{{\bra\varphi A \ket\psi}}^2
   = \abs[\big]{{\langle\xi|\psi\rangle}}^2 \le 1-\lvert\alpha\rvert^2$.
  Hence
  $
    \abs[\big]{{\bra\varphi AB \ket\varphi-\alpha\beta}}^2
    =\abs[\big]{{\bra\varphi A \ket\psi}}^2 (1-\abs{\beta}^2)
    \le (1-\abs{\alpha}^2)(1-\abs{\beta}^2)
  $.
\end{proof}

\begin{corollary}  \label{corollary:1}
  Let $\calH$, $\ket\varphi$, $A$, $B$, $\alpha$ and $\beta$ be as defined in Lemma~\ref{lemma}.
  Suppose $\alpha\in\RR$, $\alpha=\cos\theta$, $\Re\beta=\cos\theta'$ with $0\le\theta,\theta'\le\pi$,
  where $\Re$ denotes the real part.
  Then
  $\cos(\theta+\theta')\le\Re\bra\varphi AB \ket\varphi\le\cos(\theta-\theta')$.
\end{corollary}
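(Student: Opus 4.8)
The plan is to deduce Corollary~\ref{corollary:1} directly from Lemma~\ref{lemma} by taking real parts and evaluating the three quantities $\alpha\beta$, $1-\abs{\alpha}^2$, $1-\abs{\beta}^2$ under the stated hypotheses. First I would apply the triangle inequality to the conclusion of Lemma~\ref{lemma} to get
\[
  \abs[\big]{{\Re\bra\varphi AB\ket\varphi-\Re(\alpha\beta)}}
  \le\abs[\big]{{\bra\varphi AB\ket\varphi-\alpha\beta}}
  \le\sqrt{1-\abs{\alpha}^2}\sqrt{1-\abs{\beta}^2}.
\]
Since $\alpha$ is real by assumption, $\Re(\alpha\beta)=\alpha\Re\beta=\cos\theta\cos\theta'$, which is exactly the ``main term'' appearing in the angle-addition formulas.

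Next I would bound the right-hand side in terms of $\theta$ and $\theta'$. From $\alpha=\cos\theta$ with $0\le\theta\le\pi$ we have $\sqrt{1-\abs{\alpha}^2}=\sqrt{1-\cos^2\theta}=\sin\theta$. For $\beta$, which need not be real, only its real part is controlled, but that suffices for an upper bound: $\abs{\beta}^2=(\Re\beta)^2+(\Im\beta)^2\ge\cos^2\theta'$, so $1-\abs{\beta}^2\le1-\cos^2\theta'=\sin^2\theta'$, and hence $\sqrt{1-\abs{\beta}^2}\le\sin\theta'$ because $\sin\theta'\ge0$ on $[0,\pi]$. Combining, $\abs[\big]{{\Re\bra\varphi AB\ket\varphi-\cos\theta\cos\theta'}}\le\sin\theta\sin\theta'$. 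Finally I would unfold the absolute value and use $\cos\theta\cos\theta'\mp\sin\theta\sin\theta'=\cos(\theta\pm\theta')$ to obtain $\cos(\theta+\theta')\le\Re\bra\varphi AB\ket\varphi\le\cos(\theta-\theta')$.

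There is essentially no real obstacle here; the only step requiring a moment's care is that the hypothesis controls only $\Re\beta$ and not $\beta$ itself, so one has to notice that discarding $\Im\beta$ only shrinks $1-\abs{\beta}^2$, which is precisely the direction needed for the bound $\sqrt{1-\abs{\beta}^2}\le\sin\theta'$ to remain valid.
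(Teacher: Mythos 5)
Your proposal is correct and follows essentially the same route as the paper: take real parts in Lemma~\ref{lemma}, use $\Re(\alpha\beta)=\alpha\Re\beta$ since $\alpha$ is real, bound $\sqrt{1-\abs{\beta}^2}\le\sqrt{1-(\Re\beta)^2}=\sin\theta'$, and conclude via the angle-addition formulas with $\sin\theta,\sin\theta'\ge0$. No issues.
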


\begin{proof}
  By Lemma~\ref{lemma},
  \begin{align*}
    \abs[\big]{{\Re\bra\varphi AB \ket\varphi-\alpha\Re(\beta)}}
    &=\abs[\big]{{\Re(\bra\varphi AB \ket\varphi-\alpha\beta)}} \\
    &\le\abs[\big]{{\bra\varphi AB \ket\varphi-\alpha\beta}} \\
    &\le \sqrt{1-\alpha^2}\sqrt{1-\abs{\beta}^2} \\
    &\le \sqrt{1-\alpha^2}\sqrt{1-(\Re\beta)^2},
  \end{align*}
  which implies
  \[
       \alpha\Re(\beta)-\sqrt{1-\alpha^2}\sqrt{1-(\Re\beta)^2}
    \le\Re\bra\varphi AB \ket\varphi
    \le\alpha\Re(\beta)+\sqrt{1-\alpha^2}\sqrt{1-(\Re\beta)^2}.
  \]
  The statement follows from the facts
  that $\alpha=\cos\theta$, $\Re\beta=\cos\theta'$ and $\sin\theta,\sin\theta'\ge0$.
\end{proof}

\begin{corollary}  \label{corollary:2}
  Let $\ket\varphi$ be a unit vector in a Hilbert space $\calH$,
  let $A_1,\dots,A_n$ be Hermitian operators on $\calH$ with $A_i^2=I$,
  and let $\bra\varphi A_i \ket\varphi=\cos\theta_i$
  with $0\le\theta_i\le\pi$.
  If $\theta_1+\dots+\theta_n<\pi$,
  then $\Re\bra\varphi A_1\dotsm A_n \ket\varphi\ge\cos(\theta_1+\dots+\theta_n)>-1$.
\end{corollary}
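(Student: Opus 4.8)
The plan is to induct on $n$, using Corollary~\ref{corollary:1} as the engine. The base case $n=1$ is immediate: $\Re\bra\varphi A_1\ket\varphi=\cos\theta_1>-1$ since $0\le\theta_1<\pi$. For the inductive step, suppose the claim holds for $n-1$ operators. Write $B=A_1\dotsm A_{n-1}$ and $A=A_n$. Since $A$ and $B$ are both unitary (they are Hermitian with square $I$, hence unitary; $B$ is a product of such), I would like to apply Corollary~\ref{corollary:1} with the roles ``$A$'' $\leftarrow A_n$ and ``$B$'' $\leftarrow B=A_1\dotsm A_{n-1}$. This gives $\Re\bra\varphi A_n B\ket\varphi\ge\cos(\theta_n+\theta')$ provided $\bra\varphi A_n\ket\varphi=\cos\theta_n$ (true by hypothesis, with $0\le\theta_n\le\pi$) and $\Re\bra\varphi B\ket\varphi=\cos\theta'$ for some $0\le\theta'\le\pi$. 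But $A_nB=A_nA_1\dotsm A_{n-1}$, not $A_1\dotsm A_n$; to fix the ordering I would instead set ``$A$'' $\leftarrow A_1$ and ``$B$'' $\leftarrow A_2\dotsm A_n$, so that $\bra\varphi AB\ket\varphi=\bra\varphi A_1\dotsm A_n\ket\varphi$ directly.

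So the cleaner induction peels off the \emph{first} operator. Let $B=A_2\dotsm A_n$. By the inductive hypothesis applied to $A_2,\dots,A_n$ (whose angles sum to $\theta_2+\dots+\theta_n<\pi$), we get $\Re\bra\varphi B\ket\varphi\ge\cos(\theta_2+\dots+\theta_n)>-1$. Set $\theta'=\theta_2+\dots+\theta_n$, so $0\le\theta'<\pi$ and, writing $\Re\bra\varphi B\ket\varphi=\cos\theta''$ for the unique $\theta''\in[0,\pi]$ with $\cos\theta''=\Re\bra\varphi B\ket\varphi$, monotonicity of cosine on $[0,\pi]$ together with $\cos\theta''\ge\cos\theta'$ gives $\theta''\le\theta'$. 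Now apply Corollary~\ref{corollary:1} with $A\leftarrow A_1$, $B\leftarrow B$, $\alpha=\cos\theta_1$, $\Re\beta=\cos\theta''$: it yields $\Re\bra\varphi A_1 B\ket\varphi\ge\cos(\theta_1+\theta'')$. Since $\theta_1+\theta''\le\theta_1+\theta'=\theta_1+\dots+\theta_n<\pi$, and cosine is decreasing on $[0,\pi]$, we conclude $\Re\bra\varphi A_1\dotsm A_n\ket\varphi\ge\cos(\theta_1+\theta'')\ge\cos(\theta_1+\dots+\theta_n)$, which is $>\cos\pi=-1$. This completes the induction.

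The one point that needs care — and the main (mild) obstacle — is the bookkeeping with the angle $\theta''$ associated to $B$: Corollary~\ref{corollary:1} requires the real part of the ``$B$''-expectation to be written as $\cos\theta'$ with $\theta'\in[0,\pi]$, and it delivers a lower bound $\cos(\theta+\theta')$ that is only useful when $\theta+\theta'$ stays in $[0,\pi]$ so that we may further lower-bound by enlarging the angle. The chain of inequalities $\theta'' \le \theta_2+\dots+\theta_n$ and then $\theta_1+\theta''\le\theta_1+\dots+\theta_n<\pi$ is exactly what keeps everything inside the range where $\cos$ is monotone, so all the substitutions are legitimate. No other difficulty arises; the unitarity of the partial products $A_2\dotsm A_n$ is automatic, and the hypotheses of Corollary~\ref{corollary:1} are met at each step.
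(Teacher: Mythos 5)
Your proof is correct and follows essentially the same route as the paper, whose entire proof is ``Use Corollary~\ref{corollary:1} repeatedly''; your induction peeling off $A_1$ and the angle bookkeeping ($\theta''\le\theta_2+\dots+\theta_n$, keeping all sums below $\pi$ where cosine is monotone) is just the careful spelling-out of that repeated application.
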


\begin{proof}
  Use Corollary~\ref{corollary:1} repeatedly.
\end{proof}

\begin{proof}[Proof of Theorem~\ref{theorem:generalized-tsirelson}]
For notational convenience, the index $j$ in $X^{(i)}_j$ is interpreted in modulo $n$.
Let $\ket\varphi$ be the quantum state shared by the $n$ parties,
and $Z=\sum_{j=1}^n M_j+\sum_{k=1}^{n-1} N_k-N_n$.
We prove $\avg{Z}=\bra\varphi Z\ket\varphi\le2n\cos(\pi/(2n))$.

Let $P=\prod_{j=1}^n M_jN_{n+1-j}=M_1N_nM_2N_{n-1}\dotsm M_nN_1$.
We prove that $P=I$.
For $i=0,\dots,n-1$, let
\[
  P_i=\prod_{j=1}^n X^{(i)}_jX^{(i)}_{n+1-j-i}=X^{(i)}_1X^{(i)}_{n-i}X^{(i)}_2X^{(i)}_{n-1-i}\dotsm X^{(i)}_nX^{(i)}_{1-i}.
\]
Note that $P=P_0P_1\dotsm P_{n-1}$,
since $X^{(i)}_j$ and $X^{(i')}_{j'}$ commute whenever $i\ne i'$ by assumption.

Fix any $i$ with $0\le i\le n-1$.
We define $Y_{2j-1}=X^{(i)}_j$ and $Y_{2j}=X^{(i)}_{n+1-j-i}$.
Note that $P_i=Y_1Y_2\dotsm Y_{2n}$.
By calculation, it can be verified that $Y_{n-i+1-k}=Y_{n-i+k}$ for $1\le k\le n-i$.
Since $Y_j^2=I$ for all $1\le j\le 2n$,
this implies that $Y_1Y_2\dotsm Y_{2(n-i)}=Y_1(Y_2\dotsm(Y_{n-i}Y_{n-i+1})\dotsm Y_{2(n-i)-1})Y_{2(n-i)}=I$.
Similarly, the equation $Y_{2n-i+1-k}=Y_{2n-i+k}$ for $1\le k\le i$
implies that $Y_{2(n-i)+1}\dotsm Y_{2n}=I$.
Therefore $P_i=(Y_1\dotsm Y_{2(n-i)})(Y_{2(n-i)+1}\dotsm Y_{2n})=I$.
This concludes that $P=P_0\dotsm P_{n-1}=I$.

Let $\bra\varphi M_j\ket\varphi=\cos\theta_j$ for $1\le j\le n$,
$\bra\varphi N_k\ket\varphi=\cos\theta'_k$ for $1\le k\le n-1$,
and $-\bra\varphi N_n\ket\varphi=\cos\theta'_n$
with $0\le\theta_j,\theta'_k\le\pi$.
Since $M_1(-N_n)M_2N_{n-1}M_3N_{n-2}\dotsm M_nN_1=-P=-I$,
it holds that $\sum_{j=1}^n\theta_j+\sum_{k=1}^n\theta'_k\ge\pi$
by Corollary~\ref{corollary:2}.
As is shown in the following Lemma~\ref{lemma:max},
$\bra\varphi Z \ket\varphi\le2n\cos(\pi/(2n))$ subject to this constraint,
which establishes Theorem~\ref{theorem:generalized-tsirelson}.
\end{proof}

\begin{lemma}  \label{lemma:max}
  Let $n\ge1$, $0\le\theta_1,\dots,\theta_n\le\pi$ and $\theta_1+\dots+\theta_n\ge\pi$.
  Then $\cos\theta_1+\dots+\cos\theta_n\le n\cos(\pi/n)$.
\end{lemma}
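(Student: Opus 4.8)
The plan is to view this as maximizing $f(\theta)=\sum_{i=1}^{n}\cos\theta_i$ over the compact set $K=\{\theta\in[0,\pi]^n:\sum_i\theta_i\ge\pi\}$, to determine the shape of any maximizer, and finally to reduce everything to a single scalar inequality. First I would observe that a maximizer $\theta^{\ast}$ of $f$ on $K$ must satisfy $\sum_i\theta_i^{\ast}=\pi$: otherwise some coordinate lies in $(0,\pi]$, and since $\cos$ is strictly decreasing on $[0,\pi]$, decreasing that coordinate slightly keeps $\theta^{\ast}$ in $K$ while strictly increasing $f$. (The case $n=1$ is immediate since then $\theta_1=\pi$ and both sides equal $-1$, so assume $n\ge 2$ henceforth.)

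The key tool is the identity $\cos x+\cos y=2\cos\frac{x+y}{2}\cos\frac{x-y}{2}$. For $x,y\ge 0$ with $x+y<\pi$ one has $\cos\frac{x+y}{2}>0$ and $\cos\frac{x-y}{2}\le 1$ with equality only when $x=y$, hence $\cos x+\cos y<2\cos\frac{x+y}{2}$ whenever $x\ne y$. Therefore, if the maximizer $\theta^{\ast}$ had two coordinates with $\theta_i^{\ast}\ne\theta_j^{\ast}$ and $\theta_i^{\ast}+\theta_j^{\ast}<\pi$, then replacing both by their average would leave the sum (hence membership in $K$) unchanged and strictly increase $f$, a contradiction. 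Since $\sum_i\theta_i^{\ast}=\pi$ and the coordinates are nonnegative, every pair obeys $\theta_i^{\ast}+\theta_j^{\ast}\le\pi$, so we are left with exactly two possibilities: either all coordinates of $\theta^{\ast}$ are equal, so $\theta^{\ast}=(\pi/n,\dots,\pi/n)$ and $f(\theta^{\ast})=n\cos(\pi/n)$; or some pair has $\theta_i^{\ast}\ne\theta_j^{\ast}$ and $\theta_i^{\ast}+\theta_j^{\ast}=\pi$, which forces the remaining $n-2$ coordinates (nonnegative, summing to $0$) to vanish, giving $f(\theta^{\ast})=\cos\theta_i^{\ast}+\cos(\pi-\theta_i^{\ast})+(n-2)=n-2$. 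In either case $\max_K f\in\{n\cos(\pi/n),\,n-2\}$, so it suffices to prove $n-2\le n\cos(\pi/n)$.

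This last inequality is $n\bigl(1-\cos(\pi/n)\bigr)\le 2$; writing $t=\pi/n\in(0,\pi/2]$ (using $n\ge 2$) it is equivalent to $1-\cos t\le\frac{2}{\pi}t$, which I would obtain from the observation that $h(t)=\frac{2}{\pi}t-1+\cos t$ satisfies $h(0)=h(\pi/2)=0$ and $h''(t)=-\cos t\le 0$ on $[0,\pi/2]$, so $h$ is concave and hence nonnegative there. The main obstacle is precisely this final step: a one-shot tangent-line majorization $\cos\theta\le\cos(\pi/n)-\sin(\pi/n)(\theta-\pi/n)$ fails badly near $\theta=\pi$, so the near-$(\pi,0,\dots,0)$ configurations genuinely have to be isolated rather than absorbed into a linear estimate; and the inequality $1-\cos t\le\frac{2}{\pi}t$ is in fact false for $t\in(\pi/2,\pi)$, so the argument relies on the fact that the only admissible values $\pi/n$ with $n\ge 2$ lie in $[0,\pi/2]$, exactly the range where the estimate holds.
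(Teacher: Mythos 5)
Your proof is correct, and it takes a genuinely different route from the paper's. The paper fixes $\theta_1+\dots+\theta_n=\pi$ and then argues in two cases by Jensen's inequality on the half-interval: if all $\theta_i\le\pi/2$, concavity of $\cos$ on $[0,\pi/2]$ gives the bound at once; if some $\theta_1>\pi/2$, the remaining angles all lie in $[0,\pi/2]$, so concavity bounds their contribution by $(n-1)\cos\bigl((\pi-\theta_1)/(n-1)\bigr)$, and a monotonicity argument in $\theta_1$ finishes. You instead run a compactness-plus-smoothing argument: you show a maximizer must have $\sum_i\theta_i=\pi$, use the product identity $\cos x+\cos y=2\cos\frac{x+y}{2}\cos\frac{x-y}{2}$ to rule out unequal pairs with $\theta_i+\theta_j<\pi$, and thereby pin the maximum down to the two candidate values $n\cos(\pi/n)$ and $n-2$, reducing the lemma to the scalar inequality $n\bigl(1-\cos(\pi/n)\bigr)\le 2$, which you verify correctly via concavity of $h(t)=\frac{2}{\pi}t-1+\cos t$ on $[0,\pi/2]$ (your restriction to $t=\pi/n\le\pi/2$, i.e.\ $n\ge2$, is exactly the needed range, and $n=1$ is handled separately). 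What your approach buys is an explicit description of the extremal configurations (all angles equal, or a degenerate $\{0,\pi\}$-type pair), which makes the tightness at $\theta_i=\pi/n$ transparent; the cost is the extra bookkeeping of existence of a maximizer and the pairwise exchange step. The paper's argument is shorter and avoids any appeal to compactness, but gives less structural information. Both are elementary and complete.
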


\begin{proof}
  Since the function $\cos\theta$ is decreasing in the range $0\le\theta\le\pi$,
  we may assume that $\theta_1+\dots+\theta_n=\pi$.
  The statement is trivial for $n\le2$.
  We assume $n\ge3$ for the rest of the proof.

  First consider the case where $0\le\theta_1,\dots,\theta_n\le\pi/2$.
  In this case, since the function $\cos\theta$ is concave in the range $0\le\theta\le\pi/2$,
  it follows that $\cos\theta_1+\dots+\cos\theta_n\le n\cos(\pi/n)$.

  Next consider the case where for some $i$, $\theta_i>\pi/2$.
  Without loss of generality, we assume that $\theta_1>\pi/2$.
  Then, again from the concavity of the function $\cos\theta$ in the range $0\le\theta\le\pi/2$,
  it follows that $\cos\theta_2+\dots+\cos\theta_n\le(n-1)\cos((\pi-\theta_1)/(n-1))$.
  Since $\cos\theta_1+(n-1)\cos((\pi-\theta_1)/(n-1))$ is decreasing in the range $\pi/2\le\theta_1\le\pi$,
  \begin{align*}
    \cos\theta_1+\dots+\cos\theta_n &\le \cos\theta_1+(n-1)\cos\frac{\pi-\theta_1}{n-1} \\
    &<\cos\frac{\pi}{2}+(n-1)\cos\frac{\pi}{2(n-1)}<n\cos\frac{\pi}{n}.  \qedhere
  \end{align*}
\end{proof}

To prove Corollary~\ref{corollary:magic-square},
we consider the $n$-player Magic Square game with the assignment $L$
defined as $L=(L_{jk})$ with $L_{jk}\equiv k-j \bmod n$.
We refer to this Latin square as the cyclic Latin square of order $n$,
and this game as the $n$-player Magic Square game with the cyclic assignment.

\begin{proof}[Proof of Corollary~\ref{corollary:magic-square}]
Note that the inequality~(\ref{eq:inequality})
is equivalent to the claim that $\wQ(\MS_n(L))\le(1+\cos\frac{\pi}{2n})/2$
for the cyclic Latin square $L$.
Therefore, Corollary~\ref{corollary:magic-square} follows from
Theorems~\ref{theorem:generalized-tsirelson}, \ref{theorem:ms-assignemt} and \ref{theorem:strategy}.
\end{proof}

We note that Theorem~\ref{theorem:generalized-tsirelson} includes
the following inequality proved by Wehner~\cite{Wehner06} as special cases.

\begin{theorem}[Wehner~\cite{Wehner06}]
  Let $\calH=\calH_1\otimes\calH_2$ be a Hilbert space consisting of two subsystems,
  and let $\ket\varphi\in\calH$ be a state.
  Let $n\ge1$, and let $X_1,\dots,X_n$ be $\pm1$-valued observables on $\calH_1$
  and $Y_1,\dots,Y_n$ be $\pm1$-valued observables on $\calH_2$.
  Then,
  \begin{equation}
    \sum_{j=1}^n \avg{X_jY_j} + \sum_{j=1}^{n-1} \avg{X_{j+1}Y_j} - \avg{X_1Y_n}
    \le 2n\cos\frac{\pi}{2n}.
    \label{eq:wehner}
  \end{equation}
\end{theorem}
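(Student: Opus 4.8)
The plan is to obtain Wehner's inequality as an immediate special case of Theorem~\ref{theorem:generalized-tsirelson}, using only two of the $n$ ``parties'' in a nontrivial way and putting the remaining $n-2$ of them equal to the identity. Set $\calH=\calH_1\otimes\calH_2$ and, reading all column indices modulo $n$, define $\pm1$-valued observables on $\calH$ by
\[
  X^{(0)}_j=\tilde X_{j+1}\otimes I_{\calH_2},\qquad
  X^{(1)}_j=I_{\calH_1}\otimes\tilde Y_{j+1},\qquad
  X^{(i)}_j=I_{\calH}\ \ (2\le i\le n-1),
\]
where $\tilde X_1,\dots,\tilde X_n$ and $\tilde Y_1,\dots,\tilde Y_n$ are Wehner's observables; the cyclic relabeling $\ell\mapsto\ell+1$ is inserted only so that the distinguished minus-sign term of the theorem ends up matched with the correct term of~(\ref{eq:wehner}). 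First I would check that this choice meets the hypotheses of Theorem~\ref{theorem:generalized-tsirelson}: each $X^{(i)}_j$ is Hermitian and squares to $I_{\calH}$, and $X^{(i)}_j$ commutes with $X^{(i')}_{j'}$ whenever $i\ne i'$, since observables acting on different tensor factors commute and $I_{\calH}$ commutes with everything.

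Next I would compute the theorem's operators $M_j$ and $N_k$ for this choice. Because every factor indexed by $i\ge2$ is the identity, we get $M_j=X^{(0)}_jX^{(1)}_j=\tilde X_{j+1}\tilde Y_{j+1}$ and $N_k=X^{(0)}_kX^{(1)}_{k-1}=\tilde X_{k+1}\tilde Y_k$, all indices modulo $n$. Substituting into inequality~(\ref{eq:inequality}) and reindexing: the first sum, after the substitution $\ell=j+1$ (legitimate because $j$ runs over a complete residue system modulo $n$, hence so does $\ell$), becomes $\sum_{\ell=1}^n\avg{\tilde X_\ell\tilde Y_\ell}$; the middle sum is literally $\sum_{k=1}^{n-1}\avg{\tilde X_{k+1}\tilde Y_k}$ since $k+1$ does not wrap around for $1\le k\le n-1$; and the final term is $-\avg{N_n}=-\avg{\tilde X_{n+1}\tilde Y_n}=-\avg{\tilde X_1\tilde Y_n}$. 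Putting these together yields exactly~(\ref{eq:wehner}), with the bound $2n\cos\frac{\pi}{2n}$ carried over unchanged.

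There is no genuinely hard step here; the only point requiring care is the modular book-keeping, specifically choosing the cyclic shift $\ell\mapsto\ell+1$ so that the single negatively signed term $\avg{N_n}$ of Theorem~\ref{theorem:generalized-tsirelson} lines up with $\avg{\tilde X_1\tilde Y_n}$ rather than with $\avg{\tilde X_n\tilde Y_{n-1}}$ --- without such a relabeling the two inequalities would disagree by a sign on two of their terms. The degenerate case $n=1$ (where both sides of~(\ref{eq:wehner}) are $0$) can be handled directly, or absorbed by taking $X^{(0)}_1=\tilde X_1\otimes\tilde Y_1$ in the theorem with a single party.
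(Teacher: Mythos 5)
Your proposal is correct and takes essentially the same route as the paper: both derive inequality~(\ref{eq:wehner}) by specializing Theorem~\ref{theorem:generalized-tsirelson} with only two nontrivial parties acting on the two tensor factors and identity observables for the remaining $n-2$ parties. The only (cosmetic) difference is that the paper assigns $X_j\otimes I$ to party $i=n-1$ and $I\otimes Y_j$ to party $i=0$, so the subscript $k-(n-1)\equiv k+1$ in $N_k$ produces the required alignment of the negatively signed term automatically, without your cyclic relabeling $\ell\mapsto\ell+1$.
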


\begin{proof}
  In the inequality~(\ref{eq:inequality}),
  let $X^{(0)}_j=I\otimes Y_j$, $X^{(n-1)}_j=X_j\otimes I$, and $X^{(i)}_j=I\otimes I$ for $1\le i\le n-2$.
  Then the inequality~(\ref{eq:inequality}) is exactly the same as the inequality~(\ref{eq:wehner}).
\end{proof}

The equality in (\ref{eq:wehner}) is achievable~\cite{Peres93}.
This gives another proof of $\wq(\MS_n(L))\ge(1+\cos\frac{\pi}{2n})/2$
for the cyclic Latin square $L$
(but not of $\wq(\MS_n)\ge(1+\cos\frac{\pi}{2n})/2$).

\begin{remark}
  For some games $G$, an upper bound on $\wq(G)$
  is obtained from an upper bound on the no-signaling value $\wns(G)$ of $G$,
  which can be characterized by linear programming
  and often easier to compute than $\wq(G)$.
  This is not the case
  for Corollary~\ref{corollary:magic-square} since $\wns(\MS_n)=1$.
  This follows from the result by Barrett and Pironio~%
  \cite[Theorem~1]{BarrettPironio05}:
  for any game $G=(\pi,V)$
  where the predicate $V$ does not depend on the individual answers
  from the players but only on the XOR of all the answers,
  there exists a no-signaling strategy with winning probability one.
\end{remark}

\begin{remark}
We say two Latin squares of order $n$ are equivalent if one is obtained from the other
by swapping rows, swapping columns, relabelling the elements, and/or transposing.
For $n\ge4$, Latin squares of order $n$ is not unique up to this symmetry.
For $n=4$, there are two inequivalent Latin squares:
\[
  L=\begin{array}{|c|c|c|c|} \hline
    1 & 2 & 3 & 4 \\\hline
    4 & 1 & 2 & 3 \\\hline
    3 & 4 & 1 & 2 \\\hline
    2 & 3 & 4 & 1 \\\hline
  \end{array}, \qquad
  L'=\begin{array}{|c|c|c|c|} \hline
    1 & 2 & 3 & 4 \\\hline
    2 & 1 & 4 & 3 \\\hline
    3 & 4 & 1 & 2 \\\hline
    4 & 3 & 2 & 1 \\\hline
  \end{array}.
\]

The first Latin square $L$ is cyclic, but the second Latin square $L'$ is not.
The proof of Corollary~\ref{corollary:magic-square}
depends on the actual assignment of cells to the provers
and it is not applicable to $L'$.
It can be verified by exhaustive search
that for $L'$, the product of the matrices $M_1,M_2,M_3,M_4,N_1,N_2,N_3,N_4$
in any order where each of the eight matrices appears exactly once
is not equal to $-I$ for general matrices $A_{jk}$.
\end{remark}

\section{Three-prover proof system based on three-query PCP}  \label{section:3sat}

\subsection{Construction of proof system}
\label{subsection:3sat-construction}

Let $L\in\naPCP_{1,s(n)}(r(n),3)$.
We construct a three-prover one-round interactive proof system for $L$ as follows.
First, the verifier acts like the PCP verifier
except that, instead of reading the $q_1$th, $q_2$th and $q_3$th bits of the proof,
he writes down the three numbers $q_1,q_2,q_3$.
Next, he performs either \emph{consistency test} or \emph{PCP simulation test}
each with probability $1/2$.
In the consistency test,
the verifier chooses $q\in\{q_1,q_2,q_3\}$ each with probability $1/3$,
and sends $q$ to the three provers.
He accepts if and only if the three answers coincide.
In the PCP simulation test, he sends $q_1,q_2,q_3$ to the three different provers randomly.
He interprets the answers from the provers
as the $q_1$th, $q_2$th and $q_3$th bits in the proof,
and accepts or rejects just as the PCP verifier would do.

This interactive proof system clearly achieves perfect completeness
with honest provers answering the asked bit in the proof. In the
rest of this section, we will show that the soundness acceptance
probability of this interactive proof system with any commuting-operator
provers is at most $1-(1/384)(1-s(n))^2\cdot2^{-2r(n)}$.

Our soundness analysis to prove Theorem~\ref{theorem:pcp-3provers}
shows that for any commuting-operator strategy with high acceptance
probability, there exists a cheating proof string for the underlying
PCP system. The construction of the cheating proof string is similar
to the construction of unentangled strategy used in~\cite{KKMTV-0704.2903v2}.

We note that without the consistency test, the entangled provers can sometimes cheat with certainty.
An example is the well-known GHZ-game,
which corresponds to an unsatisfiable boolean formula
$f=(x_1\xor x_3\xor x_5)\wedge
   (\overline{x_1\xor x_4\xor x_6})\wedge
   (\overline{x_2\xor x_3\xor x_6})\wedge
   (\overline{x_2\xor x_4\xor x_5})$,
where $\xor$ denotes the exclusive OR.

\subsection{Impossibility of perfect cheating}

Before proceeding to the proof of Theorem~\ref{theorem:pcp-3provers},
we first give a much simpler proof of the fact
that entangled or even commuting-operator provers
cannot cheat with certainty in the interactive proof system constructed in the previous subsection
if $x\notin L$.
Such impossibility of perfect cheating was originally proved
in a preliminary work by Sun, Yao and Preda~\cite{SYP07}
with a different proof.
This paper gives a simpler proof of this fact.

Assume that there exists a commuting-operator strategy for perfect
cheating.
We prove that such a strategy essentially satisfies
the condition stated in Lemma~\ref{lemma:commute-classical}.
Precisely speaking, we define a ``good'' subspace $\calH'$ of
$\calH$ containing the shared quantum state such that the
restrictions of the POVM operators to $\calH'$ pairwise commute.

Let $\ket\Psi\in\calH$ be the state shared by the three provers,
and $\calM^{(i)}_q=(M^{(i)}_{q,a})_{a\in\{0,1\}}$ be the PVM measured by prover $i$ for question $q$.
Because the strategy by the provers is accepted with certainty,
it must pass the consistency test in particular.
This means that
$\bra\Psi M^{(i)}_{q,0} M^{(i')}_{q,0} \ket\Psi+\bra\Psi M^{(i)}_{q,1} M^{(i')}_{q,1} \ket\Psi=1$ for $i\ne i'$ and all $q\in Q$,
or equivalently,
\begin{equation}
  M^{(1)}_{q,a}\ket\Psi=M^{(2)}_{q,a}\ket\Psi=M^{(3)}_{q,a}\ket\Psi
  \label{eq:replaceable}
\end{equation}
for all $q\in Q$ and $a\in\{0,1\}$.

Let $\calH'$ be the subspace of $\calH$
spanned by vectors obtained from $\ket\Psi$
by applying zero or more of $M^{(i)}_{q,a}$
for any times and in any order.

\begin{claim} \label{claim:replaceable}
  If $\ket\varphi\in\calH'$, then
  $M^{(1)}_{q,a}\ket\varphi=M^{(2)}_{q,a}\ket\varphi=M^{(3)}_{q,a}\ket\varphi$.
\end{claim}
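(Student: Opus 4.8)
The plan is to show that the subspace $\calH'$ is invariant under each operator $M^{(i)}_{q,a}$, and then to propagate the equality~(\ref{eq:replaceable}), which holds on the generating vector $\ket\Psi$, to every vector of $\calH'$ by induction on the length of the generating word. First I would observe that $\calH'$ is by construction the smallest closed subspace containing $\ket\Psi$ and closed under applying any $M^{(i)}_{q,a}$; in particular each $M^{(i)}_{q,a}$ maps $\calH'$ into $\calH'$. It therefore suffices to prove the claimed triple equality on a spanning set, namely on all vectors of the form $\ket\varphi=M^{(i_1)}_{q_1,a_1}M^{(i_2)}_{q_2,a_2}\dotsm M^{(i_\ell)}_{q_\ell,a_\ell}\ket\Psi$, by induction on $\ell$.

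The base case $\ell=0$ is exactly~(\ref{eq:replaceable}) applied to $\ket\Psi$. For the inductive step, write $\ket\varphi=M^{(i_1)}_{q_1,a_1}\ket{\varphi'}$ where $\ket{\varphi'}$ is a generating word of length $\ell-1$. The key point is that we are allowed to choose any two of the three provers when verifying the equality $M^{(i)}_{q,a}\ket\varphi=M^{(i')}_{q,a}\ket\varphi$, and since there are three provers, for any prescribed pair $\{i,i'\}$ we can pick an index $j\notin\{i,i'\}\cup\{i_1\}$... wait, that is not quite enough since three indices cannot avoid a two-element set and a one-element set in general; instead the cleaner route is: given the pair $\{i,i'\}$ we want to compare, choose $j$ to be \emph{either} element of $\{1,2,3\}\setminus\{i,i'\}$ when $i\ne i'$, and note $i_1$ equals at most one of $i,i',j$. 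Using commutativity across distinct provers, $M^{(i)}_{q,a}$ commutes with $M^{(i_1)}_{q_1,a_1}$ whenever $i\ne i_1$, so in that case $M^{(i)}_{q,a}\ket\varphi = M^{(i_1)}_{q_1,a_1}M^{(i)}_{q,a}\ket{\varphi'}$, and the inductive hypothesis applied to $\ket{\varphi'}$ replaces $M^{(i)}_{q,a}\ket{\varphi'}$ by $M^{(i')}_{q,a}\ket{\varphi'}$ as long as we route through some prover index distinct from $i_1$; pushing $M^{(i_1)}_{q_1,a_1}$ back through gives $M^{(i')}_{q,a}\ket\varphi$.

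The main obstacle is precisely the bookkeeping when the ``outer'' index $i_1$ collides with the index we are trying to compare: one cannot directly commute $M^{(i_1)}_{q_1,a_1}$ past $M^{(i_1)}_{q,a}$. The fix is to exploit that equality~(\ref{eq:replaceable}) for $\ket{\varphi'}$ lets us replace any prover's operator by \emph{any other} prover's operator when acting on $\ket{\varphi'}$, so we always have the freedom to rewrite $M^{(i_1)}_{q,a}\ket{\varphi'}$ as $M^{(j)}_{q,a}\ket{\varphi'}$ for a $j\ne i_1$, commute that past $M^{(i_1)}_{q_1,a_1}$, and only then apply the inductive equality again to reach the target prover. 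Carrying this out for all three target values of $a$ and all choices of $q$, and then invoking closedness of $\calH'$ under linear combinations and limits to pass from the spanning words to a general $\ket\varphi\in\calH'$, establishes the claim. I would present the collision case as the one worked-out computation and leave the non-collision case as a one-line remark.
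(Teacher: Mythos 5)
Your overall skeleton---induct on the length of the word of operators applied to $\ket\Psi$, take (\ref{eq:replaceable}) as the base case, and combine cross-prover commutativity with the induction hypothesis in the step---is the same as the paper's. But your inductive step only goes through when both compared indices differ from the last-applied index $i_1$: writing $\ket\varphi=M^{(i_1)}_{q_1,a_1}\ket{\varphi'}$, your chain $M^{(i)}_{q,a}\ket\varphi=M^{(i_1)}_{q_1,a_1}M^{(i)}_{q,a}\ket{\varphi'}=M^{(i_1)}_{q_1,a_1}M^{(i')}_{q,a}\ket{\varphi'}=M^{(i')}_{q,a}\ket\varphi$ is valid precisely when $i,i'\ne i_1$, which covers only one of the three pairs. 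For the pairs containing $i_1$---which are the heart of the claim---your fix does not work as stated. The vector you must control is $M^{(i_1)}_{q,a}\ket\varphi=M^{(i_1)}_{q,a}M^{(i_1)}_{q_1,a_1}\ket{\varphi'}$, in which $M^{(i_1)}_{q,a}$ is \emph{not} adjacent to $\ket{\varphi'}$, so the induction hypothesis cannot be used to ``rewrite $M^{(i_1)}_{q,a}\ket{\varphi'}$ as $M^{(j)}_{q,a}\ket{\varphi'}$'' there; and if instead you start from $M^{(i')}_{q,a}\ket\varphi$ and follow your moves (commute past $M^{(i_1)}_{q_1,a_1}$, switch the prover index by the induction hypothesis, commute back), you only arrive at $M^{(j)}_{q,a}\ket\varphi$ for the third index $j$, i.e.\ you re-derive the equality between the two provers different from $i_1$ and never reach $M^{(i_1)}_{q,a}\ket\varphi$. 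Identifying $M^{(i_1)}_{q_1,a_1}M^{(i_1)}_{q,a}\ket{\varphi'}$ with $M^{(i_1)}_{q,a}M^{(i_1)}_{q_1,a_1}\ket{\varphi'}$ would require same-prover commutativity on $\calH'$, which is Claim~\ref{claim:commutative}---itself deduced from the present claim---so that route is circular.

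The missing move, and the one the paper uses, is to apply the induction hypothesis to the \emph{generator} $(q_1,a_1)$ rather than to the comparison operator: since $M^{(1)}_{q_1,a_1}\ket{\varphi'}=M^{(2)}_{q_1,a_1}\ket{\varphi'}=M^{(3)}_{q_1,a_1}\ket{\varphi'}$, you may re-express $\ket\varphi$ itself as $M^{(j)}_{q_1,a_1}\ket{\varphi'}$ with $j$ chosen outside the pair $\{i,i'\}$ being compared. Then $M^{(i)}_{q,a}\ket\varphi=M^{(i)}_{q,a}M^{(j)}_{q_1,a_1}\ket{\varphi'}=M^{(j)}_{q_1,a_1}M^{(i)}_{q,a}\ket{\varphi'}=M^{(j)}_{q_1,a_1}M^{(i')}_{q,a}\ket{\varphi'}=M^{(i')}_{q,a}M^{(j)}_{q_1,a_1}\ket{\varphi'}=M^{(i')}_{q,a}\ket\varphi$, uniformly in $i,i'$ and with no case split; this re-expression of the state via the third prover is exactly the point where three provers are needed. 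Your closing remarks (invariance of $\calH'$ under the operators and extension from the spanning words to all of $\calH'$ by linearity and continuity) are fine.
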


\begin{proof}
  The proof is by induction on the number $k$ of operators
  applied to $\ket\Psi$ to obtain $\ket\varphi$.

  The case of $k=0$ is by assumption.
  If $k>0$, then $\ket\varphi=M\ket\xi$
  with $M\in\{M^{(1)}_{q',a'},M^{(2)}_{q',a'},M^{(3)}_{q',a'}\}$ for some $q'$ and $a'$, and
  $\ket\xi$ is obtained by applying $M^{(i)}_{q,a}$
  for $k-1$ times to $\ket\Psi$.
  By the induction hypothesis, $\ket\varphi=M^{(1)}_{q',a'}\ket\xi=M^{(2)}_{q',a'}\ket\xi=M^{(3)}_{q',a'}\ket\xi$.
  Therefore, $M^{(1)}_{q,a}\ket\varphi=M^{(2)}_{q,a}\ket\varphi$ since
  $M^{(1)}_{q,a}\ket\varphi=M^{(1)}_{q,a}M^{(3)}_{q',a'}\ket\xi=M^{(3)}_{q',a'}M^{(1)}_{q,a}\ket\xi=M^{(3)}_{q',a'}M^{(2)}_{q,a}\ket\xi
   =M^{(2)}_{q,a}M^{(3)}_{q',a'}\ket\xi=M^{(2)}_{q,a}\ket\varphi$, here we use the fact that $M^{(i)}_{q,a}$ and $M^{(i')}_{q',a'}$ commute whenever $i\ne i'$.
  The equation $M^{(2)}_{q,a}\ket\varphi=M^{(3)}_{q,a}\ket\varphi$ is proved similarly.
\end{proof}

\begin{claim} \label{claim:commutative}
  The $6n$ projectors $M^{(i)}_{q,a}$ pairwise commute on $\calH'$.
\end{claim}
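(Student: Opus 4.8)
The plan is to show that for $i \ne i'$ the operators $M^{(i)}_{q,a}$ and $M^{(i')}_{q',a'}$ commute on $\calH'$ already by the defining commutativity of the commuting-operator model, so the only work is to handle the case $i = i'$, i.e.\ two PVM elements of the \emph{same} prover. First I would record the easy half: whenever $i \ne i'$, commutativity holds on all of $\calH$ by definition of a commuting-operator strategy, hence in particular on $\calH'$. So fix a prover index $i$ and two questions $q, q'$ and outcomes $a, a'$, and consider $\ket\varphi \in \calH'$; the goal is $M^{(i)}_{q,a} M^{(i)}_{q',a'} \ket\varphi = M^{(i)}_{q',a'} M^{(i)}_{q,a}\ket\varphi$.

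The key idea is to use Claim~\ref{claim:replaceable} to replace one of the same-prover operators by an operator of a \emph{different} prover, and then invoke cross-prover commutativity. Concretely, pick any index $i' \ne i$ (possible since $m = 3 \ge 2$). Since $\ket\varphi \in \calH'$, Claim~\ref{claim:replaceable} gives $M^{(i)}_{q',a'}\ket\varphi = M^{(i')}_{q',a'}\ket\varphi$. Moreover $\calH'$ is invariant under every $M^{(j)}_{r,b}$ by its very definition, so $M^{(i)}_{q,a}\ket\varphi \in \calH'$, and applying Claim~\ref{claim:replaceable} again at that vector gives $M^{(i)}_{q',a'} M^{(i)}_{q,a}\ket\varphi = M^{(i')}_{q',a'} M^{(i)}_{q,a}\ket\varphi$. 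Now chain these together:
\[
  M^{(i)}_{q,a} M^{(i)}_{q',a'}\ket\varphi
  = M^{(i)}_{q,a} M^{(i')}_{q',a'}\ket\varphi
  = M^{(i')}_{q',a'} M^{(i)}_{q,a}\ket\varphi
  = M^{(i)}_{q',a'} M^{(i)}_{q,a}\ket\varphi,
\]
where the middle equality is cross-prover commutativity ($i \ne i'$) and the last equality is the second application of Claim~\ref{claim:replaceable}. This proves the two operators commute on $\calH'$, and since $i, i', q, q', a, a'$ were arbitrary this covers all $6n$ projectors (the count being $3$ provers $\times$ $|Q|=n$ questions $\times$ $2$ outcomes, though only the pairs that are not already known to commute need the argument).

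The step I expect to be the only real subtlety is making sure the invariance $M^{(i)}_{q,a}\calH' \subseteq \calH'$ is genuinely available — but this is immediate from the definition of $\calH'$ as the span of all vectors obtained from $\ket\Psi$ by applying finitely many operators $M^{(j)}_{r,b}$, so there is no obstacle there. One should also note that $\ket\Psi \in \calH'$ (the $k=0$ case), so $\calH'$ indeed contains the shared state, which is what is needed to eventually apply Lemma~\ref{lemma:commute-classical} on the restricted space. No further estimates are required: the claim is purely algebraic.
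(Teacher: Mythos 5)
Your proof is correct and takes essentially the same route as the paper: use Claim~\ref{claim:replaceable} to swap a same-prover projector for one belonging to a different prover, commute across provers, and swap back, the cross-prover pairs being trivial. The only cosmetic difference is that you apply Claim~\ref{claim:replaceable} a second time at the vector $M^{(i)}_{q,a}\ket\varphi$ (justified by the invariance of $\calH'$, which the paper also records), whereas the paper keeps every application of Claim~\ref{claim:replaceable} at $\ket\varphi$ itself and instead runs a slightly longer chain through all three provers; both variants are valid.
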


\begin{proof}
  Let $\ket\varphi\in\calH'$.
  By Claim~\ref{claim:replaceable},
  $M^{(1)}_{q,a}M^{(1)}_{q',a'}\ket\varphi
   =M^{(1)}_{q,a}M^{(3)}_{q',a'}\ket\varphi
   =M^{(3)}_{q',a'}M^{(1)}_{q,a}\ket\varphi
   =M^{(3)}_{q',a'}M^{(2)}_{q,a}\ket\varphi
   =M^{(2)}_{q,a}M^{(3)}_{q',a'}\ket\varphi
   =M^{(2)}_{q,a}M^{(1)}_{q',a'}\ket\varphi
   =M^{(1)}_{q',a'}M^{(2)}_{q,a}\ket\varphi
   =M^{(1)}_{q',a'}M^{(1)}_{q,a}\ket\varphi$.
  The equations $M^{(2)}_{q,a}M^{(2)}_{q',a'}\ket\varphi=M^{(2)}_{q',a'}M^{(2)}_{q,a}\ket\varphi$
  and $M^{(3)}_{q,a}M^{(3)}_{q',a'}\ket\varphi=M^{(3)}_{q',a'}M^{(3)}_{q,a}\ket\varphi$
  are proved similarly.
\end{proof}

Note that $\ket\Psi\in\calH'$
and that $\calH'$ is invariant under each $M^{(i)}_{q,a}$.
This means that we could use $\calH'$ instead of $\calH$ in the first place.
By Claim~\ref{claim:commutative}, these $6n$ operators are
pairwise commuting Hermitian operators when restricted to $\calH'$.
By Lemma~\ref{lemma:commute-classical}, there exists a classical strategy
achieving the same acceptance probability $1$,
and therefore the original PCP is accepted with certainty.
This means that if $x\notin L$, the commuting-operator provers cannot achieve perfect cheating.

\begin{remark}
  A statement analogous to Claim~\ref{claim:commutative}
  does not hold if there are only two provers.
  For example,
  let $\ket\Psi=(\lvert01\rangle-\lvert10\rangle)/\sqrt2\in\CC^2\otimes\CC^2$.
  Let $M_1,M_2$ be arbitrary Hermitian projectors on $\CC^2$
  such that $M_1$ and $M_2$ do not commute,
  and let $M^{(1)}_{q,1}=M_q\otimes I$, $M^{(2)}_{q,1}=I\otimes(I-M_q)$ for $q=1,2$.
  Then $M^{(1)}_{q,a}\ket\Psi=M^{(2)}_{q,a}\ket\Psi$ for $q\in\{1,2\}$ and $a\in\{0,1\}$
  whereas $M^{(1)}_{1,a}M^{(1)}_{2,a'}\ket\Psi\ne M^{(1)}_{2,a'}M^{(1)}_{1,a}\ket\Psi$.
\end{remark}

\subsection{Proof of Theorem~\ref{theorem:pcp-3provers}}

In the case of imperfect cheating,
the equalities in (\ref{eq:replaceable}) hold only approximately,
and we cannot define a ``good'' subspace $\calH'$ as in the case of perfect cheating.
Instead, we will prove that an approximate version of the equation (\ref{eq:replaceable})
implies that measurements $\calM^{(i)}_q$ are almost commuting on the shared state $\ket\Psi$.

Kempe, Kobayashi, Matsumoto, Toner and Vidick~\cite{KKMTV-0704.2903v2} prove
soundness of their classical three-prover interactive proof system
by comparing the behavior of the first and second provers in an
arbitrary entangled strategy to that in the strategy modified as
follows: instead of measuring the answer to the asked question, the
two provers always measure the answers to all possible questions and
just send back the answer to the asked question. This modification
makes the behavior classical. The key in their proof is that if the
third prover answers consistently with high probability,
the measurements performed by the first and second provers do not
disturb the reduced state shared by them so much (Claim~20
in~\cite{KKMTV-0704.2903v2}), and the modification above does not decrease the
acceptance probability so much.

We will use a similar idea when constructing a proof string for the original PCP system,
but instead of the non-disturbance property, we use the fact that all the POVMs almost commute on $\ket\Psi$.
This modification of the proof technique seems necessary
because taking partial trace is meaningless in the commuting-operator model.

The following lemma is the key to bound the difference between two POVMs applied to states other than $\ket\Psi$.

\begin{lemma}  \label{lemma:filter}
  Let $\rho$ be a density matrix,
  and $\calM=(M_i)_{i=1}^v$ and $\calN=(N_i)_{i=1}^v$ be POVMs.
  Let
  \begin{align*}
    \lambda&=\frac12\sum_{i=1}^v\tr\rho(\sqrt{M_i}-\sqrt{N_i})^2
    =1-\sum_{i=1}^v \tr\rho \frac{\sqrt{M_i}\sqrt{N_i}+\sqrt{N_i}\sqrt{M_i}}{2}, \\
    \Delta&=\sum_{i=1}^v \norm[\big]{{\sqrt{M_i}\rho\sqrt{M_i}-\sqrt{N_i}\rho\sqrt{N_i}}}_{\tr}.
  \end{align*}
  Then $\Delta\le2\sqrt{2\lambda}$.
\end{lemma}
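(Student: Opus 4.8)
The plan is to prove the bound $\Delta \le 2\sqrt{2\lambda}$ by dominating each term $\norm{\sqrt{M_i}\rho\sqrt{M_i}-\sqrt{N_i}\rho\sqrt{N_i}}_{\tr}$ using a telescoping trick together with the fact that the trace norm is submultiplicative against the operator norm and that $\sqrt{M_i}, \sqrt{N_i}$ have operator norm at most $1$. First I would write the difference as
\[
  \sqrt{M_i}\rho\sqrt{M_i}-\sqrt{N_i}\rho\sqrt{N_i}
  = (\sqrt{M_i}-\sqrt{N_i})\rho\sqrt{M_i} + \sqrt{N_i}\rho(\sqrt{M_i}-\sqrt{N_i}),
\]
so that by the triangle inequality and $\norm{XYZ}_{\tr}\le\norm{X}_{\op}\norm{Y}_{\tr}\norm{Z}_{\op}$ (here $\norm{\sqrt{M_i}}_{\op},\norm{\sqrt{N_i}}_{\op}\le1$),
\[
  \norm[\big]{{\sqrt{M_i}\rho\sqrt{M_i}-\sqrt{N_i}\rho\sqrt{N_i}}}_{\tr}
  \le \norm[\big]{{(\sqrt{M_i}-\sqrt{N_i})\rho}}_{\tr} + \norm[\big]{{\rho(\sqrt{M_i}-\sqrt{N_i})}}_{\tr}.
\]

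Next, each of these two trace norms is handled by a Cauchy–Schwarz step: writing $D_i=\sqrt{M_i}-\sqrt{N_i}$ and using $\norm{D_i\rho}_{\tr}=\norm{D_i\sqrt{\rho}\cdot\sqrt{\rho}}_{\tr}\le\norm{D_i\sqrt\rho}_2\,\norm{\sqrt\rho}_2 = \sqrt{\tr(\rho D_i^2)}$, since $\norm{\sqrt\rho}_2 = 1$. Thus each term of $\Delta$ is at most $2\sqrt{\tr(\rho D_i^2)} = 2\sqrt{\tr\rho(\sqrt{M_i}-\sqrt{N_i})^2}$. Summing over $i$ and applying Cauchy–Schwarz on the sum,
\[
  \Delta \le \sum_{i=1}^v 2\sqrt{\tr\rho(\sqrt{M_i}-\sqrt{N_i})^2}
  \le 2\sqrt{v}\,\sqrt{\sum_{i=1}^v\tr\rho(\sqrt{M_i}-\sqrt{N_i})^2}
  = 2\sqrt{v}\sqrt{2\lambda}.
\]
That gives $2\sqrt{2v\lambda}$, which is weaker than claimed by a factor $\sqrt v$; the issue is the naive Cauchy–Schwarz over the sum.

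The main obstacle, then, is removing that $\sqrt v$. The fix is to not bound each $\norm{D_i\rho}_{\tr}$ separately but to treat the vector of "off-diagonal" contributions jointly: consider the block operators $\bigoplus_i \sqrt{M_i}$ and $\bigoplus_i \sqrt{N_i}$, or more directly bound $\sum_i\norm{D_i\rho}_{\tr}$ by $\sqrt{\sum_i \tr\rho D_i^2}\cdot\sqrt{\sum_i\tr(\sqrt{M_i}\,\rho\,?)}$ exploiting that $\sum_i M_i = \sum_i N_i = I$. Concretely, I expect the right estimate is $\sum_i\norm{D_i\rho}_{\tr} = \sum_i\norm{D_i\sqrt\rho\cdot\sqrt\rho}_{\tr}\le\big(\sum_i\norm{D_i\sqrt\rho}_2^2\big)^{1/2}\big(\sum_i\norm{\sqrt{P_i}\sqrt\rho}_2^2\big)^{1/2}$ for a suitable choice of positive operators $P_i$ summing to $I$ that "localize" $\sqrt\rho$ — for instance $P_i$ built from $M_i$ or $N_i$ — so that the second factor is $\sqrt{\tr\rho}=1$. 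This would yield $\sum_i\norm{D_i\rho}_{\tr}\le\sqrt{2\lambda}$, hence $\Delta\le2\sqrt{2\lambda}$ after accounting for the two symmetric terms. Verifying that such $P_i$ exist and that the cross terms behave (this is where the POVM normalization $\sum_i M_i = I$ is essential, and likely why the statement is phrased with POVMs rather than arbitrary operators) is the delicate part; everything else is routine operator-norm bookkeeping.
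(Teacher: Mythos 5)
Your telescoping decomposition is the same one the paper uses, but the step where you replace the outer factors by their operator norms, $\norm{\sqrt{M_i}}_{\op},\norm{\sqrt{N_i}}_{\op}\le1$, is not a recoverable loss: it discards exactly the weights that make the lemma true, and no choice of ``localizing'' operators $P_i$ can reinstate them afterwards. Concretely, after that step your chain reads $\Delta\le2\sum_i\norm{(\sqrt{M_i}-\sqrt{N_i})\rho}_{\tr}$, so you would need $\sum_i\norm{(\sqrt{M_i}-\sqrt{N_i})\rho}_{\tr}\le\sqrt{2\lambda}$; this is false in general. Take $\rho=I/d$, $v=d$, $M_i=\ket{e_i}\bra{e_i}$ and $N_i=\ket{f_i}\bra{f_i}$ for two mutually unbiased bases ($\abs{\langle e_i|f_j\rangle}^2=1/d$): then $\sum_i\norm{(M_i-N_i)\rho}_{\tr}=2\sqrt{1-1/d}$ while $\sqrt{2\lambda}=\sqrt{2(1-1/d)}$, so the needed inequality fails by a factor $\sqrt2$ (the lemma itself holds here, with $\Delta=2\sqrt{1-1/d}\le2\sqrt{2\lambda}$). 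Moreover, your final displayed estimate, inserting $\sqrt{P_i}$ into $(\sum_i\norm{\sqrt{P_i}\sqrt\rho}_2^2)^{1/2}$, is not a valid Cauchy--Schwarz step as written, since $\sqrt{P_i}$ does not occur in the quantity being bounded.

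The repair is to keep the factor $\sqrt{M_i}$ (resp.\ $\sqrt{N_i}$) attached to $\rho$ and split $\rho=\sqrt\rho\cdot\sqrt\rho$ \emph{before} taking norms: with $D_i=\sqrt{M_i}-\sqrt{N_i}$,
\[
  \norm[\big]{{D_i\rho\sqrt{M_i}}}_{\tr}
  =\norm[\big]{{(D_i\sqrt\rho)(\sqrt\rho\sqrt{M_i})}}_{\tr}
  \le\norm[\big]{{D_i\sqrt\rho}}_2\,\norm[\big]{{\sqrt\rho\sqrt{M_i}}}_2,
\]
and now the POVM normalization enters through $\sum_i\norm{\sqrt\rho\sqrt{M_i}}_2^2=\sum_i\tr(\sqrt{M_i}\rho\sqrt{M_i})=\tr\rho=1$ while $\sum_i\norm{D_i\sqrt\rho}_2^2=2\lambda$; Cauchy--Schwarz over $i$ then gives $\sum_i\norm{D_i\rho\sqrt{M_i}}_{\tr}\le\sqrt{2\lambda}$, and the symmetric bound for the $\sqrt{N_i}\rho D_i$ terms yields $\Delta\le2\sqrt{2\lambda}$ with no dimension or $v$ dependence. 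This is exactly the paper's argument in Hilbert--Schmidt language: the paper treats a pure state $\rho=\ket\Psi\bra\Psi$, where the trace norm of each rank-one term is $\norm{(X_i-Y_i)\ket\Psi}\cdot\norm{X_i\ket\Psi}$ (rather than being capped by $1$), packages the weights into unit vectors $\vct{x},\vct{y}$ and a vector $\vct{z}$ with $\norm{\vct{z}}^2=2\lambda$, applies Cauchy--Schwarz to get $\Delta\le(\vct{x}+\vct{y})\cdot\vct{z}\le2\sqrt{2\lambda}$, and handles mixed $\rho$ by convexity.
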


\begin{proof}
  Let $X_i=\sqrt{M_i}$ and $Y_i=\sqrt{N_i}$.
  First we prove the case where $\rho$ is a pure state: $\rho=\ket\Psi\bra\Psi$.
  We define vectors $\vct{x},\vct{y},\vct{z}\in\RR^v$ by
  $x_i=\norm{X_i\ket\Psi}$,
  $y_i=\norm{Y_i\ket\Psi}$, and
  $z_i=\norm{(X_i-Y_i)\ket\Psi}$.
  By using these vectors, $\Delta$ can be bounded as $\Delta\le(\vct{x}+\vct{y})\cdot\vct{z}$, since
  \begin{align*}
    \Delta
    &=\sum_{i=1}^v \Big\| X_i\ket\Psi\bra\Psi X_i-X_i\ket\Psi\bra\Psi Y_i+X_i\ket\Psi\bra\Psi Y_i-Y_i\ket\Psi\bra\Psi Y_i\Big\|_{\tr} \\
    &\le\sum_{i=1}^v \left(\Big\|X_i\ket\Psi\bra\Psi X_i-X_i\ket\Psi\bra\Psi Y_i\Big\|_{\tr}+\Big\|X_i\ket\Psi\bra\Psi Y_i-Y_i\ket\Psi\bra\Psi Y_i\Big\|_{\tr}\right) \\
    &=\sum_{i=1}^v \left(\Big\|X_i\ket\Psi\langle\Psi|\Psi\rangle\bra\Psi(X_i-Y_i)\Big\|_{\tr}+\Big\|(X_i-Y_i)\ket\Psi\langle\Psi|\Psi\rangle\bra\Psi Y_i\Big\|_{\tr}\right) \\
    &\le\sum_{i=1}^v \left(\Big\|X_i\ket\Psi\bra\Psi\big\|+\Big\|Y_i\ket\Psi\bra\Psi\Big\|\right)\Big\|(X_i-Y_i)\ket\Psi\bra\Psi\Big\|_{\tr} \\
    &=\sum_{i=1}^v \Big(\|X_i\ket\Psi\|+\|Y_i\ket\Psi\|\Big)\Big\|(X_i-Y_i)\ket\Psi\Big\| \\
    &=(\vct{x}+\vct{y})\cdot\vct{z}.
  \end{align*}
  Note that $\vct{x}$ is a unit vector since
  \[
    \norm{\vct{x}}^2
    =\sum_{i=1}^v \norm{X_i\ket\Psi}^2
    =\bra\Psi \Bigl(\sum_{i=1}^v M_i\Bigr) \ket\Psi
    =\norm{\ket\Psi}^2=1,
  \]
  and similarly $\norm{\vct{y}}^2=1$.
  Moreover,
  \[
    \norm{\vct{z}}^2=\sum_{i=1}^v \norm{(X_i-Y_i)\ket\Psi}^2=2\lambda.
  \]
  Therefore, $\Delta\le(\vct{x}+\vct{y})\cdot\vct{z}\le\norm{\vct{x}+\vct{y}}\,\norm{\vct{z}}\le2\sqrt{2\lambda}$.

  If $\rho$ is a mixed state, decompose $\rho$ to a convex combination of pure states:
  $\rho=\sum_{j=1}^n p_j\rho_j$.
  Let
  \begin{align*}
    \lambda_j&=\frac12\sum_{i=1}^v\tr\rho_j(X_i-Y_i)^2, \\
    \Delta_j&=\sum_{i=1}^v\norm{X_i\rho_j X_i-Y_i\rho_j Y_i}.
  \end{align*}
  Then,
  \[
    \Delta\le\sum_{j=1}^n p_j\Delta_j\le\sum_{j=1}^n p_j\cdot2\sqrt{2\lambda_j}\le2\sqrt{2\lambda}.  \qedhere
  \]
\end{proof}

We fix an input $x\notin L$.
Let $Q\subseteq\ZZ_{>0}$ be the set of indices of the bits in a proof string
which are queried by the PCP verifier with nonzero probability,
and $N$ be the maximum of the elements of $Q$.
Note that $\abs{Q}\le3\cdot 2^r$.
Let $\pi(q_1,q_2,q_3)$ be the probability
with which the PCP verifier reads the $q_1$th, $q_2$th and $q_3$th bits in the proof at the same time
($\sum_{q_1,q_2,q_3\in Q} \pi(q_1,q_2,q_3)=1$).
Without loss of generality, we assume that $\pi(q_1,q_2,q_3)$ is symmetric
and that $\pi(q_1,q_2,q_3)=0$ if $q_1,q_2,q_3$ are not all distinct.
For $q_1,q_2,q_3\in Q$ and $a_1,a_2,a_3\in\{0,1\}$,
let $V(a_1,a_2,a_3\mid q_1,q_2,q_3)=1$ if the PCP verifier accepts
when he asks the $q_1$th, $q_2$th and $q_3$th bits in the proof
and receives the corresponding answers $a_1$, $a_2$ and $a_3$,
and $V(a_1,a_2,a_3\mid q_1,q_2,q_3)=0$ otherwise.
For $q\in Q$, let $\pi_q=\sum_{q_2,q_3\in Q} \pi(q,q_2,q_3)=\sum_{q_1,q_3\in Q} \pi(q_1,q,q_3)=\sum_{q_1,q_2\in Q} \pi(q_1,q_2,q)$.
For simplicity, we let $\pi_q=0$ for $q\notin Q$.

Consider an arbitrary commuting-operator strategy
for the constructed three-prover one-round interactive proof system,
and let $w$ be its acceptance probability.
By Lemma~\ref{lemma:symmetrize},
we can assume that this strategy is symmetric without loss of generality.
Let $\ket\Psi$ be the quantum state shared by the provers.
For $1\le i\le3$ and $q\in Q$,
let $\calM^{(i)}_q=(M^{(i)}_{q,0},M^{(i)}_{q,1})$ be the PVM
measured by the $i$th prover when asked the $q$th bit in the proof.
For simplicity, we let $M^{(i)}_{q,0}=I$ and $M^{(i)}_{q,1}=0$ for $q\notin Q$.
Then, when asked the $q_1$th, $q_2$th and $q_3$th bits in the proof, the provers answer $a_1,a_2,a_3\in\{0,1\}$
with probability
\[
  \PQ(a_1,a_2,a_3\mid q_1,q_2,q_3)=\left\|M^{(1)}_{q_1,a_1}M^{(2)}_{q_2,a_2}M^{(3)}_{q_3,a_3}\ket\Psi\right\|^2.
\]
Because the strategy is symmetric,
it holds that $\bra\Psi M^{(1)}_{q,a}M^{(2)}_{q,a}\ket\Psi=\bra\Psi M^{(2)}_{q,a}M^{(3)}_{q,a}\ket\Psi=\bra\Psi M^{(3)}_{q,a}M^{(1)}_{q,a}\ket\Psi$.
Let
\begin{align*}
  \lambda_q&=1-\sum_{a\in\{0,1\}}\bra\Psi M^{(1)}_{q,a}M^{(2)}_{q,a}\ket\Psi \\
           &=1-\sum_{a\in\{0,1\}}\bra\Psi M^{(2)}_{q,a}M^{(3)}_{q,a}\ket\Psi \\
           &=1-\sum_{a\in\{0,1\}}\bra\Psi M^{(3)}_{q,a}M^{(1)}_{q,a}\ket\Psi.
\end{align*}
Note that $\lambda_q=0$ for $q\notin Q$.
Now we can write $w$ as $w=(\wcons+\wproof)/2$, where
\begin{align*}
  \wcons
  &=\sum_{q\in Q} \pi_q \Big(\PQ(0,0,0\mid q,q,q)+\PQ(1,1,1\mid q,q,q)\Big) \\
  &=\sum_{q\in Q} \pi_q \left(\bra\Psi M^{(1)}_{q,0}M^{(2)}_{q,0}M^{(3)}_{q,0}\ket\Psi+\bra\Psi M^{(1)}_{q,1}M^{(2)}_{q,1}M^{(3)}_{q,1}\ket\Psi\right) \\
  &=\sum_{q\in Q} \pi_q \frac{\sum_{a\in\{0,1\}}(
      \bra\Psi M^{(1)}_{q,a}M^{(2)}_{q,a}\ket\Psi
     +\bra\Psi M^{(2)}_{q,a}M^{(3)}_{q,a}\ket\Psi
     +\bra\Psi M^{(3)}_{q,a}M^{(1)}_{q,a}\ket\Psi)-1}{2}
   =1-\frac32\sum_{q\in Q} \pi_q \lambda_q, \\
  \wproof
  &=\sum_{q_1,q_2,q_3\in Q} \pi(q_1,q_2,q_3) \sum_{a_1,a_2,a_3\in\{0,1\}} \PQ(a_1,a_2,a_3\mid q_1,q_2,q_3)V(a_1,a_2,a_3\mid q_1,q_2,q_3).
\end{align*}
Since $\pi_q\ge1/(3\cdot2^r)$ for all $q\in Q$, we have
\begin{equation}
  \wcons\le 1-\frac{1}{2\cdot2^r}\sum_{q\in Q} \lambda_q.
  \label{eq:wcons}
\end{equation}

We construct a random proof string $y=y_1\dotsm y_N$ according to the probability distribution
\[
  \Pr(y_1,\dots,y_N)
  =\left\|M^{(i)}_{N,y_N}\dotsm M^{(i)}_{1,y_1}\ket\Psi\right\|^2.
\]
Note that the value of the right-hand side does not depend on the
choice of $i$ because of the symmetry. For distinct $q_1,q_2,q_3\in
Q$ and for $a_1,a_2,a_3\in\{0,1\}$, the joint probability of the events
$y_{q_1}=a_1$, $y_{q_2}=a_2$, $y_{q_3}=a_3$ is given by
\[
  \Pc(a_1,a_2,a_3\mid q_1,q_2,q_3)=\sum_{\substack{y\in\{0,1\}^N \\ y_{q_1}=a_1,y_{q_2}=a_2,y_{q_3}=a_3}}\Pr(y_1,\dots,y_N).
\]
By the soundness condition of the PCP system,
\[
  \sum_{q_1,q_2,q_3\in Q} \pi(q_1,q_2,q_3) \sum_{a_1,a_2,a_3\in\{0,1\}} \Pc(a_1,a_2,a_3\mid q_1,q_2,q_3)V(a_1,a_2,a_3\mid q_1,q_2,q_3) \le s.
\]

We will prove that if $\wcons$ is large,
then the difference between $\PQ$ and $\Pc$ is not large
and therefore $\wproof$ is not much larger than $s$.

For $a_1,a_2,a_3\in\{0,1\}$ and distinct $q_1,q_2,q_3\in Q$, let
\[
  P'(a_1,a_2,a_3\mid q_1,q_2,q_3)
  =\norm{M^{(i)}_{q'_1,a'_1}M^{(i)}_{q'_2,a'_2}M^{(i)}_{q'_3,a'_3}\ket\Psi}^2,
\]
where
$\{(a'_1,q'_1),\allowbreak(a'_2,q'_2),\allowbreak(a'_3,q'_3)\}
 \allowbreak=\allowbreak
 \{(a_1,q_1),\allowbreak(a_2,q_2),\allowbreak(a_3,q_3)\}$
and $q'_1<q'_2<q'_3$.
Again the value of the right-hand side does not depend on the choice of $i$.

\setcounter{claim}{0}
\begin{claim}  \label{claim:1}
  For distinct $q_1,q_2,q_3\in Q$,
  \[
    \sum_{a_1,a_2,a_3\in\{0,1\}}\abs{\Pc(a_1,a_2,a_3\mid q_1,q_2,q_3)-P'(a_1,a_2,a_3\mid q_1,q_2,q_3)}
    \le\sum_{q=1}^{\max\{q_1,q_2,q_3\}} 2\sqrt{2\lambda_q}.
  \]
\end{claim}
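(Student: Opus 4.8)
The plan is to interpolate between the ``sorted'' product state defining $P'$ and the fully ordered measurement sequence defining $\Pc$, changing one measurement at a time and bounding each step with Lemma~\ref{lemma:filter}. Write $\calH' = \calH$, fix distinct $q_1,q_2,q_3 \in Q$ with (after relabelling) $q_1 < q_2 < q_3$, so that $P'(a_1,a_2,a_3 \mid q_1,q_2,q_3) = \norm{M^{(i)}_{q_1,a_1} M^{(i)}_{q_2,a_2} M^{(i)}_{q_3,a_3} \ket\Psi}^2$ for any $i$, while $\Pc$ is the marginal on coordinates $q_1,q_2,q_3$ of the full distribution obtained by measuring $\calM^{(i)}_1, \calM^{(i)}_2, \dots, \calM^{(i)}_N$ in order on $\ket\Psi$. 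The key observation is that summing out all coordinates $q \notin \{q_1,q_2,q_3\}$ from the full distribution amounts to replacing each $M^{(i)}_{q,y_q}$ (summed over $y_q$) by the identity, so $\Pc(a_1,a_2,a_3 \mid q_1,q_2,q_3)$ can be written as $\tr\bigl( \Pi_{q_3,a_3} \, \calE_{q_3-1} \circ \dots \circ \calE_1 (\ket\Psi\bra\Psi) \bigr)$-type expressions, where $\calE_q(\cdot) = \sum_{a} M^{(i)}_{q,a} (\cdot) M^{(i)}_{q,a}$ is the pinching channel for question $q$ and the remaining projectors $\Pi_{q_1,a_1},\Pi_{q_2,a_2}$ are applied at the appropriate positions. (Here I use that $\lambda_q = 0$ and hence $M^{(i)}_{q,0}=I$ for $q \notin Q$, so those channels act trivially.)

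Next I would set up the telescoping sum. For each $q$ with $1 \le q \le q_3$, define a hybrid density matrix $\rho_q$ obtained from $\ket\Psi\bra\Psi$ by applying, for each $q' < q$ with $q' \notin \{q_1,q_2,q_3\}$, the pinching channel $\calE_{q'}$ for \emph{one particular} prover, say prover~$1$, and leaving the coordinates $q_1, q_2, q_3$ alone. Then $\rho_1 = \ket\Psi\bra\Psi$ gives the $P'$ distribution on the three special coordinates, $\rho_{q_3}$ gives the $\Pc$ distribution, and the difference between consecutive hybrids is either zero (when $q \in \{q_1,q_2,q_3\}$ or $q \notin Q$) or else is controlled by how far $\calE_q$ is from acting as the identity on $\rho_{q}$. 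For the latter, apply Lemma~\ref{lemma:filter} with $\calM = \calM^{(i)}_q$ for one prover and $\calN = \calM^{(i')}_q$ for another: the symmetry relation $\bra\Psi M^{(1)}_{q,a} M^{(2)}_{q,a}\ket\Psi = \bra\Psi M^{(2)}_{q,a} M^{(3)}_{q,a}\ket\Psi = \bra\Psi M^{(3)}_{q,a} M^{(1)}_{q,a}\ket\Psi$ together with the definition of $\lambda_q$ shows that the ``$\lambda$'' parameter of Lemma~\ref{lemma:filter} for these two POVMs on the state $\ket\Psi$ is exactly $\lambda_q$, so $\sum_a \norm{M^{(1)}_{q,a}\ket\Psi\bra\Psi M^{(1)}_{q,a} - M^{(2)}_{q,a}\ket\Psi\bra\Psi M^{(2)}_{q,a}}_{\tr} \le 2\sqrt{2\lambda_q}$. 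Because the PVMs of distinct provers commute, inserting one prover's pinching channel and deleting another prover's can be traded against each other, which is what converts a bound on the original state $\ket\Psi$ into a bound valid after the earlier channels have been applied; summing the per-step bounds $2\sqrt{2\lambda_q}$ over $q = 1, \dots, q_3$ gives the claim.

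The main obstacle I anticipate is the bookkeeping in the previous paragraph: Lemma~\ref{lemma:filter} naturally bounds the disturbance caused by a measurement applied directly to $\ket\Psi\bra\Psi$, whereas in the telescoping argument the channel $\calE_q$ is applied to a hybrid state $\rho_q$ that has already been modified. The trick to get around this is precisely the commuting-operator structure: the earlier pinchings (which I have chosen to run on prover~$1$) commute with $\calM^{(2)}_q$, so I can pull $\calM^{(2)}_q$ through to act on $\ket\Psi$ first, apply Lemma~\ref{lemma:filter} there, and then propagate; alternatively one phrases the whole thing as inserting complete resolutions of the identity $\sum_a M^{(i)}_{q,a} = I$ and noting that the choice of $i$ is free at each coordinate, so the hybrids really do interpolate between $P'$ and $\Pc$. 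Once this commuting-through step is made precise, every intermediate quantity is of the exact form appearing in Lemma~\ref{lemma:filter}, and the rest is the triangle inequality for the trace norm plus monotonicity of the trace norm under the trace-nonincreasing maps given by the remaining projectors. Note also that trace-norm differences of subnormalized states dominate the corresponding differences of the scalar probabilities $\Pc(a_1,a_2,a_3\mid\cdots)$ and $P'(a_1,a_2,a_3\mid\cdots)$, which is how the operator-level bound descends to the claimed inequality about probability distributions.
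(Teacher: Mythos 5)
There is a genuine gap in the per-step bound of your telescoping argument. Your consecutive hybrids differ by inserting a single prover's pinching (prover~1's measurement at a non-special coordinate $q$), and you propose to charge this step to $\lambda_q$ via Lemma~\ref{lemma:filter}, i.e.\ to ``how far the pinching is from acting as the identity on $\rho_q$''. But $\lambda_q$ only measures the disagreement between two \emph{different} provers' measurements on the original state $\ket\Psi$; it says nothing about how much one prover's own measurement disturbs the state. For example, if the three provers share a GHZ state and each measures question $q$ in the computational basis, then $\lambda_q=0$, yet the pinching sends $\ket\Psi\bra\Psi$ to a classical mixture far away in trace norm. So ``insert prover~1's pinching versus do nothing'' is not a $2\sqrt{2\lambda_q}$-step. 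The commuting-through repair you sketch does not rescue this: once the earlier operations and the inserted operator both belong to prover~1, there is nothing to commute through, and Lemma~\ref{lemma:filter}, which certifies closeness of the ensembles $M^{(1)}_{q,a}\ket\Psi\bra\Psi M^{(1)}_{q,a}$ and $M^{(2)}_{q,a}\ket\Psi\bra\Psi M^{(2)}_{q,a}$ only on $\ket\Psi$ itself, cannot be transported to the already-processed hybrid states $\rho_q$ (consistency is certified by the test only on $\ket\Psi$, not on post-measurement states). Likewise, ``the choice of $i$ is free at each coordinate'' is precisely the approximate statement to be proven, not an identity. A smaller issue: your terminal hybrid applies all pinchings before the three special measurements, which is not $\Pc$ unless the special measurements are interleaved at their correct positions.

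The paper's hybrid avoids all of this by giving the three provers distinct roles: prover~3's measurements of questions $1,\dots,q_3$ (in order) serve as the reference realizing $\Pc$, and they are replaced one at a time, in the order in which they act on $\ket\Psi$, by prover~1's measurement if $q\in\{q_1,q_2,q_3\}$ and by prover~2's otherwise. Because the incoming operator (prover~1 or~2) commutes with the entire not-yet-replaced prover-3 block, each swap takes place directly on $\ket\Psi$, where Lemma~\ref{lemma:filter} applies with parameter exactly $\lambda_q$; the remaining operator string is common to the two hybrids, so monotonicity of the trace distance under it finishes the step. In the final hybrid the prover-2 measurements commute with prover~1's three special measurements and sum away in the marginal, leaving exactly $P'$. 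This is exactly why three provers are needed (as the paper notes in a footnote); a hybrid built from a single prover's pinchings, as in your sketch, cannot be charged to $\lambda_q$.
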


\begin{proof}
  We may assume without loss of generality that $1\le q_1<q_2<q_3\le N$.
  Let $l=q_3$.
  We prove the claim by hybrid argument.
  To do this, we shall define probability distributions $p_0,\dots,p_l$ on $\{0,1\}^l$
  such that $p_0$ and $p_l$ are related to $\Pc$ and $P'$, respectively.
  For $1\le q\le l$, we define $i_q$ as
  $i_q=1$ if $q\in\{q_1,q_2,q_3\}$ and $i_q=2$ otherwise.
  Note that $M^{(i_q)}_{q,a}$ commutes with $M^{(3)}_{q',a'}$ for all $1\le q'\le l$ and $a'\in\{0,1\}$ in either case.%
  \footnote{This argument is the reason why we need three provers.}
  For $0\le q\le l$ and $y\in\{0,1\}^l$, let
  \[
    p_q(y)=\norm{M^{(i_1)}_{1,y_1}M^{(i_2)}_{2,y_2}\dotsm M^{(i_q)}_{q,y_q}M^{(3)}_{l,y_l}M^{(3)}_{l-1,y_{l-1}}\dotsm M^{(3)}_{q+1,y_{q+1}}\ket\Psi}^2.
  \]
  For $a_1,a_2,a_3\in\{0,1\}$,
  \begin{align*}
    \sum_{\substack{y\in\{0,1\}^l \\ y_{q_1}=a_1,y_{q_2}=a_2,y_{q_3}=a_3}}p_0(y) &= \Pc(a_1,a_2,a_3\mid q_1,q_2,q_3), \\
    \sum_{\substack{y\in\{0,1\}^l \\ y_{q_1}=a_1,y_{q_2}=a_2,y_{q_3}=a_3}}p_l(y)
    &= \norm{M^{(1)}_{q_1,a_1}M^{(1)}_{q_2,a_2}M^{(1)}_{q_3,a_3}\ket\Psi}^2
     = P'(a_1,a_2,a_3\mid q_1,q_2,q_3).
  \end{align*}
  Let $1\le q\le l$.
  By Lemma~\ref{lemma:filter}, we have
  \[
    \sum_{y_q\in\{0,1\}}\norm[\big]{{
     M^{(3)}_{q,y_q}\ket\Psi\bra\Psi M^{(3)}_{q,y_q}
    -M^{(i_q)}_{q,y_q}\ket\Psi\bra\Psi M^{(i_q)}_{q,y_q}
    }}_{\tr}\le2\sqrt{2\lambda_q}.
  \]
  Since the trace distance between two states is an upper bound on the statistical difference
  between the probability distributions resulting from making the same measurement on the two states,
  \[
    \sum_{y\in\{0,1\}^l}\abs[\big]{{
     \norm{M^{(i_1)}_{1,y_1}\dotsm M^{(i_{q-1})}_{q-1,y_{q-1}}M^{(3)}_{l,y_l}\dotsm M^{(3)}_{q+1,y_{q+1}}M^{(3)}_{q,y_q}\ket\Psi}^2
    -\norm{M^{(i_1)}_{1,y_1}\dotsm M^{(i_{q-1})}_{q-1,y_{q-1}}M^{(3)}_{l,y_l}\dotsm M^{(3)}_{q+1,y_{q+1}}M^{(i_q)}_{q,y_q}\ket\Psi}^2
    }}\le2\sqrt{2\lambda_q},
  \]
  or equivalently,
  \[
    \sum_{y\in\{0,1\}^l}\abs{p_{q-1}(y)-p_q(y)}\le2\sqrt{2\lambda_q}.
  \]
  Summing up this inequality for $1\le q\le l$, we obtain
  \[
    \sum_{y\in\{0,1\}^l}\abs{p_0(y)-p_l(y)}\le\sum_{q=1}^l 2\sqrt{2\lambda_q}
  \]
  by the triangle inequality, or equivalently,
  \[
    \sum_{a_1,a_2,a_3\in\{0,1\}}\sum_{\substack{y\in\{0,1\}^l \\ y_{q_1}=a_1,y_{q_2}=a_2,y_{q_3}=a_3}}\abs{p_0(y)-p_l(y)}\le\sum_{q=1}^l 2\sqrt{2\lambda_q}.
  \]
  The claim follows by moving the summation over $y$ inside the
  absolute value by using the triangle inequality.
\end{proof}

\begin{claim}  \label{claim:2}
  For distinct $q_1,q_2,q_3\in Q$,%
  \footnote{Actually, we can omit the term $2\sqrt{2\lambda_{q_i}}$
    from the right-hand side of the inequality,
    where $q_i=\min\{q_1,q_2,q_3\}$.}
  \[
    \sum_{a_1,a_2,a_3\in\{0,1\}}\abs{P'(a_1,a_2,a_3\mid q_1,q_2,q_3)-\PQ(a_1,a_2,a_3\mid q_1,q_2,q_3)}
    \le2\sqrt{2\lambda_{q_1}}+2\sqrt{2\lambda_{q_2}}+2\sqrt{2\lambda_{q_3}}.
  \]
\end{claim}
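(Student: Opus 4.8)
Claim~\ref{claim:2} compares $P'$, where all three measurements are performed by the \emph{same} prover (say prover~$1$) in increasing order of question index, with $\PQ$, where the three questions go to three different provers. Since $\PQ(a_1,a_2,a_3\mid q_1,q_2,q_3)=\norm{M^{(1)}_{q_1,a_1}M^{(2)}_{q_2,a_2}M^{(3)}_{q_3,a_3}\ket\Psi}^2$ and the value is independent of which prover handles which question (by symmetry), the plan is to convert $\PQ$ into $P'$ one prover at a time: replace the measurement done by prover~$3$ on question $q'_3$ with the corresponding measurement by prover~$1$, then replace prover~$2$'s measurement on $q'_2$ by prover~$1$'s, leaving a product of three operators all carrying superscript $(1)$, which (after reordering by commutativity of same-prover projectors applied to $\ket\Psi$ --- or more carefully, just matching definitions) equals $P'$.

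Concretely, I would set up a short hybrid argument with two intermediate distributions. Write (with $q'_1<q'_2<q'_3$ the sorted questions and $a'_i$ the matched answers) the chain
\[
\norm{M^{(1)}_{q'_1,a'_1}M^{(2)}_{q'_2,a'_2}M^{(3)}_{q'_3,a'_3}\ket\Psi}^2
\;\to\;
\norm{M^{(1)}_{q'_1,a'_1}M^{(2)}_{q'_2,a'_2}M^{(1)}_{q'_3,a'_3}\ket\Psi}^2
\;\to\;
\norm{M^{(1)}_{q'_1,a'_1}M^{(1)}_{q'_2,a'_2}M^{(1)}_{q'_3,a'_3}\ket\Psi}^2 .
\]
For each replacement I would invoke Lemma~\ref{lemma:filter}: taking $\rho=\ket\Psi\bra\Psi$, $\calM=\calM^{(3)}_{q'_3}$, $\calN=\calM^{(1)}_{q'_3}$ gives $\lambda=\lambda_{q'_3}$ (using symmetry so that $\bra\Psi M^{(1)}_{q,a}M^{(3)}_{q,a}\ket\Psi$ equals the quantity defining $\lambda_q$), hence $\sum_{a}\norm{M^{(3)}_{q'_3,a}\ket\Psi\bra\Psi M^{(3)}_{q'_3,a}-M^{(1)}_{q'_3,a}\ket\Psi\bra\Psi M^{(1)}_{q'_3,a}}_{\tr}\le2\sqrt{2\lambda_{q'_3}}$. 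Then, exactly as in the proof of Claim~\ref{claim:1}, I would propagate this trace-distance bound through the remaining (fixed) operators $M^{(1)}_{q'_1,a'_1}M^{(2)}_{q'_2,a'_2}$ acting on the two states, using the fact that applying the same channel to two states cannot increase trace distance (equivalently, the statistical difference of the induced outcome distributions is bounded by the trace distance). This yields the $2\sqrt{2\lambda_{q'_3}}$ contribution; the second replacement ($\calM^{(2)}_{q'_2}$ vs $\calM^{(1)}_{q'_2}$) contributes $2\sqrt{2\lambda_{q'_2}}$ in the same way. Summing the two hybrids by the triangle inequality, and noting $\{q'_1,q'_2,q'_3\}=\{q_1,q_2,q_3\}$, gives the bound $2\sqrt{2\lambda_{q_1}}+2\sqrt{2\lambda_{q_2}}+2\sqrt{2\lambda_{q_3}}$ (in fact only two of the three terms are needed, as the footnote observes: the $q'_1$ term never appears since prover~$1$'s innermost measurement is already in place).

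The one genuine subtlety --- and the step I would be most careful about --- is that the commutativity needed to match the reordered same-prover string with $P'$, and to push the filtering lemma through, must be commutativity \emph{of operators of different provers}, which is exactly the hypothesis of the commuting-operator model; same-prover operators on distinct questions need \emph{not} commute, so the argument must never silently reorder two superscript-$(1)$ operators. In the hybrid above this is respected automatically because in each intermediate state the operators still acted on are $M^{(1)}_{q'_1,a'_1}$ and $M^{(2)}_{q'_2,a'_2}$ (or a single prover-$1$ operator), so the ``apply the same channel'' step only ever needs $M^{(1)}_{q,a}$ to commute with $M^{(3)}_{q',a'}$ and $M^{(2)}_{q',a'}$ with $M^{(3)}_{q,a}$, both of which hold. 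I expect the bookkeeping of indices (sorting $q_1,q_2,q_3$, matching answers, and keeping track of which superscript is being swapped) to be the only real source of friction; the analytic content is entirely supplied by Lemma~\ref{lemma:filter} and monotonicity of trace distance under quantum channels.
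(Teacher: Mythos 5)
Your overall plan---two applications of Lemma~\ref{lemma:filter} plus commutativity of different provers' operators, in line with the footnote that only two of the three terms are really needed---is the right one, but the \emph{order} of your two replacements creates a genuine gap at the second step. After your first hybrid the intermediate quantity is $\norm{M^{(1)}_{q'_1,a'_1}M^{(2)}_{q'_2,a'_2}M^{(1)}_{q'_3,a'_3}\ket\Psi}^2$, and you want to compare it with $\norm{M^{(1)}_{q'_1,a'_1}M^{(1)}_{q'_2,a'_2}M^{(1)}_{q'_3,a'_3}\ket\Psi}^2$ ``in the same way,'' i.e.\ by applying Lemma~\ref{lemma:filter} at $\ket\Psi$ to the pair $\calM^{(2)}_{q'_2},\calM^{(1)}_{q'_2}$ and pushing the resulting trace-distance bound through a fixed outer measurement. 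That reduction requires writing \emph{both} expressions in the form $\tr\bigl[E\,M^{(j)}_{q'_2,a'_2}\ket\Psi\bra\Psi M^{(j)}_{q'_2,a'_2}\bigr]$ with the \emph{same} $E$. For the first expression this works because $M^{(2)}_{q'_2,a'_2}$ commutes with $M^{(1)}_{q'_3,a'_3}$; for the second it would require commuting $M^{(1)}_{q'_2,a'_2}$ past $M^{(1)}_{q'_3,a'_3}$---two operators of the \emph{same} prover, exactly the move you correctly declare forbidden, but which your own second hybrid silently needs. Viewing both quantities instead as two different measurements of the common subnormalized state $M^{(1)}_{q'_3,a'_3}\ket\Psi\bra\Psi M^{(1)}_{q'_3,a'_3}$ does not help either: the $\lambda$ in Lemma~\ref{lemma:filter} would then be evaluated on that state rather than on $\ket\Psi$, and the consistency test gives no control over it. Patching by temporarily re-inserting prover~3's operator at $q'_3$ costs additional $2\sqrt{2\lambda_{q'_3}}$ terms and overshoots the claimed bound.

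The fix is to perform the replacements in the opposite order, which is exactly the paper's chain read backwards. Starting from $\PQ$, use commutativity to write $M^{(1)}_{q'_1,a'_1}M^{(2)}_{q'_2,a'_2}M^{(3)}_{q'_3,a'_3}\ket\Psi=M^{(3)}_{q'_3,a'_3}M^{(1)}_{q'_1,a'_1}M^{(2)}_{q'_2,a'_2}\ket\Psi$ and swap $\calM^{(2)}_{q'_2}\to\calM^{(1)}_{q'_2}$, which is now innermost in \emph{both} hybrids, the fixed outer POVM having elements $M^{(1)}_{q'_1,a_1}M^{(3)}_{q'_3,a_3}M^{(1)}_{q'_1,a_1}$ summing to $I$; this costs $2\sqrt{2\lambda_{q'_2}}$. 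Then commute the prover-3 operator back to the innermost position, $M^{(3)}_{q'_3,a'_3}M^{(1)}_{q'_1,a'_1}M^{(1)}_{q'_2,a'_2}\ket\Psi=M^{(1)}_{q'_1,a'_1}M^{(1)}_{q'_2,a'_2}M^{(3)}_{q'_3,a'_3}\ket\Psi$, and swap $\calM^{(3)}_{q'_3}\to\calM^{(1)}_{q'_3}$, again acting directly on $\ket\Psi$ with fixed outer operators $M^{(1)}_{q'_1,a'_1}M^{(1)}_{q'_2,a'_2}$, at cost $2\sqrt{2\lambda_{q'_3}}$, arriving at $P'$. The guiding principle, which the paper's two displayed inequalities implement, is that the operator being replaced must always sit directly on $\ket\Psi$; this can be arranged at every step only if the question that is innermost in $P'$ (the largest index $q'_3$) is the \emph{last} one converted to prover~$1$.
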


\begin{proof}
  If $q_1<q_2<q_3$, sum up the two inequalities
  \begin{align*}
    \sum_{a_1,a_2,a_3\in\{0,1\}}
    \abs[\big]{{ \norm{M^{(1)}_{q_1,a_1}M^{(1)}_{q_2,a_2}M^{(1)}_{q_3,a_3}\ket\Psi}^2
                -\norm{M^{(1)}_{q_1,a_1}M^{(1)}_{q_2,a_2}M^{(3)}_{q_3,a_3}\ket\Psi}^2}}
    &\le2\sqrt{2\lambda_{q_3}}, \\
    \sum_{a_1,a_2,a_3\in\{0,1\}}
    \abs[\big]{{ \norm{M^{(3)}_{q_3,a_3}M^{(1)}_{q_1,a_1}M^{(1)}_{q_2,a_2}\ket\Psi}^2
                -\norm{M^{(3)}_{q_3,a_3}M^{(1)}_{q_1,a_1}M^{(2)}_{q_2,a_2}\ket\Psi}^2}}
    &\le2\sqrt{2\lambda_{q_2}},
  \end{align*}
  each of which follows from Lemma~\ref{lemma:filter},
  and use the triangle inequality.
  The other cases are proved similarly,
  where we use $P'(a_1,a_2,a_3\mid q_1,q_2,q_3)=\norm{M^{(i)}_{q_1,a_1}M^{(i)}_{q_2,a_2}M^{(i)}_{q_3,a_3}\ket\Psi}^2$
  with $i$ such that $q_i$ is the smallest in $q_1,q_2,q_3$.
\end{proof}

By Claims~\ref{claim:1} and \ref{claim:2}, for any distinct
$q_1,q_2,q_3\in Q$,
\begin{align*}
  &\sum_{a_1,a_2,a_3\in\{0,1\}}\abs{\Pc(a_1,a_2,a_3\mid q_1,q_2,q_3)-\PQ(a_1,a_2,a_3\mid q_1,q_2,q_3)} \\
  \le\;&2\sqrt{2\lambda_{q_1}}+2\sqrt{2\lambda_{q_2}}+2\sqrt{2\lambda_{q_3}}+\sum_{q=1}^{\max\{q_1,q_2,q_3\}} 2\sqrt{2\lambda_q} \\
  \le\;&4\sqrt2 \sum_{q\in Q} \sqrt{\lambda_q}.
\end{align*}
Therefore,
\[
  \abs{\wproof-s}
  \le 4\sqrt2\sum_{q\in Q}\sqrt{\lambda_q}
  \le 4\sqrt2\sqrt{\abs{Q}\sum_{q\in Q}\lambda_q}
  \le 4\sqrt2\sqrt{2\cdot2^r\abs{Q}(1-\wcons)}
  \le8\sqrt3\cdot2^r\sqrt{1-\wcons},
\]
where the third inequality follows from the inequality~(\ref{eq:wcons})
and the last inequality follows from the fact $\abs{Q}\le3\cdot2^r$.
This implies%
\footnote{The first inequality is shown as follows.
  Let $c=8\sqrt3\cdot2^r$, $t=1-\wproof$, $u=\sqrt{1-\wcons}$.
  Then $8\sqrt6\cdot2^r\sqrt{1-w}=c\sqrt{t+u^2}$.
  Since $c\ge8\sqrt3$,
  it follows that $c^2(t+u^2)-(t+cu)^2=c^2t-t^2-2ctu\ge t(c^2-t-2c)\ge t(c^2-1-2c)\ge0$,
  or $c\sqrt{t+u^2}\ge t+cu=1-\wproof+8\sqrt3\cdot2^r\sqrt{1-\wcons}$.}
\[
  8\sqrt6\cdot2^r\sqrt{1-w}=8\sqrt3\cdot2^r\sqrt{(1-\wproof)+(1-\wcons)}\ge1-\wproof+8\sqrt3\cdot2^r\sqrt{1-\wcons}\ge1-s,
\]
or equivalently $1-w\ge(1/384)(1-s)^2\cdot2^{-2r}$.

\subsection{The two-prover case}

Finally, the result by Cleve, H\o yer, Toner and Watrous~\cite{CHTW04}
essentially implies that it is efficiently decidable
whether the entangled value of a given two-player one-round binary-answer game is equal to one or not.
This proves Theorem~\ref{theorem:2provers}.

\begin{proof}[Proof of Theorem~\ref{theorem:2provers}]
  \begin{enumerate}[(i)]
  \item
    For a two-player one-round binary-answer game $G$,
    $\wq(G)=1$ if and only if $\wc(G)=1$~\cite[Theorem~5.12]{CHTW04}.
    Therefore, the problem of deciding whether $\wq(G)=1$ or not
    is equivalent to a problem of deciding whether $\wc(G)=1$ or not.
    Since $G$ is two-player and binary-answer,
    testing whether $\wc(G)=1$ or not can be cast as an instance
    of the 2SAT problem,
    and it is solvable in time polynomial in the number of questions.
  \item
    This part follows from (i)
    since any classical two-prover one-round binary interactive proof system
    with entangled provers
    involves at most exponentially many questions.
    \qedhere
  \end{enumerate}
\end{proof}

\section*{Acknowledgement}

The authors are grateful to Keiji Matsumoto and anonymous reviewers
for their helpful comments.

\end{document}